\documentclass[11pt,letterpaper]{article}
\usepackage[letterpaper,margin=1in]{geometry}
\usepackage{fontspec}
\usepackage{amsmath,amssymb,amsthm,mathtools,microtype}
\usepackage{booktabs,array,tabularx,enumitem,needspace}
\usepackage{placeins,float}
\usepackage{cite}
\usepackage{xurl,graphicx,tikz,xcolor}
\usetikzlibrary{arrows.meta,positioning,calc,decorations.pathreplacing}
\definecolor{inkblue}{HTML}{173E63}
\definecolor{lightblue}{HTML}{EDF4FA}
\definecolor{warmfill}{HTML}{FAE8D4}
\definecolor{greenfill}{HTML}{E2F0E7}
\definecolor{warmink}{HTML}{96541C}
\definecolor{greenink}{HTML}{2E644B}
\tikzset{
 figarrow/.style={-{Stealth[length=1.7mm]},draw=inkblue,semithick},
 datacell/.style={draw=inkblue!60,minimum width=.55cm,minimum height=.55cm,inner sep=1pt,font=\normalsize},
 smalltag/.style={font=\normalsize,text=inkblue}
}
\usepackage[colorlinks=true,linkcolor=inkblue,citecolor=inkblue,urlcolor=inkblue]{hyperref}
\usepackage{bookmark}
\hypersetup{pdftitle={Undirected Multiple-Unicast Network Coding and Faster External-Memory Integer Sorting},pdfauthor={Baochun Li, Zongpeng Li, Sirui Liu},pdfsubject={A finite coding separation and memory-dependent integer sorting bounds},pdflang={en-US}}
\setlist{itemsep=3pt,topsep=5pt}
\newtheoremstyle{paperplain}{7pt}{7pt}{\normalfont}{}{\bfseries}{.}{.5em}{}
\theoremstyle{paperplain}
\newtheorem{theorem}{Theorem}[section]
\newtheorem{lemma}[theorem]{Lemma}
\newtheorem{proposition}[theorem]{Proposition}
\newtheorem{corollary}[theorem]{Corollary}
\theoremstyle{definition}

\newcommand{\rank}{\operatorname{rank}}

\newcolumntype{Y}{>{\raggedright\arraybackslash}X}
\title{Undirected Multiple-Unicast Network Coding\\and Faster External-Memory Integer Sorting}

\author{
  \normalsize
  Baochun Li\textsuperscript{1}\qquad
  Zongpeng Li\textsuperscript{2,3}\qquad
  Sirui Liu\textsuperscript{3,2}\\[0.7em]
  {\small
  \begin{tabular}{@{}c@{}}
    \textsuperscript{1}\,University of Toronto, Toronto, Canada\\[0.15em]
    \textsuperscript{2}\,Quan Cheng Laboratory, Shandong, China\\[0.15em]
    \textsuperscript{3}\,Tsinghua University, Beijing, China
  \end{tabular}
  }
}

\date{}
\begin{document}
\pagenumbering{roman}
\maketitle
\thispagestyle{empty}
% The standard article abstract environment sets 10pt text.
% Use the 11pt body size required by the STOC review format.
\begin{center}\textbf{Abstract}\end{center}
\noindent In undirected multiple-unicast networks, where independent messages
are sent between specified source--destination pairs, network coding
allows intermediate nodes to combine messages, whereas fractional
routing splits and forwards individual messages along paths without
mixing them. Proposed in 2004 by Li and Li and independently by Harvey, Kleinberg, and Lehman, the multiple-unicast conjecture---a
central conjecture in network coding with deep connections to
computational complexity---asserts that network coding and fractional
routing have the same achievable rate region. We construct a
counterexample refuting this conjecture and use it to reduce block
transfers in external-memory data rearrangement, thereby accelerating
external-memory integer sorting and answering in the affirmative the
longstanding question of Aggarwal and Vitter (1988) on whether integer
operations can beat the comparison-sorting input/output (I/O) bound.

Specifically, we give a deterministic sorting algorithm: for $N$
integers of $\Theta(\log N)$ bits, a memory holding $M$ integers,
and block transfers of $B$ integers, where $2B\le M\le N$, the
number of block transfers is
\[
  O\!\left(\frac{N}{B}\left[1+
  \frac{\log N}{(\log(M/B))^{1-\varepsilon}(\log M)^\varepsilon}
  \right]\right),
\]
where $\varepsilon=1.5\times10^{-5}$ is an absolute constant.
This bound asymptotically improves on the classical
$\Theta\!\left(\frac{N}{B}\left[1+
\frac{\log(N/M)}{\log(M/B)}\right]\right)$
comparison-sorting bound in certain parameter regimes.
Under the external-memory model specified in this paper, we have
formally verified the algorithm's correctness and I/O complexity
bound in Lean~4.

\clearpage
\pagenumbering{arabic}
\section{Introduction}
\label{sec:introduction}

A multiple-unicast network carries independent messages between specified source--destination pairs. In 2004, Li and Li~\cite{LiLi2004} and, independently, Harvey, Kleinberg, and Lehman~\cite{HarveyKleinbergLehman2004} proposed the multiple-unicast conjecture: in undirected multiple-unicast networks, network coding, which allows nodes to mix messages, and fractional routing, which forwards messages without mixing them, achieve the same message rates. The conjecture is directly connected to several complexity lower bounds in theoretical computer science, including conditional lower bounds for external-memory integer sorting~\cite{FHLS2019} and for integer multiplication in constant-degree Boolean circuits~\cite{AFKL2019}.

In external-memory sorting, the $N$ input integers are stored on disk, with one integer per machine word. Memory holds $M$ words, and each transfer between disk and memory moves a block of $B$ consecutive words. The input/output (I/O) complexity is the total number of block transfers. The optimal comparison-sorting bound is~\cite{AV1988}
\begin{equation}
\Theta\left(\frac{N}{B}\left[1+\frac{\log(N/M)}{\log(M/B)}\right]\right).
\label{eq:intro-comparison}
\end{equation}
In 1988, Aggarwal and Vitter asked whether allowing algorithms to exploit the bit representations of integer keys could improve on the comparison bound in~\eqref{eq:intro-comparison}~\cite{AV1988}. For this open problem, Farhadi, Hajiaghayi, Larsen, and Shi~\cite{FHLS2019} proved that, if the multiple-unicast conjecture were true, then for $B\ge\log N$, external-memory integer sorting could not asymptotically improve on~\eqref{eq:intro-comparison}, even with arbitrary bit operations.

In 2026, Zhang, Li, and Li~\cite{ZLL2026} and Braverman and He~\cite{BH2026} constructed counterexamples in undirected multiple-unicast networks, showing that coding can achieve rates strictly higher than fractional routing and thereby refuting the conjecture. These results establish a communication advantage, but invalidating the premise of a conditional complexity lower bound does not itself produce a faster sorting algorithm.

This paper studies how to turn a communication advantage in undirected networks into fewer block transfers in external memory. Our construction of a counterexample to the multiple-unicast conjecture~\cite{LiLi2004,HarveyKleinbergLehman2004} yields a deterministic external-memory integer-sorting algorithm that reduces block transfers and answers the question of Aggarwal and Vitter~\cite{AV1988} in the affirmative. The counterexample builds on the finite-network construction underlying OpenAI's integer-multiplication algorithm~\cite{OpenAI2026} and subsequent circuit refinements~\cite{E8Base555,E8Word589}, and we use it to accelerate external-memory sorting. Earlier work used the connection between network coding and external-memory algorithms in the opposite direction. Assuming the multiple-unicast conjecture, it established a lower bound for matrix transposition with a fixed access pattern~\cite{Adler2006} and showed that, for $B \ge \log N$, even arbitrary bit operations cannot asymptotically improve on the comparison-sorting I/O bound for integer sorting~\cite{FHLS2019}. Our construction shows that coding can instead be used to achieve an asymptotic improvement over comparison sorting.

The connection between the two problems lies in how data is moved. Integer sorting relies on repeatedly rearranging records on disk according to groups of bits in their keys or array indices, a basic form of which is swapping two equal-width groups of digits in an array index~\cite{OpenAI2026}. When the dataset exceeds internal memory, standard divide-and-conquer methods divide a digit swap into smaller recursive subproblems that process the data streams separately without mixing different records, much as fractional routing forwards messages without combining them. Our counterexample provides a network in which combining messages achieves a higher communication rate than routing. Using the coding scheme in this network, the algorithm instead combines multiple data streams into encoded streams during sequential disk scans, applies the recursive rearrangements to the encoded streams, and then decodes them to recover each original stream in its target order. By exploiting this coding advantage, the algorithm reduces the total amount of data processed by recursive rearrangements compared with rearranging the streams separately.

Each level of this recursion shrinks the subproblems until their rearrangements can be completed by sequential scans with buffers that fit in the $M$-word memory. We combine this fast rearrangement routine with Columnsort and radix sorting to obtain a complete external-memory sorting algorithm for the full parameter range $2B \le M \le N$. More broadly, this approach may also help solve other problems where data movement is a bottleneck.

Specifically, for every $2B\le M\le N$, our algorithm deterministically sorts $N$ integers of $\Theta(\log N)$ bits in
\[
O\left(\frac{N}{B}\left[1+\frac{\log N}{(\log(M/B))^{1-\varepsilon}(\log M)^\varepsilon}\right]\right)
\]
I/Os, where the absolute constant can be taken as $\varepsilon=1.5\times10^{-5}>0$. Under the model conventions in Section~\ref{sec:results}, the algorithm is correct for every input, including repeated values, and uses at most $M$ words of memory and polynomial disk space. In particular, when $M=2B$ and $N=B^2$, comparison sorting requires $\Theta(B\log B)$ I/Os, whereas our algorithm uses $O(B(\log B)^{1-\varepsilon})$.

\paragraph{Formal verification.}
Using Lean~4~\cite{Lean2021}, we have formally verified the sorting algorithm's correctness and I/O complexity bound under the model of Section~\ref{sec:results}.\footnote{Public release repository: \url{https://github.com/rtyu-1/coded-sorting-lean}.} The verification covers $\varepsilon=1.5\times10^{-5}$ and the full parameter range of Theorem~\ref{thm:sorting}. Appendix~\ref{app:formalization} supplies the execution-control convention.

Section~\ref{sec:related} reviews related work. Sections~\ref{sec:results}--\ref{sec:overview} give the model, main results, and proof overview. Section~\ref{sec:circuit-framework} constructs the coding counterexample, and Section~\ref{sec:interchange} applies it to address-field interchange. Sections~\ref{sec:sorting}--\ref{sec:global-sorting} construct the sorting algorithm and complete the main proof. The appendices supply memory simulation, fixed-parameter calculations, interchange for general field widths and prefix counts, and the execution-control convention.

\section{Related Work}
\label{sec:related}

\paragraph{External-memory sorting.}
Aggarwal and Vitter~\cite{AV1988} proved the comparison-sorting I/O lower bound in~\eqref{eq:intro-comparison} and showed that multiway merge sort and distribution sort attain it. They also proved that, even without charging for comparisons, the same lower bound holds when $B\ge\log N$ if each record is moved as an indivisible object. Whether splitting records into bits and encoding them could beat this bound remained unresolved. Subsequent external-memory sorting research mainly improved other aspects~\cite{Vitter2001}: Frigo, Leiserson, Prokop, and Ramachandran~\cite{FLPR1999} introduced cache-oblivious sorting under the assumption $M=\Omega(B^2)$, without requiring knowledge of $M$ or $B$; Arge and Thorup~\cite{ArgeThorup2015} used internal-memory integer-sorting techniques to reduce internal computation time. Neither improves on the I/O bound in~\eqref{eq:intro-comparison}.

Encoding integer representations has yielded I/O improvements for external-memory problems other than sorting. Iacono and P\u{a}tra\c{s}cu~\cite{IaconoPatrascu2012} used hashing to beat comparison-based dictionary-update bounds and proved a matching lower bound in the external-memory cell-probe model, which charges for block accesses but not computation. Jiang and Larsen~\cite{JiangLarsen2019} constructed faster external-memory priority queues supporting DecreaseKey, which lowers the priority of an element already in the queue. These methods for particular operations do not reduce the I/O complexity of general sorting. For general integer sorting, Farhadi, Hajiaghayi, Larsen, and Shi~\cite{FHLS2019} proved an I/O lower bound conditional on the multiple-unicast conjecture, even allowing arbitrary bit operations. Since the conjecture has been refuted, this conditional result does not establish an unconditional sorting lower bound.

Our sorting construction combines permutation-based sorting with coded data rearrangement. Leighton's Columnsort~\cite{Leighton1985} decomposes sorting into alternating column sorts and fixed intercolumn permutations. Chaudhry, Cormen, and Wisniewski~\cite{CCW2001} subsequently gave an external-memory implementation, and Chaudhry and Cormen~\cite{ChaudhryCormen2006} further developed implementations for distributed-memory clusters. This framework reduces all interaction between columns to fixed address rearrangements independent of key values. Cormen, Sundquist, and Wisniewski~\cite{CSW1998} analyzed the I/O cost of a class of permutations on parallel disks that includes swaps of binary address fields. In a different line of work, OpenAI~\cite{OpenAI2026} used a finite linear circuit to reduce the total data volume processed by recursive subproblems, accelerating address-field interchange and the resulting fixed permutations on a multitape Turing machine. We extend this coded interchange mechanism to external-memory block transfers, accelerate the fixed permutations in Columnsort, and use radix sorting to handle arbitrary input sizes, thereby improving on the comparison-sorting I/O bound.

\paragraph{The multiple-unicast conjecture.}
Li and Li~\cite{LiLi2004} and, independently, Harvey, Kleinberg, and Lehman~\cite{HarveyKleinbergLehman2004} conjectured that network coding and fractional routing have the same achievable rates for independent multiple-unicast sessions in undirected networks. One class of positive results follows from cut sufficiency for supply and demand graphs. Every coding-achievable rate satisfies the cut condition; hence, whenever the supply and demand graphs make that condition sufficient for a feasible multicommodity flow, coding equals routing. Examples include the cut-sufficient pairs with series-parallel supply graphs characterized by Chakrabarti, Fleischer, and Weibel~\cite{CFW2012}. When cuts do not characterize routing feasibility, subsequent work has proved restricted cases of the conjecture along three directions: using message independence and entropy inequalities to obtain stronger information-theoretic bounds~\cite{Jain2006,Harvey2006,QureshiThakor2022}; simplifying networks and sessions through graph reductions~\cite{Yin2018} or session dominance and interaction frameworks~\cite{LiuSession2026,LiuInteraction2026}; and developing geometric and metric methods that equate coding and routing costs through transmission costs and metric embeddings~\cite{Xiahou2014,LiuLi2025,LiuBeyondCut2026}.

The conjecture is also directly connected to computational lower bounds. Adler, Harvey, Jain, Kleinberg, and Lehman~\cite{Adler2006} first used it to prove conditional lower bounds in several models whose access patterns are independent of the input. Farhadi, Hajiaghayi, Larsen, and Shi~\cite{FHLS2019} removed this restriction for external-memory integer sorting: assuming the conjecture, any randomized sorting algorithm with error probability at most $1/3$ and key length $w=\Omega(\log N)$ requires at least
\[
\Omega\left(\min\left\{N,\frac{N}{B}\log_{2M/B}\frac{N}{B}\right\}\right)
\]
expected I/Os. When $B\ge\log N$, this lower bound matches the comparison-sorting bound up to constant factors. Afshani, Freksen, Kamma, and Larsen~\cite{AFKL2019} extended this connection to integer multiplication, obtaining conditional lower bounds for constant-degree Boolean circuits.

The finite binary linear circuit in OpenAI's integer-multiplication construction~\cite[Section~3]{OpenAI2026} also implies a counterexample to the multiple-unicast conjecture: applying the routing argument of Section~\ref{sec:network} to its rational matrix labels bounds fractional routing strictly below an achievable coding rate.

Recently, Zhang, Li, and Li~\cite{ZLL2026} and Braverman and He~\cite{BH2026} gave counterexamples to the multiple-unicast conjecture, proving that coding can achieve strictly higher throughput than fractional routing. The former start from examples where coding completes transmission sooner than routing and construct a reversible communication process; the latter use finite geometry and polynomial evaluation to construct a coding scheme. These results establish a communication advantage, but refuting the conjecture does not itself produce a faster sorting algorithm. We further study how this advantage can reduce data movement in computation: our finite linear circuit both gives a strict separation of communication rates in undirected multiple-unicast networks and supports recursive external-memory address permutations, turning a coding advantage into an improved I/O bound for external-memory integer sorting.

\section{Main Results}
\label{sec:results}

We write $\log$ for the base-$2$ logarithm and $\ln$ for the natural logarithm. For a positive integer $t$, let $[t]=\{0,\ldots,t-1\}$.

\subsection{The external-memory model and sorting bound}
Let $N,M,B,w$ be positive integers with $N\ge2$. The input consists of $N$ integers of $w$ bits stored on disk, where $w=\Theta(\log N)$. A storage unit of $w$ bits is called a \emph{machine word}, and each input integer occupies one word. The algorithm must write the integers back to disk in nondecreasing order.

Memory holds $M$ words, each disk block contains $B$ consecutive words, and $2B\le M\le N$. A \emph{block transfer} reads one disk block into memory or writes one block from memory to disk. The algorithm may access any disk block, and its complexity is the total number of block transfers.

Disk data must be read into memory before they can participate in computation; computation in memory is not charged as I/O. The algorithm may perform arbitrary bit operations on integers, including combining bits from different integers into intermediate results. Data buffers and mutable workspace count toward the $M$-word memory bound.

The program is fixed from the size parameters, independently of the input values, and its code is excluded from data space. Appendix~\ref{app:formalization} gives the detailed model convention.

The input, output, and intermediate data on disk together occupy at most $O(N^k)$ words for a constant $k$ independent of $N,M,B$. We call this \emph{polynomial disk space}. Disk addresses therefore occupy $O(\log N)$ bits, or a constant number of machine words.

\begin{theorem}[External-memory integer sorting]
\label{thm:sorting}
Under these conventions, there is a constant $\varepsilon>0$ such that, for every $2B\le M\le N$, $N$ integers of $\Theta(\log N)$ bits can be sorted deterministically in
\begin{equation}
 O\!\left(\frac NB\left[1+
 \frac{\log N}{(\log(M/B))^{1-\varepsilon}(\log M)^\varepsilon}
 \right]\right)
 \label{eq:simple-sorting}
\end{equation}
block transfers. The algorithm uses at most $M$ words of memory and polynomial disk space, and is correct on every input.
\end{theorem}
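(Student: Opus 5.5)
We will prove Theorem~\ref{thm:sorting} by reducing sorting to a bounded number of fixed, key-independent address permutations, and accelerating each such permutation with a coded interchange routine built from a finite network-coding gadget. Write $m=\log(M/B)$ and $L=\log M$. The target per-level cost is about $(N/B)\cdot\log N/(m^{1-\varepsilon}L^{\varepsilon})$. The case $\log N\le m^{1-\varepsilon}L^{\varepsilon}$ will be handled separately: there the bound is $O(N/B)$ up to the additive $1$, and it follows from the comparison bound~\eqref{eq:intro-comparison} because $\log(N/M)/\log(M/B)=O(1)$ when $N\le\mathrm{poly}(M)$ and $M/B$ is large. The degenerate case $M/B=O(1)$ also follows from~\eqref{eq:intro-comparison}, since the two bounds agree up to constants there. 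So we assume $M/B$ large and $\log N\gg m$.

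\textbf{Step 1 (finite coding gadget).} We first fix a finite binary linear circuit that swaps two equal-width digit groups of an index. Its recursive subcalls process strictly less total data than the routing-style divide-and-conquer, which splits a swap of $k$-bit fields into two swaps of $k/2$-bit fields on all the data. Concretely, we aim for a recurrence
\[
T(k)\le c\,N/B+\sum_i T(k_i)\quad\text{with}\quad \sum_i \text{(data volume)}_i\le (1-\delta)\cdot\text{routing volume},
\]
which solves to a total volume exponent strictly below that of routing. The encoder and decoder are XOR combinations over a constant number of streams, realized by sequential scans with $O(1)$ buffers of $B$ words each. This step needs $M\ge cB$, which we may assume after simulating a constant-factor larger memory (the memory-simulation appendix).

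\textbf{Step 2 (address-field interchange in external memory).} We unroll the recursion until the field width is about $m$ bits. At that point a swap of $m$-bit fields is a single pass: a distribution across $M/B$ buffers, costing $O(N/B)$. A field-interchange of width $k$ then costs $O((N/B)(k/m)^{1-\varepsilon'})$ passes-worth of I/O, where the saving exponent comes from the volume gain $\delta$. Accounting additionally for the fact that a field of width $L$ can be handled in memory at once gives the $(\log M)^{\varepsilon}$ factor. Fields that are not powers of two, and arbitrary prefix counts, are handled by padding, as in the general-widths appendix.

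\textbf{Step 3 (sorting).} We apply Columnsort with $r\times s$ matrices, $r\approx N^{c}$. Its fixed permutations (transpose and shift) are compositions of $O(1)$ field interchanges of width $O(\log N)$, each costing $O\bigl((N/B)\log N/(m^{1-\varepsilon}L^{\varepsilon})\bigr)$ by Step 2. The column sorts are sorts of size $r$, handled recursively for $O(1)$ levels until the size is at most $M$ and the sort is done in memory. When $N$ is not of the required shape, we sort via radix sorting on $\Theta(\log N)$-bit keys split into a constant number of digits, using stable Columnsort-style passes with padding. Keys are tagged with original indices for stability, which handles repeated values; the tags cost only $O(1)$ extra words per record. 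Correctness follows from Leighton's 0--1 principle argument, and the memory and polynomial-disk bounds follow from the constant recursion depth.

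The main obstacle is Step 1 together with its Step 2 analysis. We must exhibit a concrete finite circuit whose recursive volume is strictly smaller than routing, and simultaneously show that its encode and decode passes are sequential scans that fit in $M$ words at every recursion level, including the base case $M=2B$. We would first try to adapt the OpenAI-style linear circuit, verify the volume inequality by exact rational computation (yielding the explicit $\varepsilon=1.5\times10^{-5}$), and then check that the recurrence for I/O, rather than Turing-machine time, inherits the same exponent.
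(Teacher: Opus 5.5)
\textbf{There is a genuine gap in Step~3.} Columnsort needs $r\gtrsim 2s^2$, so a column always holds at least about $n^{2/3}$ of the $n$ records. Shrinking from $N$ down to size $M$ therefore takes $\Theta(\log(\log N/\log M))$ levels. That is $O(1)$ only when $N\le\mathrm{poly}(M)$, but the theorem covers every $N\ge M$, including $\log N/\log M\to\infty$.

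Each level performs four column-sort rounds plus padding. The processed volume therefore grows like $4^{\text{depth}}=(\log N/\log M)^{\Theta(1)}$, with exponent well above $1$, and the claimed bound is lost. Splitting keys into a constant number of $\Theta(\log N)$-bit digits does not help, because each pass is again a full sort on $\Theta(\log N)$-bit keys.

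A sanity check shows something is off. If Step~3 worked as written, it would give $O\bigl((N/B)[1+(\log N/\log(M/B))^{1-\varepsilon}]\bigr)$. That is stronger than the theorem whenever $\log N/\log M\to\infty$.

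\textbf{How the paper handles sorting.} It confines Columnsort to local sorting of $n\le M^2$ records. With $s=2^t$ and $r=s^3$, each transpose is a rotation of four width-$t$ address fields, i.e.\ three interchanges of width $t=O(\log M)$. The recursion depth is at most three, for a cost of $O\bigl((n/B)[1+(\log M/\log(M/B))^{1-\varepsilon}]\bigr)$.

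For $N>M^2$ the paper runs MSD radix sort on $\Theta(\log M)$-bit groups. It forms each partition's at most $H$ buckets by locally sorting $H^2$-record subarrays on the current group and concatenating the intervals in one scan. The boundary overhead is $O(n/H)=O(n/B)$ because $H\ge B$.

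The $O(\log N/\log M)$ radix levels, times the per-level factor $(\log M/\log(M/B))^{1-\varepsilon}$, are exactly where the $(\log M)^{\varepsilon}$ in the denominator comes from. Your explanation, that ``a field of width $\log M$ can be handled in memory at once,'' is not a valid mechanism. Two $\log M$-bit fields have $M^2$ value pairs, which cannot be distributed in one pass with $M/B$ buffers.

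\textbf{Your preliminary reduction contains a false step.} When $M/B=O(1)$, the comparison bound is $\Theta\bigl((N/B)\log(N/M)\bigr)$. The target is $O\bigl((N/B)\log N/(\log M)^{\varepsilon}\bigr)$, which is smaller by a factor $\Theta((\log M)^{\varepsilon})$ whenever $\log(N/M)=\Theta(\log N)$. This is the paper's headline regime ($M=2B$; Table~\ref{tab:examples}), not a degenerate case.

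It is handled by shrinking the logical block, not by simulating a larger memory. Take $\widehat M=\lfloor M/4\rfloor$ and $\widehat B=\max\{1,\lfloor B/K\rfloor\}$, so that $\widehat M/\widehat B\ge R_0$ while $\log\widehat M=\Theta(\log M)$ and $\log(\widehat M/\widehat B)=\Theta(\log(M/B))$. Each logical block read or write costs one physical transfer. When $M$ itself is bounded, direct radix sorting with $O(N)=O(N/B)$ transfers suffices.

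\textbf{Step~1 leaves the core mechanism unspecified.} A volume inequality alone does not explain how XOR-mixing streams reduces address-permutation work. The paper does this as follows:
\begin{itemize}
\item It labels circuit nodes by subspaces $U$ and stores each node's stream in the partial-swap order $D_{P_U}$, so each gate becomes one sequential scan.
\item It realizes each edge as a partial swap whose rank (the dimension difference) equals the number of child interchanges of width $e/512$.
\item It implements these swaps through a rank-one decomposition and the conjugation $D_P=LT_{ij}L^{-1}$ over $\mathbb{Z}/q^f\mathbb{Z}$.
\end{itemize}
The circuit's inequality $S<Wm$ then yields $S/W<512^{1-\varepsilon}$ per recursion level.
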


Here $\varepsilon$ is independent of $N,M,B$; we take $\varepsilon=1.5\times10^{-5}$. When the input fits in memory, reading, sorting, and writing it takes the optimal $O(N/B)$ block transfers. In general, $N/B$ in~\eqref{eq:simple-sorting} is the cost of scanning the input. In the classical comparison bound~\eqref{eq:intro-comparison}, the logarithmic factor beyond scanning is $\log(N/M)/\log(M/B)$, whose denominator is governed entirely by the number of blocks $M/B$ that fit in memory. When $\log(N/M)=\Theta(\log N)$, for example when $N\ge M^2$, the improvement in Theorem~\ref{thm:sorting} can be viewed as replacing $(\log(M/B))^\varepsilon$ in the denominator by $(\log M)^\varepsilon$. This reduces block transfers when memory holds few blocks. Table~\ref{tab:examples} gives two examples with $M=2B$, each with an asymptotic improvement factor of $(\log B)^\varepsilon$ over comparison sorting.

\begin{table}[htbp]
\centering
\normalsize
\renewcommand{\arraystretch}{1.4}
\begin{tabular}{@{}lll@{}}
\toprule
Input size ($M=2B$) & Classical comparison sorting & This work\\
\midrule
$N=B^2$ & $\Theta(B\log B)$ & $O(B(\log B)^{1-\varepsilon})$\\
$N=B^3$ & $\Theta(B^2\log B)$ & $O(B^2(\log B)^{1-\varepsilon})$\\
\bottomrule
\end{tabular}
\caption{Improvements for two input sizes, with $B\to\infty$ in both cases.}
\label{tab:examples}
\end{table}

\subsection{The communication model and coding counterexample}

Let $G=(V,E)$ be a finite undirected multiple-unicast network. An edge $e$ has capacity $c(e)$, shared by the two directions: their transmission rates must sum to at most $c(e)$. The network has $W$ unicast sessions. Session $i$ has source $s_i$, sink $t_i\ne s_i$, and required rate $r_i$. Source $s_i$ holds a message $X_i$ that sink $t_i$ must recover. The source messages are mutually independent, and the rate vector is $\mathbf r=(r_1,\ldots,r_W)$.

Under \emph{fractional routing}, each session may use several source--sink paths. Assign nonnegative flow to the paths of session $i$ so that their sum is $r_i$. For every edge, the total flow on all paths using it must not exceed its capacity; this sum includes both directions and all sessions.

\emph{Network coding} allows nodes to transmit functions of their source messages and previously received information. Each sink must recover its required message for every source input, giving a zero-error requirement. For fixed network, capacities, and source--sink pairs, the simultaneously achievable rate vectors and their limits form the corresponding \emph{capacity region}. The multiple-unicast conjecture asserts that the coding and fractional routing capacity regions coincide. A rate vector achievable by coding but not by routing therefore refutes it.

All edges in our counterexample have unit capacity. We use a finite-round binary linear protocol: every transmitted bit is an exclusive-or (XOR), or sum modulo two, of bits already available at the transmitting node, and transmission times and directions are fixed in advance. In each round, each edge carries at most one bit in one direction, computed only from initial messages or information received in earlier rounds. Local computation and storage are unrestricted. If a protocol delivers $k$ bits from each source to its sink in $T$ rounds, its \emph{common rate} is $k/T$. Let $R_{\mathrm{route}}$ be the maximum common rate under fractional routing, and let $R_{\mathrm{code}}$ be the supremum of common rates achievable by these binary linear protocols.

The finite construction used in this paper also yields the following communication result.

\begin{theorem}[A strict coding advantage in undirected multiple-unicast networks]
\label{thm:network}
There is a finite undirected multiple-unicast network with no loops or parallel edges, unit edge capacities, and independent source messages such that
\begin{equation}
 R_{\mathrm{route}}
 \le 1-\frac{33}{338432}
 <1=R_{\mathrm{code}}.
 \label{eq:network-separation}
\end{equation}
\end{theorem}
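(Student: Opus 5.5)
The plan is to obtain the network from a finite binary linear circuit of the kind in OpenAI's integer-multiplication construction~\cite[Section~3]{OpenAI2026}, refined as in~\cite{E8Base555,E8Word589}, which will be built in Section~\ref{sec:circuit-framework}. Such a circuit takes $W$ input words and produces $W$ output words. Each output word is a prescribed input word, for instance the one determined by a digit-swap permutation. The point of the circuit is that it uses XOR gates arranged so that the total wire volume is smaller than what any routing of the individual words could achieve.

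\textbf{Building the network and showing $R_{\mathrm{code}}=1$.} I would take the undirected graph underlying the circuit, with one vertex per gate or terminal and one unit-capacity edge per wire. The source $s_i$ is the input terminal of word $i$, and the sink $t_i$ is the output terminal that must equal word $i$. Loops and parallel edges are removed by subdividing edges. Subdividing rescales the time axis uniformly, so the common rate is unaffected. I would then schedule the circuit layer by layer in a pipelined fashion, so that a fresh input batch enters every round. Each wire then carries exactly one bit per round in its fixed direction, and each gate's output is an XOR of bits it received in earlier rounds. Over $T$ rounds this delivers $T-O(\text{depth})$ bits per session. Letting $T\to\infty$ gives common rates tending to $1$. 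The upper bound $R_{\mathrm{code}}\le1$ holds because each sink has some incident edges, and the cut formed by those edges gives the bound (after normalization, this cut has capacity one per session).

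\textbf{Bounding $R_{\mathrm{route}}$.} This is the main obstacle. I would use LP duality for multicommodity flow. The idea is to assign a nonnegative length $\ell(e)$ to every edge and let $d_\ell(s_i,t_i)$ be the resulting shortest-path distance. Any fractional routing at common rate $R$ must satisfy
\[
 R\sum_i d_\ell(s_i,t_i)\le\sum_e \ell(e).
\]
The lengths I would choose come from the circuit's rational matrix labels. Each wire computes a linear functional of the inputs, and the length is derived from how many inputs the rational version of that functional genuinely involves. With the right choice, every $s_i$--$t_i$ path has length at least a fixed amount, while the total $\sum_e\ell(e)$ is a fraction $1-\tfrac{33}{338432}$ of $\sum_i d_\ell(s_i,t_i)$.

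The step that lower-bounds distances needs a local argument. Along any path, the labels evolve by gates, each gate with bounded fan-in. A potential-function argument should show that each step's length is at least the change in the potential, so summing along the path gives the required lower bound on $d_\ell(s_i,t_i)$. Once the lengths are fixed, checking the constant $33/338432$ is a finite computation over the specific gadget. I expect choosing potentials that make this local inequality hold at every gate type, and in both edge directions, to be the delicate part. I would try first to use potentials that are the ranks of label submatrices, since these behave submodularly across XOR gates.
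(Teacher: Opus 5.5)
\textbf{There is a genuine gap: the routing half has neither a working metric nor the construction that produces the constant.} Your coding half is essentially the paper's: pipelined batches, the unit cut at a degree-one terminal, and subdivision to remove parallel edges. Your routing framework, a length function with $R\sum_i d_\ell(s_i,t_i)\le\sum_e\ell(e)$, is also the right shape.

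\textbf{The missing idea is the metric.} The paper puts a \emph{matrix-valued} label on every node. This label is the orthogonal projection $M_u$ onto a subspace $U_u\subseteq\mathcal H=\mathbb Q^{512}$, chosen independently of the bits the wires carry. Edge lengths are then $\Delta(uw)=\rank(M_w-M_u)$. With this choice, $d(s_i,t_{\sigma(i)})\ge\rank(M_{t_{\sigma(i)}}-M_{s_i})=m$ follows at once from telescoping and rank subadditivity. It holds in either traversal direction because negation does not change rank. So there is no gate-by-gate local inequality to verify, and the step you flag as delicate is trivial once lengths are ranks of differences of node labels.

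By contrast, lengths that count how many inputs a wire's functional involves do not telescope along paths. Scalar potentials also cannot work. If the path bound comes from $\ell(e)\ge|\phi(u)-\phi(w)|$, or from a sum of such terms, you certify at best a cut-type inequality. Rate-$1$ coding already satisfies every cut condition, so such a bound can never give $R_{\mathrm{route}}<1$. Concretely, the potential $\phi=\dim U_u$ gives an $X_a$ session only $|\dim U_a^\perp-\dim U_a|=m-2$, whereas $\rank(I-2P_{U_a})=m$. The non-additive rank metric is essential.

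\textbf{The inequality $S<Wm$ is the content of the construction, not something to check afterwards.} The circuit and labels must satisfy three design conditions:
\begin{itemize}
\item Labels are nested along every wire, so that each wire's length telescopes to its endpoint dimension difference.
\item Each gate vertex has a single label shared by all wires through it.
\item The data-pair swap pairs source label $U_a$ with sink label $U_a^\perp$. Each data wire then costs $m-1$ rather than $m$, while the source--sink difference still has full rank.
\end{itemize}
Auxiliary wires cost only $m$ because the three cyclically offset data copies share pools whose labels stay nested, which uses $\ell_t\perp\ell_{\pi(t)}$. Temporary wires add $c=63$ per invocation over $9v^2$ invocations. This gives $S=Wm-3v^2(2v-3c)$ with $2v-3c=33$, hence $1-S/(Wm)=(2v-3c)/((2v+R)m)=33/338432$.

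Your proposal asserts this ratio without any labels from which to compute it. It also attributes the saving to the circuit having ``wire volume smaller than any routing could achieve,'' which is the conclusion to be proved, not a property you can assume.
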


\Needspace{14\baselineskip}
\section{Proof Overview}
\label{sec:overview}

External-memory sorting must move data on disk to new positions. Our central approach uses a finite XOR circuit to reduce the number of block transfers needed for this movement. The same circuit has two roles. In an undirected multiple-unicast network, it delivers independent messages to their respective sinks at a rate higher than fractional routing, giving a counterexample to the multiple-unicast conjecture. In an external-memory algorithm, mixing and recovering bits from different positions with this circuit breaks a large movement task into smaller tasks whose total data volume is less than that required by directly exchanging corresponding short fields. We first introduce the rearrangements needed for sorting, then the circuit and coding counterexample, and finally their use in address interchange and external-memory sorting.

\subsection{Rearrangement in sorting}
\label{sec:overview-rearrangement}

Arrange the input integers in a two-dimensional table, storing each column contiguously on disk. Sorting each column separately does not remove inversions between different columns. Columnsort alternates column sorts with fixed intercolumn rearrangements. These rearrangements redistribute the data among new columns, bringing integers from different original columns together for comparison. When the row and column counts satisfy Columnsort's conditions, executing the prescribed number of column sorts and rearrangements in order makes the entire table sorted.

A rearrangement changes only the storage positions of elements, preserving their values. Each element's new position is determined entirely by its original row and column indices, independently of its value. Figure~\ref{fig:method-overview} gives a matrix-transpose example. These fixed movements determined by indices are precisely what our coding circuit accelerates. Intermediate steps may XOR bits from different integers, but at the end of the rearrangement each target position contains the input integer that belongs there. The intercolumn rearrangements required by Columnsort also have a regular structure: each decomposes into a constant number of exchanges of two address fields, together with sequential scans that read or write the array in storage order.

\begin{figure}[!htbp]
\centering
\begin{tikzpicture}[font=\normalsize]
\node[anchor=west,font=\normalsize\bfseries] at (0,0) {(a) Fixed rearrangement};
\node[datacell,minimum width=0.56cm,minimum height=0.56cm,fill=white] (a00) at (2.720,-0.550) {8};
\node[datacell,minimum width=0.56cm,minimum height=0.56cm,fill=lightblue] (a01) at (3.280,-0.550) {3};
\node[datacell,minimum width=0.56cm,minimum height=0.56cm,fill=warmfill] (a10) at (2.720,-1.110) {6};
\node[datacell,minimum width=0.56cm,minimum height=0.56cm,fill=white] (a11) at (3.280,-1.110) {1};
\node[datacell,minimum width=0.56cm,minimum height=0.56cm,fill=white] (b00) at (8.720,-0.550) {8};
\node[datacell,minimum width=0.56cm,minimum height=0.56cm,fill=warmfill] (b01) at (9.280,-0.550) {6};
\node[datacell,minimum width=0.56cm,minimum height=0.56cm,fill=lightblue] (b10) at (8.720,-1.110) {3};
\node[datacell,minimum width=0.56cm,minimum height=0.56cm,fill=white] (b11) at (9.280,-1.110) {1};
\draw[figarrow] (4,-.83) -- node[above,smalltag] {Transpose} (8,-.83);
\node[anchor=west,font=\normalsize\bfseries] at (0,-1.9) {(b) Column sorts and rearrangement};
\node[datacell,minimum width=0.49cm,minimum height=0.49cm,fill=lightblue] (m000) at (0.610,-2.530) {9};
\node[datacell,minimum width=0.49cm,minimum height=0.49cm,fill=warmfill] (m001) at (1.100,-2.530) {2};
\node[datacell,minimum width=0.49cm,minimum height=0.49cm,fill=greenfill] (m002) at (1.590,-2.530) {7};
\node[datacell,minimum width=0.49cm,minimum height=0.49cm,fill=lightblue] (m010) at (0.610,-3.020) {4};
\node[datacell,minimum width=0.49cm,minimum height=0.49cm,fill=warmfill] (m011) at (1.100,-3.020) {8};
\node[datacell,minimum width=0.49cm,minimum height=0.49cm,fill=greenfill] (m012) at (1.590,-3.020) {1};
\node[datacell,minimum width=0.49cm,minimum height=0.49cm,fill=lightblue] (m020) at (0.610,-3.510) {6};
\node[datacell,minimum width=0.49cm,minimum height=0.49cm,fill=warmfill] (m021) at (1.100,-3.510) {3};
\node[datacell,minimum width=0.49cm,minimum height=0.49cm,fill=greenfill] (m022) at (1.590,-3.510) {5};
\draw[figarrow] (2.00,-3.02) -- node[above,smalltag] {Sort} (3.80,-3.02);
\node[datacell,minimum width=0.49cm,minimum height=0.49cm,fill=lightblue] (m100) at (4.210,-2.530) {4};
\node[datacell,minimum width=0.49cm,minimum height=0.49cm,fill=warmfill] (m101) at (4.700,-2.530) {2};
\node[datacell,minimum width=0.49cm,minimum height=0.49cm,fill=greenfill] (m102) at (5.190,-2.530) {1};
\node[datacell,minimum width=0.49cm,minimum height=0.49cm,fill=lightblue] (m110) at (4.210,-3.020) {6};
\node[datacell,minimum width=0.49cm,minimum height=0.49cm,fill=warmfill] (m111) at (4.700,-3.020) {3};
\node[datacell,minimum width=0.49cm,minimum height=0.49cm,fill=greenfill] (m112) at (5.190,-3.020) {5};
\node[datacell,minimum width=0.49cm,minimum height=0.49cm,fill=lightblue] (m120) at (4.210,-3.510) {9};
\node[datacell,minimum width=0.49cm,minimum height=0.49cm,fill=warmfill] (m121) at (4.700,-3.510) {8};
\node[datacell,minimum width=0.49cm,minimum height=0.49cm,fill=greenfill] (m122) at (5.190,-3.510) {7};
\draw[figarrow] (5.60,-3.02) -- node[above,smalltag] {Transpose} (7.40,-3.02);
\node[datacell,minimum width=0.49cm,minimum height=0.49cm,fill=lightblue] (m200) at (7.810,-2.530) {4};
\node[datacell,minimum width=0.49cm,minimum height=0.49cm,fill=lightblue] (m201) at (8.300,-2.530) {6};
\node[datacell,minimum width=0.49cm,minimum height=0.49cm,fill=lightblue] (m202) at (8.790,-2.530) {9};
\node[datacell,minimum width=0.49cm,minimum height=0.49cm,fill=warmfill] (m210) at (7.810,-3.020) {2};
\node[datacell,minimum width=0.49cm,minimum height=0.49cm,fill=warmfill] (m211) at (8.300,-3.020) {3};
\node[datacell,minimum width=0.49cm,minimum height=0.49cm,fill=warmfill] (m212) at (8.790,-3.020) {8};
\node[datacell,minimum width=0.49cm,minimum height=0.49cm,fill=greenfill] (m220) at (7.810,-3.510) {1};
\node[datacell,minimum width=0.49cm,minimum height=0.49cm,fill=greenfill] (m221) at (8.300,-3.510) {5};
\node[datacell,minimum width=0.49cm,minimum height=0.49cm,fill=greenfill] (m222) at (8.790,-3.510) {7};
\draw[figarrow] (9.20,-3.02) -- node[above,smalltag] {Sort} (11.00,-3.02);
\node[datacell,minimum width=0.49cm,minimum height=0.49cm,fill=greenfill] (m300) at (11.410,-2.530) {1};
\node[datacell,minimum width=0.49cm,minimum height=0.49cm,fill=warmfill] (m301) at (11.900,-2.530) {3};
\node[datacell,minimum width=0.49cm,minimum height=0.49cm,fill=greenfill] (m302) at (12.390,-2.530) {7};
\node[datacell,minimum width=0.49cm,minimum height=0.49cm,fill=warmfill] (m310) at (11.410,-3.020) {2};
\node[datacell,minimum width=0.49cm,minimum height=0.49cm,fill=greenfill] (m311) at (11.900,-3.020) {5};
\node[datacell,minimum width=0.49cm,minimum height=0.49cm,fill=warmfill] (m312) at (12.390,-3.020) {8};
\node[datacell,minimum width=0.49cm,minimum height=0.49cm,fill=lightblue] (m320) at (11.410,-3.510) {4};
\node[datacell,minimum width=0.49cm,minimum height=0.49cm,fill=lightblue] (m321) at (11.900,-3.510) {6};
\node[datacell,minimum width=0.49cm,minimum height=0.49cm,fill=lightblue] (m322) at (12.390,-3.510) {9};
\end{tikzpicture}
\caption{(a) A fixed position interchange preserves values. (b) Colors identify original columns; rearrangement brings different colors into the same column. The figure illustrates the operations only; Section~\ref{sec:sorting} gives the full Columnsort procedure.}
\label{fig:method-overview}
\end{figure}

\Needspace{9\baselineskip}
\subsection{Circuits and the coding counterexample}
\label{sec:overview-circuit}

A coding counterexample requires both a coding scheme that recovers every message and a proof that fractional routing on the same network cannot achieve that rate. A finite circuit specifies a predetermined sequence of XOR operations. Intermediate steps may mix bits of different messages, but every original bit is recovered at its designated output. Turning circuit wires into network edges gives a coding scheme; assigning matrix labels to nodes bounds the routing rate.

Figure~\ref{fig:circuit-network}(a) shows a circuit with three registers $A,B,C$ and six gates. Each register stores one bit, initially $x,y,z$, respectively. Each horizontal line represents a register, and operations proceed from left to right. A black dot and an $\oplus$ on the same vertical line form a gate: the black-dot value is unchanged, and the $\oplus$ value is XORed with it. For example, the first gate $g_1$ changes register $C$ from $z$ to $z\oplus x$. For every input $(x,y,z)$, the six gates output $(y,x,z)$. Registers $A$ and $B$ exchange their initial values; we call two registers that exchange initial values a pair of \emph{data registers}. Register $C$ assists this exchange and is restored to $z$ at the end, so we call it an \emph{auxiliary register}. Because its initial value is arbitrary and is restored, an auxiliary register can also carry an independent message and need not start at zero.

% A six-gate binary-linear example: (x,y,z) -> (y,x,z).
% Gate list (control,target): (A,C),(C,B),(A,C),(C,B),(B,A),(A,B).
% Colors identify register wires, not independently routed messages.
\begin{figure}[!htbp]
\centering
\begin{tikzpicture}[font=\normalsize,
 endpoint/.style={draw=inkblue!65,rounded corners=1pt,minimum width=.65cm,minimum height=.48cm,inner sep=2pt},
 gatepoint/.style={circle,fill=inkblue,inner sep=2.5pt},
 terminal/.style={circle,draw=inkblue,fill=white,inner sep=2.2pt},
 bitlabel/.style={font=\normalsize,inner sep=1.5pt,fill=white}]

\node[anchor=west,font=\normalsize\bfseries] at (-1.15,1.10) {(a) Three registers, six gates};
\draw[figarrow,draw=gray!65] (7.5,1.10) -- node[above,smalltag] {Execution order} (10.85,1.10);

% Each horizontal line is one fixed register.
\foreach \y/\reg/\val/\out/\col/\fillcol in {0/A/x/y/inkblue/lightblue,-1/B/y/x/warmink/warmfill,-2/C/z/z/greenink/greenfill} {
 \node[anchor=east,text=\col] at (-.78,\y) {$\reg$};
 \node[endpoint,fill=\fillcol] (in\reg) at (-.22,\y) {$\val$};
 \node[endpoint,fill=\fillcol] (out\reg) at (11.1,\y) {$\out$};
 \draw[draw=\col,semithick] (in\reg.east) -- (out\reg.west);
}

% A control dot leaves its own register unchanged; the target takes XOR.
\foreach \xx/\cy/\ty/\j in {1/0/-2/1,2.8/-2/-1/2,4.6/0/-2/3,6.4/-2/-1/4,8.2/-1/0/5,10/0/-1/6} {
 \draw[gray!80,semithick] (\xx,\cy) -- (\xx,\ty);
 \fill[inkblue] (\xx,\cy) circle (1.8pt);
 \draw[draw=inkblue,fill=white,semithick] (\xx,\ty) circle (4pt);
 \draw[draw=inkblue,semithick] (\xx-.105,\ty) -- (\xx+.105,\ty);
 \draw[draw=inkblue,semithick] (\xx,\ty-.105) -- (\xx,\ty+.105);
 \node[smalltag] at (\xx,.52) {$g_{\j}$};
}

% Selected intermediate values make cancellation visible.
\node[bitlabel,text=greenink] at (1.9,-2.24) {$z\oplus x$};
\node[bitlabel,text=warmink] at (3.7,-.73) {$y\oplus z\oplus x$};
\node[bitlabel,text=greenink] at (5.5,-2.24) {$z$};
\node[bitlabel,text=warmink] at (7.3,-.73) {$y\oplus x$};
\node[bitlabel,text=inkblue] at (9.1,.23) {$y$};

\draw[figarrow] (5.45,-2.65) -- (5.45,-3.25);
\node[smalltag,anchor=west] at (5.75,-2.95) {Gate $\to$ node, wire segment $\to$ edge};
\node[anchor=west,font=\normalsize\bfseries] at (-1.15,-3.65) {(b) The corresponding undirected multiple-unicast network};

% Distinct source and destination vertices, including the restored auxiliary bit.
\node[terminal,label=left:{$s_x$}] (sx) at (-.22,-4.35) {};
\node[terminal,label=left:{$s_z$}] (sz) at (-.22,-5.20) {};
\node[terminal,label=left:{$s_y$}] (sy) at (-.22,-6.10) {};
\node[terminal,label=right:{$t_y$}] (ty) at (11.1,-4.35) {};
\node[terminal,label=right:{$t_x$}] (tx) at (11.1,-5.20) {};
\node[terminal,label=right:{$t_z$}] (tz) at (11.1,-6.10) {};
\foreach \xx/\j in {1/1,2.8/2,4.6/3,6.4/4,8.2/5,10/6} {
 \coordinate (g\j) at (\xx,-5.20);
}

% Register A: sx -> g1 -> g3 -> g5 -> g6 -> ty.
\draw[inkblue,semithick] (sx) -- (g1);
\draw[inkblue,semithick] (g1) to[out=65,in=115] (g3);
\draw[inkblue,semithick] (g3) to[out=65,in=115] (g5);
\draw[inkblue,semithick] (g5) to[out=40,in=140] (g6);
\draw[inkblue,semithick] (g6) -- (ty);

% Register B: sy -> g2 -> g4 -> g5 -> g6 -> tx.
\draw[warmink,semithick] (sy) to[out=0,in=-130] (g2);
\draw[warmink,semithick] (g2) to[out=-60,in=-120] (g4);
\draw[warmink,semithick] (g4) -- (g5);
\draw[warmink,semithick] (g5) to[out=-40,in=-140] (g6);
\draw[warmink,semithick] (g6) -- (tx);

% Register C: sz -> g1 -> g2 -> g3 -> g4 -> tz.
\draw[greenink,semithick] (sz) -- (g1) -- (g2) -- (g3) -- (g4);
\draw[greenink,semithick] (g4) to[out=-50,in=180] (tz);

% Gate vertices are drawn last so crossings and junctions are unambiguous.
\foreach \j in {1,...,6} {
 \node[gatepoint] at (g\j) {};
 \node[smalltag,anchor=south] at ($(g\j)+(0,.09)$) {$g_{\j}$};
}
\end{tikzpicture}
\caption{A three-register example and its network. Colors distinguish register wires. Node $g_j$ below represents gate $j$ above; an edge represents a wire segment between consecutive gates or endpoints.}
\label{fig:circuit-network}
\end{figure}

Figure~\ref{fig:circuit-network}(b) turns each gate $g_j$ into a node and each segment between successive gates or endpoints on the same register into a unit-capacity undirected edge. The three sessions deliver $x,y,z$ to destinations bearing the same letters. Nodes receive, compute, and transmit in circuit order, recovering every message. Pipelining independent executions allows every edge to carry at most one bit per round. For a fixed startup delay $D$, every source can deliver $k$ bits in at most $k+D$ rounds. As $k$ grows, the coding rate $k/(k+D)$ approaches one.

To prove that fractional routing on the network constructed in Section~\ref{sec:finite-construction} cannot achieve rate one, assign an $m\times m$ rational matrix to each node, requiring the sink matrix minus the source matrix of each message to have full rank $m$. Define an edge's length as the rank of the difference between its endpoint matrices. These lengths serve only to bound routing; the actual edge capacities remain one. Along any source--sink path, the successive matrix differences sum to the full-rank difference between its endpoints. Traversing an edge in reverse changes only the sign of its matrix difference, not its rank. Rank subadditivity therefore makes every source--sink path at least $m$ long, including paths with edges traversed against the circuit's time direction. Let there be $W$ independent messages and let $S$ be the sum of all edge lengths. Routing all messages at common rate $r$ consumes total edge length at least $rWm$, whereas the unit edge capacities supply only $S$. If the circuit and labels satisfy
\begin{equation}
 S<Wm,
 \label{eq:overview-deficit}
\end{equation}
then $r\le S/(Wm)<1$, so the coding rate eventually exceeds the routing upper bound.

\paragraph{Where the saving comes from.}
Like the example in Figure~\ref{fig:circuit-network}(a), the circuit in Section~\ref{sec:finite-construction} consists of pairs of data registers that exchange initial values and auxiliary registers that recover their initial values. If every source--sink path has length at least $m$, how can the sum $S$ of all edge lengths be smaller than $Wm$? In a network like Figure~\ref{fig:circuit-network}(b), XOR gates are nodes; its edges are the horizontal register segments. The two messages of each data pair exchange output positions, so a message's source and sink need not lie on the same horizontal line.

In the construction, node matrices are orthogonal projections onto subspaces of an $m$-dimensional space. When the subspace labels on a horizontal register line grow by inclusion, the edge lengths along that line sum to the difference between its endpoint dimensions. The total $S$ has three contributions. First, each auxiliary-register line grows from dimension zero to dimension $m$, contributing exactly $m$. Second, we arrange for each of the two lines of a data pair to grow by only $m-1$ dimensions, contributing $2m-2$ in total. Exchanging their output positions still leaves each source--sink matrix difference of rank $m$, so each pair saves two units relative to $2m$. Third, a few temporary wires add length: to transmit certain intermediate results, the circuit copies an auxiliary register's current value onto a separate wire for subsequent XOR gates and terminates the wire after use.

To obtain $S<Wm$, the savings on data-register lines must exceed the additional temporary-wire length. Section~\ref{sec:finite-coding} combines local circuits along three coordinate directions, using three rounds to exchange the initial values of each data pair. Each local invocation handles $111$ data pairs and adds temporary-wire length $63$. Over three rounds, the average additional length per data pair is $3\times63/111<2$, less than its saving of two, giving a strictly positive net saving. We also use the fact that auxiliary registers recover their initial values to share a pool among different parts of the construction while keeping their labels nested. Each auxiliary-register line still contributes only $m$ over all its uses, rather than $m$ per use. This reduces the number of auxiliary registers and increases the net saving as a fraction of $Wm$. Section~\ref{sec:finite-construction} gives the circuit and matrix labels.

\FloatBarrier
\subsection{Using coding to accelerate address interchange}
\label{sec:overview-interchange}

An array element's address is its position numbered from zero. A consecutive group of digits in its address representation is a \emph{field}. Exchanging two fields computes each element's new position by swapping the values of those fields and keeping all other digits fixed; the element's value is preserved. In Figure~\ref{fig:address-segments}, the element at position $28$ moves to position $42$ after the two colored fields are exchanged.

\begin{figure}[!htbp]
\centering
\begin{tikzpicture}[
 font=\normalsize,
 fixed/.style={draw=gray!55,fill=gray!8,minimum width=.60cm,minimum height=.55cm,inner sep=1pt},
 field/.style={minimum width=1.18cm,minimum height=.55cm,inner sep=1pt,semithick},
 bluefield/.style={field,draw=inkblue,fill=lightblue},
 orangefield/.style={field,draw=warmink,fill=warmfill}]
 \node[anchor=east] at (-.65,0) {Old position $28$};
 \node[anchor=east] at (-.65,-1.20) {New position $42$};
 \node[fixed] (p0) at (0,0) {$0$};
 \node[bluefield] (h0) at (1.10,0) {$01$};
 \node[fixed] (g0) at (2.20,0) {$1$};
 \node[orangefield] (d0) at (3.30,0) {$10$};
 \node[fixed] (z0) at (4.40,0) {$0$};
 \node[fixed] (p1) at (0,-1.20) {$0$};
 \node[orangefield] (d1) at (1.10,-1.20) {$10$};
 \node[fixed] (g1) at (2.20,-1.20) {$1$};
 \node[bluefield] (h1) at (3.30,-1.20) {$01$};
 \node[fixed] (z1) at (4.40,-1.20) {$0$};
 \draw[figarrow] (h0.south) .. controls +(0,-.30) and +(0,.30) .. (h1.north);
 \draw[figarrow,draw=warmink,preaction={draw=white,line width=2.5pt}]
   (d0.south) .. controls +(0,-.30) and +(0,.30) .. (d1.north);
\end{tikzpicture}
\caption{A binary address divided into five fields. The colored fields are interchanged; the gray fields stay in place.}
\label{fig:address-segments}
\end{figure}

Suppose the array contains $V$ bits, and the two equal-width fields to exchange are $H,D$. A direct divide-and-conquer approach splits each field into $m$ equal short fields and exchanges the $m$ corresponding pairs in sequence. This requires $m$ exchanges on the entire array, each with width $1/m$ of the original, for a total child data volume of $mV$. Our goal is to organize this recursion using the XOR circuit of Section~\ref{sec:overview-circuit}, with $W$ inputs and total edge length $S<Wm$, reducing the total child volume to $(S/W)V<mV$.

First use the address prefix preceding $H$ to divide the array equally among the circuit's $W$ inputs. Each input holds an ordered sequence of $V/W$ bits on disk, called a \emph{stream}. Interpret the node matrices as storage orders for these streams. Regard the $m$ short fields of $H,D$ as coordinates of two $m$-dimensional vectors. For an orthogonal projection matrix $P$ used by the circuit labels, define the following address transformation, called a \textbf{partial swap}:
\[
 (H,D)\longmapsto((I-P)H+PD,\;PH+(I-P)D),
\]
where $I$ is the identity matrix. It exchanges the components in the image of $P$ and preserves the components in its orthogonal complement. For $P=0$ the address is unchanged; for $P=I$ the complete fields $H,D$ are exchanged. These matrix operations determine the new addresses of the data; Section~\ref{sec:coding-recursion} specifies the arithmetic used. The data stored on disk remain bits, and the circuit gates still perform bitwise XOR.

Execute the circuit on the $W$ streams in gate order. At each XOR gate, all participating streams must be stored in the order specified by that gate's projection. Since the input streams have identical storage order, one sequential scan that XORs bits at matching disk positions implements the gate. When a stream travels from one gate to the next, convert its storage order to that required by the next projection. For an edge whose endpoint subspaces are nested, this conversion is exactly a partial swap: its projection is the larger subspace's projection minus the smaller one's, and its rank is the edge length. Temporary streams undergo the same conversions, whose costs are included in $S$. At the outputs, the XOR operations recover the original input bits, while the source and sink partial swaps together put each recovered stream into the fully interchanged address order $P=I$. Because data-register pairs exchange output positions, a final scan swaps the paired output streams back and combines them according to the original input assignment, completing the array's address-field interchange.

For an edge of length $r$, the corresponding partial swap exchanges coordinates in $r$ independent one-dimensional directions. Section~\ref{sec:interchange} proves that this partial swap can be implemented by $r$ short-field exchanges and a constant number of sequential disk scans. Each exchange acts on the current stream and exchanges fields of width $1/m$ of the original. Summing over all edges gives $S$ child exchanges, each on a stream of $V/W$ bits and fields $1/m$ as wide. Every child is again an exchange of two equal-width address fields, so recursion continues. Since $S<Wm$, there is a fixed $\varepsilon>0$ with $S/W<m^{1-\varepsilon}$, making the savings at successive recursion levels accumulate.

Stop when the field width reaches $\Theta(\log(M/B))$ bits. The buffers needed to group data by all pairs of short-field values then fit in memory, and a constant number of sequential scans completes the exchange. The entire stream need not fit in memory. Section~\ref{sec:interchange} and the appendices prove this recursion and handle general field widths and rounding in the stream division.

\FloatBarrier
\subsection{From rearrangement to sorting}

Section~\ref{sec:overview-interchange} accelerates address-field interchange. Such an interchange depends only on an element's current index, whereas its sorted destination depends on the values of the input records. We therefore decompose value-dependent sorting into local sorts and global rearrangements of fixed patterns.

For arrays of at most $M^2$ records, use Leighton's Columnsort. Arrange the array as a matrix with many more rows than columns. A fixed number of alternating rounds suffices: odd rounds sort each column independently, and even rounds apply predetermined global rearrangements, such as redistributing expanded columns into new row and column positions or shifting all elements by half a column. These even-round operations do not depend on input values. The intercolumn permutations rearrange the row and column indices within element addresses; the half-column shift and its padding and deletion use scans. Thus each global rearrangement can be implemented by a constant number of address-field exchanges from Section~\ref{sec:overview-interchange} and sequential scans.

If a column in an odd round still exceeds memory capacity, arrange it as a smaller matrix with many more rows than columns and recursively apply Columnsort. Once a subcolumn fits in memory, read it in and sort it there. With a constant recursion depth, summing the costs of the repeated global rearrangements introduces only a constant factor and preserves the exponent $1-\varepsilon$ in the complexity bound. Our matrix shape uses a number of columns of order the current size to the $1/4$ power, and a column length of order the size to the $3/4$ power. Each recursive level therefore reduces the size to a fractional power. Starting with at most $M^2$ records and ignoring fixed padding factors, successive sizes are $M^{3/2}$, $M^{9/8}$, and $M^{27/32}$, so only a constant number of levels is needed to fit in memory. The four column-sort rounds of one call multiply the total processed volume by only a constant. Across a constant number of levels, the total volume remains a constant multiple of the input size. Summing these costs thus introduces only a constant factor and preserves the exponent $1-\varepsilon$ from address interchange, turning faster rearrangement into faster sorting.

Finally, extend this local algorithm from sizes at most $M^2$ to arbitrary $N>M^2$. Split each record's $\Theta(\log N)$ bits from high to low into groups of $\log_2(\Theta(M))$ bits, each with at most $\Theta(M)$ values. Radix sorting partitions the current array into at most $\Theta(M)$ buckets according to its highest remaining group. To form these buckets on disk, split the array into subarrays of size $\Theta(M^2)$ and apply the local algorithm to each, sorting only by the current bit group. Equal group values then form a contiguous interval in each subarray. Read and concatenate these intervals in bucket-value order; this takes only a scan's worth of block transfers and produces at most $\Theta(M)$ buckets. Their sizes depend on the input. Sort buckets below a stopping threshold of $\Theta(M^2)$ records directly with the local algorithm, and recursively partition larger buckets by the next group. If all groups have been processed, every record in the bucket is equal and recursion stops. All buckets at one depth contain at most $N$ records in total, and the number of levels is $O(\log N/\log M)$. Multiplying the local sorting cost per level by the depth gives Theorem~\ref{thm:sorting}. Sections~\ref{sec:sorting} and~\ref{sec:global-sorting} provide the complete reductions and complexity proofs.
\FloatBarrier

\section{A Counterexample to the Multiple-Unicast Conjecture}
\label{sec:circuit-framework}
\label{sec:finite-construction}

We construct a finite linear circuit and matrix labels on its nodes to prove Theorem~\ref{thm:network}. The circuit gives a binary linear network code with common rate arbitrarily close to one, while the labels bound the maximum common fractional routing rate on the same network strictly below one. Section~\ref{sec:interchange} then uses these labels to guide external-memory address-field interchange, turning the communication advantage into fewer block transfers.

The circuit consists of registers and XOR gates, with each register storing one bit. Gates execute in a predetermined order. A gate reads the current values of specified registers and updates its target to an XOR of current values, leaving all other registers unchanged. For example, $y\leftarrow y+x$ replaces $y$ by the XOR of the current values of $y$ and $x$, while preserving $x$.

The conversion to an undirected graph $G=(\mathcal V,\mathcal E)$ is as in Section~\ref{sec:overview-circuit}. Each gate becomes a vertex. Each register connects the gates involving it in execution order, forming a horizontal main wire with input vertex $s_i$ and output vertex $t_i$. Every segment between successive gates or endpoints becomes a unit-capacity undirected edge. Edges belonging to different register wires are retained separately, so parallel edges are allowed. If the circuit recovers all input values at the outputs according to a permutation $\sigma$, with input $i$ recovered at $t_{\sigma(i)}$, create a session $s_i\to t_{\sigma(i)}$ and assign mutually independent messages to all sources. In addition to the data and auxiliary wires that persist throughout, the circuit branches a temporary wire from a register node to transmit certain intermediate values. This wire connects the subsequent XOR gates that use the copy and terminates after its last use. Each segment is also a unit-capacity undirected edge and contributes to the total edge length. Temporary wires carry intermediate values and introduce no independent sources or sessions.

The full circuit must exchange the initial values of paired data registers and restore the auxiliary initial values. Exchanging two values takes three successive XOR updates: XOR the first group into the second, XOR the updated second group back into the first, and XOR the updated first group into the second. A local circuit performs just one of these updates. After a forward invocation, the control and auxiliary labels already fill the eight-dimensional space, and the target labels have dimension seven. The next update must invoke the local circuit again, but cannot shrink these labels back to their initial low dimensions. To keep labels growing by inclusion through all three updates, we provide a separate eight-dimensional space for each update. Arrange each data-register collection as a three-dimensional grid and invoke the local circuit along the three coordinate directions in three rounds. Each round places the local label changes in its current eight-dimensional factor while preserving the space gained in earlier rounds. The global label space is the tensor product of the three factors, of dimension $8^3=512$. Finally, use three copies of the data grid with cyclically offset coordinate orders. They take turns using the same auxiliary registers, reducing their number while preserving monotone growth of their labels.

\subsection{A Finite Linear Circuit}
\label{sec:finite-coding}

We begin with the local circuit constructed by eumemic and DreamingOfClouds~\cite{E8Base555,E8Word589} in the eight-dimensional rational vector space $E=\mathbb Q^8$. This circuit builds on Sussman's $E_8$ unit~\cite{SussmanE82026}. Equip $E$ with the standard inner product. For $U\subseteq E$, write $U^\perp$ for its orthogonal complement and $0$ for the zero subspace. Subspace labels are used only for the routing bound and address rearrangement in Section~\ref{sec:interchange}; they are independent of the bits stored in the registers.

\begin{lemma}[Properties of the local circuit]
\label{lem:local-interface}
Let $v=111$, $R=439$, and $E=\mathbb Q^8$ with the standard inner product. There exist one-dimensional rational subspaces $\ell_t$ for $t\in[v]$, a permutation $\pi$ of $[v]$, and a finite binary linear circuit on $v$ pairs of data registers $x=(x_t)_{t\in[v]}$, $y=(y_t)_{t\in[v]}$ and $R$ auxiliary registers $z=(z_r)_{r\in[R]}$, a total of $2v+R=661$ registers, implementing
\[
 (x,y,z)\longmapsto(x,y+x,z).
\]
Along each data or auxiliary register's horizontal main wire, the subspace labels grow by inclusion, $U_0\subseteq U_1\subseteq\cdots$. The input and output labels are
\[
\begin{array}{c|cc}
 \text{register}&\text{input}&\text{output}\\\hline
 x_t&\ell_t&E\\
 y_t&0&\ell_t^\perp\\
 z_r&0&E
\end{array}
\]
At every XOR gate, all participating registers have the same subspace label. In addition to the main wires, the circuit has nine temporary wires. Each branches from a register node labeled by a seven-dimensional subspace, changes the copy's label to zero, and then connects the zero-labeled XOR gates that use the copy. These gates add the intermediate value to their respective target registers, and the wire terminates after its last use. Throughout this process, the original register retains both its value and its seven-dimensional label. Only the first segment, from dimension seven to zero, contributes a nonzero dimension difference; all remaining segments have zero labels at both endpoints. The total dimension difference of the nine wires is therefore $c=9\times7=63$. Moreover, $\ell_t\perp\ell_{\pi(t)}$ for every $t\in[v]$.
\end{lemma}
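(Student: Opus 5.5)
The proof will be a certified finite verification of the explicit local circuit of eumemic and DreamingOfClouds~\cite{E8Base555,E8Word589}, built on Sussman's $E_8$ unit~\cite{SussmanE82026}. I will not try to rederive its existence abstractly. The plan is to fix the circuit data concretely and then check each clause of the lemma as a finite, mechanically decidable property. The circuit data are an ordered gate list over the $661$ registers plus the nine temporary wires. Each wire segment carries a rational subspace of $E=\mathbb Q^8$, given by a basis of integer vectors. The lines $\ell_t$ are spanned by $111$ chosen vectors (I expect these come from $E_8$ roots, or small rational multiples of them), and $\pi$ is a listed permutation of $[v]$.

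The first step is functional correctness over $\mathbb F_2$. Each gate $u\leftarrow u+\sum_j w_j$ is an elementary transvection on the register vector in $\mathbb F_2^{661}$. A temporary wire is handled as an extra coordinate, created as a copy of a register and discarded after its last use. Composing these transvections in gate order yields a $661\times661$ binary matrix, and I will check that it equals the block matrix $\begin{pmatrix} I&0&0\\ I&I&0\\ 0&0&I\end{pmatrix}$ in the $(x,y,z)$ ordering. The same bookkeeping checks the structural claims: there are exactly nine temporary wires, each branches from a register node, and each is used only as a control input and never as a target.

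The second step is the label conditions, all of which reduce to rank computations over $\mathbb Q$. Inclusion $U\subseteq U'$ between consecutive labels holds exactly when $\rank[U\mid U']=\rank U'$. I will check this along every main wire. I will check that the input and output labels match the table, using $\rank=1$ and $\ell_t$ itself for $x_t$, and $\ell_t^\perp$ computed as a null space for $y_t$. I will check that all registers at each gate carry literally the same subspace, comparing reduced row echelon forms. For each temporary wire, I will check that its branch node has a dimension-$7$ label and that every later node on it is labeled $0$; the identity $c=9\cdot7=63$ then follows immediately. Finally, $\ell_t\perp\ell_{\pi(t)}$ is $111$ inner-product checks.

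The main obstacle is the second step. The $E_8$ construction interleaves partial XORs whose labels must grow monotonically while meeting at equal subspaces at every gate, and one mislabeled segment breaks inclusion. I would organize the labels per register as a chain of flags in $E$ so that each check is local. I would lean on the Lean~4 formalization for exhaustive coverage, with a hand argument only for the conceptual pattern: a data line climbs from $\ell_t$ to $E$ and its partner from $0$ to $\ell_t^\perp$, each gaining seven dimensions, while the $\mathbb F_2$ action is the single update $y\leftarrow y+x$.
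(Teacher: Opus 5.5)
Your proposal matches the paper's proof. Both treat the lemma as a finite certificate for the fixed eumemic--DreamingOfClouds circuit: symbolic evaluation over $\mathbb F_2$ on basis vectors gives the action $(x,y,z)\mapsto(x,y+x,z)$; rational row reduction gives the label inclusions, the gate equalities and the endpoint labels; the temporary-wire cost is $9\times7$; inner products give $\ell_t\perp\ell_{\pi(t)}$; and Lean~4 provides exhaustive coverage. Your null-space comparison for $\ell_t^\perp$ is equivalent to the paper's check that the seven-dimensional output of $y_t$ is orthogonal to $\ell_t$. The one clause you do not list is that each branching register keeps its value and its seven-dimensional label while its copy is in use, meaning no gate targets it in that interval; this is the same kind of finite check, and the later reverse-invocation construction relies on it.
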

\begin{proof}
Since the local circuit has fixed size, its properties can be checked directly by finite linear-algebra computations; these checks have also been machine-verified in the accompanying Lean~4 code. Represent the $661$ initial bits by independent formal basis vectors over $\mathbb F_2$ and apply the XOR gates in order. The final register expressions are exactly $e_{x_t}$, $e_{y_t}+e_{x_t}$, and $e_{z_r}$. By linearity, every initial state is mapped to $(x,y+x,z)$, including arbitrary auxiliary initial values. Row reduction of the rational generating vectors verifies the inclusions along each main wire and equality of all operand labels at each XOR gate. It also verifies that the seven-dimensional output space of $y_t$ is orthogonal to $\ell_t$, hence equals $\ell_t^\perp$. Each temporary wire's nonzero dimension change is from seven to zero, giving total cost $9\times7=63$. Finally, the generating vectors paired by $\pi$ have zero inner product, proving $\ell_t\perp\ell_{\pi(t)}$.
\end{proof}

\paragraph{The reverse invocation.}
The forward circuit of Lemma~\ref{lem:local-interface} maps $(x,y,z)$ to $(x,y+x,z)$. For the middle step of a three-update exchange, we instead need $(x,y,z)\mapsto(x+y,y,z)$, with exactly the same input and output subspace labels.

First regard each read of a temporary copy as a read of the same intermediate value retained in its original register. Let $g_1,g_2,\dots,g_N$ be the resulting sequence of XOR updates to the main registers. Such an update, for example $A\leftarrow A+B$, is its own inverse over $\mathbb F_2$. Executing $g_N,\dots,g_1$ therefore gives the inverse circuit. Since $(u,w,z)\mapsto(u,w+u,z)$ is itself an involution, the reversed sequence still XORs the first input group into the second. Connect $y$ as its first group and $x$ as its second to obtain
\[
 (x,y,z)\longmapsto(x+y,y,z).
\]
For the labels, replace each original node label $U\subseteq E$ by $U^\perp$ in the reversed circuit. Reversal turns $U_0\subseteq U_1\subseteq\cdots\subseteq U_N$ into a decreasing chain; complementation restores the increasing chain $U_N^\perp\subseteq\cdots\subseteq U_1^\perp\subseteq U_0^\perp$. After exchanging input groups and taking complements, the $x_t$ labels grow from $(\ell_t^\perp)^\perp=\ell_t$ to $0^\perp=E$, the $y_t$ labels grow from $E^\perp=0$ to $\ell_t^\perp$, and the $z_r$ labels grow from $E^\perp=0$ to $0^\perp=E$. These are exactly the endpoints in Lemma~\ref{lem:local-interface}.

For the nine temporary wires, the forward circuit had retained an intermediate value in an original register while adding its copy to later targets. When undoing this batch of additions, first make fresh copies from the original registers, which still hold those values. Add the copies to the targets in reverse order, canceling the forward additions. After undoing the batch, discard the copies and continue undoing the earlier operations that generated the retained values. A forward temporary wire from a seven-dimensional $U$ to zero becomes, after complementation, a wire connecting the one-dimensional $U^\perp$ to the eight-dimensional $E$. Its endpoint dimension difference remains $\dim E-\dim U^\perp=7$, so the total additional cost remains $c=63$.

\paragraph{Three tensor rounds.}
Combine the local circuits in the tensor product $\mathcal H=E_1\otimes E_2\otimes E_3$ of three eight-dimensional spaces, where each $E_j$ is a copy of $E=\mathbb Q^8$. The total dimension is $m=8^3=512$. For $a=(a_1,a_2,a_3)\in[v]^3$, define the line
\[
 U_a=\ell_{a_1}\otimes\ell_{a_2}\otimes\ell_{a_3}.
\]
First consider one collection of register pairs $(X_{a_1,a_2,a_3},Y_{a_1,a_2,a_3})$ arranged as a $v\times v\times v$ grid. Since each local invocation acts on $v$ data pairs, invoke the circuit along the three coordinate directions in three rounds:

\textbf{Round 1 ($v^2$ forward invocations):} For each fixed $(a_2,a_3)\in[v]^2$, use the $v$ pairs $x_t=X_{t,a_2,a_3}$, $y_t=Y_{t,a_2,a_3}$, $t\in[v]$, in one forward invocation.

\textbf{Round 2 ($v^2$ reverse invocations):} For each fixed $(a_1,a_3)$, use registers $X_{a_1,t,a_3}$ and $Y_{a_1,t,a_3}$, for $t\in[v]$, in one reverse invocation.

\textbf{Round 3 ($v^2$ forward invocations):} For each fixed $(a_1,a_2)$, use registers $X_{a_1,a_2,t}$ and $Y_{a_1,a_2,t}$, for $t\in[v]$, in one forward invocation.

The forward, reverse, forward sequence gives every data pair the updates
\[
 (X,Y)\longmapsto(X,Y+X)\longmapsto(Y,Y+X)\longmapsto(Y,X),
\]
exchanging its initial values.

For a fixed pair $(X_a,Y_a)$, we want the subspace labels to grow through the following stages:
\begin{center}
\renewcommand{\arraystretch}{1.35}
\begin{tabularx}{\linewidth}{@{}l>{\centering\arraybackslash}X>{\centering\arraybackslash}X@{}}
\toprule
Stage & Label of $X_a$ (dimension) & Label of $Y_a$ (dimension)\\
\midrule
Input & $\ell_{a_1}\otimes\ell_{a_2}\otimes\ell_{a_3}$ ($1$) & $0$ ($0$)\\
After round 1 & $E_1\otimes\ell_{a_2}\otimes\ell_{a_3}$ ($8$) & $\ell_{a_1}^\perp\otimes\ell_{a_2}\otimes\ell_{a_3}$ ($7$)\\
After round 2 & $E_1\otimes E_2\otimes\ell_{a_3}$ ($64$) & $(\ell_{a_1}\otimes\ell_{a_2})^\perp\otimes\ell_{a_3}$ ($63$)\\
After round 3 & $E_1\otimes E_2\otimes E_3=\mathcal H$ ($512$) & $(\ell_{a_1}\otimes\ell_{a_2}\otimes\ell_{a_3})^\perp=U_a^\perp$ ($511$)\\
\bottomrule
\end{tabularx}
\end{center}
How should a local label $V\subseteq E_j$, of dimension between zero and eight, be lifted to a global label $\phi_j(V)\subseteq\mathcal H$ to produce these stages?

In \textbf{round 1}, with $a_2,a_3$ fixed, the dimensions need only grow from $0,1$ to $7,8$. Tensor the local label with the two fixed lines: $\phi_1(V)=V\otimes\ell_{a_2}\otimes\ell_{a_3}$.

In \textbf{round 2}, with $a_1,a_3$ fixed, using only $\ell_{a_1}\otimes V\otimes\ell_{a_3}$ would lose the seven-dimensional space gained along $\ell_{a_1}^\perp$ in round 1, breaking inclusion at the local circuit's zero- and one-dimensional starting labels. It could also reach dimension at most eight, rather than the required $63,64$. Before invoking the round-2 circuit, extend the already gained $\ell_{a_1}^\perp$ across all of $E_2$. This gives a fixed $7\times8\times1=56$ dimensional subspace $\ell_{a_1}^\perp\otimes E_2\otimes\ell_{a_3}$. Take its orthogonal direct sum with the varying subspace $\ell_{a_1}\otimes V\otimes\ell_{a_3}$, of dimension at most eight.

In \textbf{round 3}, with $a_1,a_2$ fixed, similarly extend the previously gained $63$ dimensional space $(\ell_{a_1}\otimes\ell_{a_2})^\perp$ across $E_3$. The resulting fixed subspace has dimension $63\times8=504$. Take its orthogonal direct sum with $\ell_{a_1}\otimes\ell_{a_2}\otimes V$, of dimension at most eight.

Thus the three lifting maps are
\begin{equation}
\begin{aligned}
 \phi_1(V)&=V\otimes\ell_{a_2}\otimes\ell_{a_3}\\
 &\qquad\text{($a_2,a_3$ fixed; $V\subseteq E_1$)},\\
 \phi_2(V)&=\bigl(\ell_{a_1}^\perp\otimes E_2\;\oplus\;\ell_{a_1}\otimes V\bigr)\otimes\ell_{a_3}\\
 &\qquad\text{($a_1,a_3$ fixed; $V\subseteq E_2$)},\\
 \phi_3(V)&=(\ell_{a_1}\otimes\ell_{a_2})^\perp\otimes E_3\;\oplus\;\ell_{a_1}\otimes\ell_{a_2}\otimes V\\
 &\qquad\text{($a_1,a_2$ fixed; $V\subseteq E_3$)}.
\end{aligned}
\label{eq:tensor-lift}
\end{equation}
The fixed coordinates are the same at every gate in one invocation, so~\eqref{eq:tensor-lift} defines labels for all its gates and temporary wires. The fixed subspace remains unchanged during the invocation and is orthogonal to the varying subspace. Hence, as $V$ grows or changes along a temporary wire, the dimension difference between the lifted labels is exactly the local dimension difference. Before each round, increase the previous endpoint labels of $X_a,Y_a$ to the local starting labels $\phi_j(\ell_{a_j}),\phi_j(0)$, then execute the circuit to reach $\phi_j(E_j),\phi_j(\ell_{a_j}^\perp)$. These endpoints match each row of the table. After three rounds, $X_a$ has grown from $U_a$ to $\mathcal H$ and $Y_a$ from zero to $U_a^\perp$, each with total dimension increase $m-1=511$.

\paragraph{Sharing auxiliary registers among three copies.}
Using fresh auxiliary registers for every local invocation in all three rounds would require many registers. We instead want each auxiliary register to be used three times, with its label growing by inclusion from zero to the full $512$ dimensional space $\mathcal H$.

Within a single data grid, however, the auxiliary registers used in round 1 cannot directly serve round 2. Every local auxiliary register starts at $V=0$, unlike a data register $x_t$, which starts at $V=\ell_{a_j}$. Substituting into~\eqref{eq:tensor-lift}, the round-2 and round-3 auxiliary starting labels are
\[
\begin{aligned}
 \phi_2(0)&=\ell_{a_1}^\perp\otimes E_2\otimes\ell_{a_3}
       &&(56\text{ dimensions}),\\
 \phi_3(0)&=(\ell_{a_1}\otimes\ell_{a_2})^\perp\otimes E_3
       &&(504\text{ dimensions}).
\end{aligned}
\]
The zero local starting space removes the varying summand. The auxiliary starting label uses the full eight-dimensional space in the current coordinate but the orthogonal complement of the tensor product of the one-dimensional subspaces for the already processed coordinates: $\ell_{a_1}^\perp$ in round 2 and $(\ell_{a_1}\otimes\ell_{a_2})^\perp$ in round 3. For example, a pool used along coordinate 1 in round 1, with the other coordinates fixed to $b,c$, ends at $E_1\otimes\ell_b\otimes\ell_c$. Its full $E_1$ factor cannot fit into the $\ell_{a_1}^\perp$ factor at the start of round 2 in that same grid. To retain inclusion, let this pool continue to serve coordinate 1 in the later rounds, using $\pi$ to choose the other fixed indices. In round 2 it serves a grid whose second step is along coordinate 1; in round 3 it serves a grid whose third step is along coordinate 1.

Use three independent data-grid copies $C_1,C_2,C_3$, with cyclically shifted coordinate orders:
$C_1$ uses directions $(1,2,3)$ in rounds $1,2,3$; $C_2$ uses directions $(2,3,1)$; $C_3$ uses directions $(3,1,2)$.
Execute all copies' first rounds, then all second rounds, then all third rounds. In each round, exactly one copy invokes the local circuit along each coordinate direction.

For each $i\in\{1,2,3\}$ and $b,c\in[v]$, allocate a shared pool $P_i(b,c)$ of $R=439$ auxiliary registers, for $3v^2$ pools in total. Pool $P_i(b,c)$ always serves invocations along coordinate $i$ and serves a different data copy in each round. Interpret coordinate and copy indices cyclically modulo three, and display tensor factors in the order $(i,i+1,i+2)$.

\textbf{Round 1, serving the first step of $C_i$:} Vary coordinate $i$, fixing coordinates $i+1,i+2$ to $b,c$. The auxiliary values are restored, and their labels grow from zero to $E_i\otimes\ell_b\otimes\ell_c$, of dimension eight.

\textbf{Round 2, serving the second step of $C_{i+2}$:} This copy has already processed coordinate $i+2$. Vary coordinate $i$ and fix coordinates $i+1,i+2$ to $b,\pi(c)$. The auxiliary starting label is $E_i\otimes\ell_b\otimes\ell_{\pi(c)}^\perp$, of dimension $56$: it is full in coordinate $i$ and uses the orthogonal complement in the processed coordinate $i+2$. Since $\ell_c\perp\ell_{\pi(c)}$, the round-1 endpoint is contained in this starting label. The invocation ends at $E_i\otimes\ell_b\otimes E_{i+2}$, of dimension $64$.

\textbf{Round 3, serving the third step of $C_{i+1}$:} This copy has processed coordinates $i+1,i+2$. Vary coordinate $i$ and fix the other two coordinates to $\pi(b),\pi(c)$. The auxiliary starting label is $E_i\otimes(\ell_{\pi(b)}\otimes\ell_{\pi(c)})^\perp$, of dimension $504$. Since $\ell_b\perp\ell_{\pi(b)}$, it contains the round-2 endpoint $E_i\otimes\ell_b\otimes E_{i+2}$. The invocation ends at $\mathcal H$, of dimension $512$.

The boundary labels of each register in $P_i(b,c)$ therefore form the inclusion chain
\begin{equation}
\begin{aligned}
0&\subseteq E_i\otimes\ell_b\otimes\ell_c
   &&(8,\text{ end of round }1)\\
 &\subseteq E_i\otimes\ell_b\otimes\ell_{\pi(c)}^\perp
   &&(56,\text{ start of round }2)\\
 &\subseteq E_i\otimes\ell_b\otimes E_{i+2}
   &&(64,\text{ end of round }2)\\
 &\subseteq E_i\otimes(\ell_{\pi(b)}\otimes\ell_{\pi(c)})^\perp
   &&(504,\text{ start of round }3)\\
 &\subseteq\mathcal H
   &&(512,\text{ end of round }3).
\end{aligned}
\label{eq:helper-chain}
\end{equation}
Because $\pi$ is a bijection, every pool is used exactly once per round, covering all local invocations. Each use restores its auxiliary initial values. Over the three uses, every auxiliary label grows from zero to $\mathcal H$, with total increase still only $m=512$.

\begin{lemma}[Permutation of the input bits]
\label{lem:finite-coding}
The operation sequence exchanges the initial values of $X,Y$ within each of the three data copies and restores every auxiliary initial value. The full circuit has
\[
 W=6v^3+3v^2R=24{,}432{,}543
\]
independent inputs. Temporary wires introduce no independent inputs or additional sessions.
\end{lemma}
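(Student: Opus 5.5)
The plan is to treat the full circuit as a composition of local invocations and track register contents over $\mathbb F_2$ by linearity, just as in the proof of Lemma~\ref{lem:local-interface}. Each invocation, forward or reverse, touches only its own $v$ data pairs and one auxiliary pool. By Lemma~\ref{lem:local-interface} a forward invocation realizes $(x,y,z)\mapsto(x,y+x,z)$ for arbitrary auxiliary initial values. By the reverse-invocation paragraph, a reverse invocation realizes $(x,y,z)\mapsto(x+y,y,z)$, again restoring $z$. So the effect of an invocation on its data registers does not depend on the current contents of its pool, and the pool comes out unchanged. I would check that invocations inside one round act on disjoint register sets. The data pairs are disjoint because, within a copy, the lines along the round's direction partition the $v^3$ grid. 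The pools are disjoint because each round uses every pool $P_i(b,c)$ exactly once, since $\pi$ is a bijection. Hence all invocations in a round commute, and the round acts as a product map.

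Next I fix a data pair $(X_a,Y_a)$ in copy $C_j$. In copy $C_j$, each round touches this pair in exactly one invocation, namely the line through $a$ in that round's direction. The directions are a cyclic shift of $(1,2,3)$, so each coordinate is used once. The pair's updates are therefore forward, reverse, forward, giving
\[
(X,Y)\mapsto(X,Y+X)\mapsto(Y,Y+X)\mapsto(Y,X).
\]
Each auxiliary register is used once per round and restored each time, so it ends at its initial value. Temporary wires only carry copies of values already present in main registers, and the paper notes they add no sources or sessions. So they contribute no inputs.

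For the count, each copy has $v^3$ pairs, giving $2v^3$ data registers per copy and $6v^3$ over the three copies. There are $3v^2$ pools of $R$ registers each. With $v=111$ and $R=439$, this gives
\[
W=6\cdot1367631+3\cdot12321\cdot439=8205786+16226757=24432543.
\]

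The main obstacle is the bookkeeping in the round-to-pool assignment. Round 2 uses indices $(b,\pi(c))$ and round 3 uses $(\pi(b),\pi(c))$. I need to confirm that, for each copy and each round, this assignment is a bijection between the invocations of that copy's step and the pools along the matching coordinate. This follows from the bijectivity of $(b,c)\mapsto(b,\pi(c))$ and $(b,c)\mapsto(\pi(b),\pi(c))$ on $[v]^2$.
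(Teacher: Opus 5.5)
Your proposal is correct and follows the paper's proof. Each data pair goes through forward, reverse, forward updates, giving $(X,Y)\mapsto(Y,X)$; each auxiliary pool is restored after every use; and the count $W=6v^3+3v^2R$ follows directly. You additionally make explicit the disjointness of invocations within a round and the bijectivity of the pool assignments $(b,c)\mapsto(b,\pi(c))$ and $(b,c)\mapsto(\pi(b),\pi(c))$, which the paper's two-line proof takes as already settled by the construction.
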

\begin{proof}
The three-round transformation $(X,Y)\mapsto(Y,X)$ holds for every data pair in each copy, and every auxiliary pool is restored before its next use. The three data copies contribute $6v^3$ inputs and the shared pools contribute $3v^2R$ inputs, giving $W$.
\end{proof}

\FloatBarrier
\subsection{Matrix Labels and Total Edge Length}
\label{sec:finite-labels}

Section~\ref{sec:finite-coding} assigns every gate and input or output endpoint a subspace of the $m=512$ dimensional rational space $\mathcal H$. To apply the routing framework of Section~\ref{sec:overview-circuit}, convert each node's subspace $U_u$ into the orthogonal projection $M_u=P_{U_u}$. This is an $m\times m$ rational matrix with row space $U_u$. The routing bound uses ranks of matrix differences. For $U\subseteq V$,
\begin{equation}
 P_V-P_U=P_{V\cap U^\perp},\qquad
 \rank(P_V-P_U)=\dim V-\dim U.
 \label{eq:nested-projections}
\end{equation}
Thus an edge with nested endpoint labels has projection-difference rank equal to their dimension difference.

We now sum these differences over all edges of $G=(\mathcal V,\mathcal E)$:
\[
 S=\sum_{(u,w)\in\mathcal E}\rank(M_w-M_u)
  =\sum_{(u,w)\in\mathcal E}\lvert\dim U_u-\dim U_w\rvert.
\]
Every edge has nested endpoint subspaces in one order or the other. Along a data or auxiliary main wire, labels grow as $U_0\subseteq U_1\subseteq\cdots\subseteq U_K$. Summing the successive differences telescopes to $\dim U_{\mathrm{out}}-\dim U_{\mathrm{in}}$. At an XOR gate, all participating registers have the same label, so no additional length arises there. Each local invocation also has nine temporary wires. In a forward invocation, each wire's nonzero cost is the local dimension change from seven to zero; in a reverse invocation, it is from one to eight. The lifts $\phi_j$ preserve inclusion and these differences, each equal to seven, while all other segments retain the same label. Hence every invocation contributes temporary-wire cost $c=9\times7=63$.

The contributions are summarized below.
\begin{center}
\renewcommand{\arraystretch}{1.3}
\begin{tabularx}{\linewidth}{@{}>{\hsize=1.0\hsize}Y>{\hsize=0.7\hsize}Y>{\hsize=0.9\hsize}Y>{\hsize=1.4\hsize}Y@{}}
\toprule
Component & Number & Cost per item & Saving relative to $Wm$\\
\midrule
Data main wires & $6v^3$ & $m-1=511$ & Save $1$ each; total saving $6v^3=8{,}205{,}786$\\
Auxiliary main wires & $3v^2R$ & $m=512$ & No difference\\
Temporary wires, grouped by invocation & $9v^2$ invocations & $c=63$ per invocation & Extra cost $9v^2c=6{,}986{,}007$\\
\bottomrule
\end{tabularx}
\end{center}

\begin{lemma}[Total edge length and ranks of source--sink matrix differences]
\label{lem:finite-labels}
The ranks of the source--sink matrix differences over all $W=6v^3+3v^2R$ independent sessions sum to
\[
 Wm=3v^2(2v+R)m=12{,}509{,}462{,}016.
\]
The sum of endpoint dimension differences over all network edges is
\begin{equation}
\begin{aligned}
 S&=6v^3(m-1)+3v^2Rm+9v^2c\\
  &=Wm-3v^2(2v-3c)=12{,}508{,}242{,}237<Wm.
\end{aligned}
\label{eq:network-rank-budget}
\end{equation}
For every session $s_i\to t_{\sigma(i)}$, the endpoint spaces satisfy $\dim(U_{s_i}+U_{t_{\sigma(i)}})=m=512$, and their matrix difference has full rank:
\begin{equation}
 \rank_{\mathbb Q}(M_{t_{\sigma(i)}}-M_{s_i})=m.
 \label{eq:network-endpoints}
\end{equation}
\end{lemma}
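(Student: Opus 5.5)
The proof has three parts: the arithmetic identities, the edge-length sum $S$, and the full-rank condition for each session. For the identity $Wm=3v^2(2v+R)m$ we use $W=6v^3+3v^2R$ from Lemma~\ref{lem:finite-coding} and then evaluate numerically with $v=111$, $R=439$, $m=512$.

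\textbf{Computing $S$.} By~\eqref{eq:nested-projections}, each edge with nested endpoint labels contributes the difference of endpoint dimensions. We classify edges into three kinds.
\begin{itemize}
\item \emph{Main-wire segments.} Along each main wire the labels form an inclusion chain. This holds inside each invocation by Lemma~\ref{lem:local-interface}, transported through the lifts $\phi_j$, which preserve inclusion because the fixed summand is orthogonal to the varying one. It also holds between rounds: for data registers by the stage table, and for auxiliary pools by the chain~\eqref{eq:helper-chain}, whose steps use $\ell_t\perp\ell_{\pi(t)}$. So each wire telescopes to $\dim U_{\mathrm{out}}-\dim U_{\mathrm{in}}$. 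A data wire contributes $512-1=511$ for $X_a$ and $511-0=511$ for $Y_a$. An auxiliary wire contributes $512-0=512$. Gates add nothing, since all operands carry equal labels.
\item \emph{Temporary wires.} Each of the $9v^2$ invocations (three rounds, $v^2$ per round per copy, three copies) contributes $c=63$. This comes from Lemma~\ref{lem:local-interface} and the reverse-invocation discussion, with the lifts preserving differences.
\end{itemize}
Summing gives $6v^3(m-1)+3v^2Rm+9v^2c$. Subtracting from $Wm$ leaves $6v^3-9v^2c=3v^2(2v-3c)$, which is positive because $222>189$. The numerical value then follows.

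\textbf{Full rank per session.} Auxiliary sessions are immediate: the sink label is $\mathcal H$ and the source label is $0$, so $M_t-M_s=I$. For a data pair in any copy, the exchange means the session starting at $X_a$ ends at the output of $Y_a$, and vice versa. The endpoint labels are therefore $U_a$ and $U_a^\perp$ in one order or the other, up to the copy's coordinate permutation, which only reorders tensor factors. These are complementary orthogonal subspaces, so $P_{U_a^\perp}-P_{U_a}=I-2P_{U_a}$, which is an involution and hence invertible. Its rank is $m$, and $\dim(U_a+U_a^\perp)=m$.

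\textbf{Main obstacle.} The delicate step is confirming that every edge really has nested endpoints, especially at round boundaries where labels are "increased" before the next invocation. I would treat each such increase as its own edge segment, or merge it into the adjacent segment, so that the chain still telescopes. For data registers, I must also check $\phi_{j+1}(\ell_{a})\supseteq$ the previous endpoint, for example $\ell_{a_1}^\perp\otimes E_2\otimes\ell_{a_3}\oplus\ell_{a_1}\otimes\ell_{a_2}\otimes\ell_{a_3}\supseteq E_1\otimes\ell_{a_2}\otimes\ell_{a_3}$. This holds because $E_1=\ell_{a_1}\oplus\ell_{a_1}^\perp$, and round 3 is verified the same way. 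Everything else is bookkeeping.
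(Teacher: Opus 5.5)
Your proposal is correct and follows the paper's proof. You telescope the nested labels along each main wire ($m-1$ per data wire, $m$ per auxiliary wire), add $c=63$ for each of the $9v^2$ invocations, and get full rank from $P_{U_a^\perp}-P_{U_a}=I-2P_{U_a}$ and from $P_{\mathcal H}-P_0=I$.

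One correction: the session starting at $Y_a$ ends at the output of $X_a$, so its endpoint labels are $0$ and $\mathcal H$, not $U_a^\perp$ and $U_a$ ``in the other order.'' Its matrix difference is therefore $I$, and the conclusion is unaffected.
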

\begin{proof}
The three copies have $3v^3$ data pairs, hence $6v^3$ main wires. Each $X_a$ label grows from the line $U_a$ to $\mathcal H$, a dimension increase of $m-1$. Each $Y_a$ label grows from zero to $U_a^\perp$, also an increase of $m-1$. Their total contribution is $6v^3(m-1)$. Since the circuit exchanges the initial values of $X_a,Y_a$, the source label $U_a$ of input $X_a$ is paired with the output label $U_a^\perp$ of $Y_a$. These spaces sum to $\mathcal H$. Their projection difference is $I_m-2P_{U_a}$, where $I_m$ is the identity on $\mathcal H$; its square is $I_m$, so it is invertible and has rank $m$. Conversely, input $Y_a$ has label zero and exits at $X_a$ with label $\mathcal H$, giving full-dimensional sum and matrix difference $I_m$.

The $3v^2$ shared auxiliary pools contain $3v^2R$ main wires. By~\eqref{eq:helper-chain}, each grows from zero to $\mathcal H$, contributing $m$, for a total $3v^2Rm$. Each auxiliary initial value is restored on its own wire, so its source--sink spaces are zero and $\mathcal H$, again with full-dimensional sum and matrix difference $I_m$. Finally, each of the three copies executes three rounds of $v^2$ invocations, for $9v^2$ invocations in total. Their temporary-wire dimension differences are preserved by the lifts and contribute $c=63$ per invocation, or $9v^2c$ altogether. Summing gives $S=Wm-3v^2(2v-3c)$. For $v=111$ and $c=63$, we have $2v-3c=33>0$, so the net saving is $33\times3v^2=1{,}219{,}779>0$ and $S<Wm$.
\end{proof}

The fractional reduction of total edge length $S$ relative to $Wm$ is therefore
\begin{equation}
 \eta:=1-\frac{S}{Wm}=\frac{33}{338{,}432}>0,
 \label{eq:finite-saving}
\end{equation}
which is the gap in Theorem~\ref{thm:network}.

\subsection{Proof of Theorem~\ref{thm:network}}
\label{sec:network}
\label{sec:finite-saving}

\begin{proof}[Proof of Theorem~\ref{thm:network}]
Consider the undirected graph $G=(\mathcal V,\mathcal E)$ generated by the full circuit and its $W=24{,}432{,}543$ independent sessions. Every main-wire segment and every segment of a temporary wire corresponds to a unit-capacity undirected edge. Transmitting one bit along each edge in circuit order and executing the XOR operations over $\mathbb F_2$ recovers every source bit at its designated sink by Lemma~\ref{lem:finite-coding}.

Pipelining independent executions makes the common coding rate arbitrarily close to one. Let one execution take a fixed $D$ rounds. Give every source $k$ independent bits, and delay execution of batch $j$ by $j$ rounds for $j=0,\dots,k-1$. Each edge transmits only once in each batch, so it carries at most one bit per round in the pipeline, while each batch retains its causal order. All messages arrive within $k+D$ rounds, giving common rate at least $k/(k+D)$. Letting $k\to\infty$ yields $R_{\mathrm{code}}\ge1$. The unit-capacity cut at any source leaf gives $R_{\mathrm{code}}\le1$, so $R_{\mathrm{code}}=1$.

For routing, suppose a feasible fractional scheme has common rate $\lambda$. Each session $s_i\to t_{\sigma(i)}$ has total path flow $\lambda$, and the sum of all sessions' flows through any undirected edge is at most its unit capacity. Set $\Delta(e)=\rank(M_w-M_u)=\lvert\dim U_u-\dim U_w\rvert$ for $e=uw$. By Lemma~\ref{lem:finite-labels}, every session has source--sink matrix difference of rank $m$. Along any such path $p=(u_0,\dots,u_q)$, the successive differences telescope, and rank subadditivity gives
\[
 m=\rank\!\left(\sum_{j=1}^q(M_{u_j}-M_{u_{j-1}})\right)
 \le\sum_{j=1}^q\rank(M_{u_j}-M_{u_{j-1}})
 =\sum_{e\in p}\Delta(e).
\]
Reversing an edge changes only the sign of its matrix difference, not its rank. Weight this inequality by session $i$'s path flow $f_i(p)$, sum over all sessions, exchange the summations, and use the edge capacities:
\[
\begin{aligned}
 Wm\lambda
 &\le\sum_{i\in[W]}\sum_{p\in\mathcal P_i}f_i(p)\sum_{e\in p}\Delta(e)\\
 &=\sum_{e\in\mathcal E}\Delta(e)
    \sum_{\substack{i\in[W],\ p\in\mathcal P_i\\e\in p}}f_i(p)
 \le\sum_{e\in\mathcal E}\Delta(e)=S.
\end{aligned}
\]
Thus every feasible common routing rate satisfies
\begin{equation}
 \lambda\le\rho:=\frac{S}{Wm}
 =1-\frac{33}{338{,}432}<1.
 \label{eq:network-route-bound}
\end{equation}

To remove parallel edges, insert a distinct new vertex $x_e$ into every edge $uw$, with $M_{x_e}=M_u$. Both replacement edges have capacity one, and their lengths sum to the original edge length. The new vertices only forward information, so the pipeline still has a fixed startup delay and the routing bound is unchanged. The resulting graph has no loops or parallel edges and retains the routing bound and $R_{\mathrm{code}}=1$. Continue to write $D$ for the startup delay after subdivision. Choosing
\begin{equation}
 k>\frac{D\rho}{1-\rho}
 \label{eq:network-blocklength}
\end{equation}
gives a finite-round binary linear protocol with common rate $k/(k+D)>\rho$. Hence a coding-achievable rate strictly exceeds the maximum fractional routing rate, and
\[
 R_{\mathrm{route}}\le1-\frac{33}{338{,}432}<1=R_{\mathrm{code}}.
\]
This proves the theorem.
\end{proof}

The next section uses $S<Wm$ for the same circuit to reduce the total data volume processed by recursive address-interchange subproblems. The inequality $S/W<m^{1-\varepsilon}$ below determines the recursion exponent.

\begin{corollary}[Fixed parameters for the recursion]
\label{prop:finite-circuit}
\label{lem:refined-exponent}
Take $m=512$, $W=24{,}432{,}543$, $S=12{,}508{,}242{,}237$, $\varepsilon=1.5\times10^{-5}$, and $\tau=1-\varepsilon$. The circuit permutes the input bits and has the matrix-label properties proved above, and
\begin{equation}
 \eta=\frac{33}{338432},\qquad \frac SW<m^\tau.
 \label{eq:finite-gap}
\end{equation}
\end{corollary}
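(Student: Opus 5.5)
The plan is to read off the two circuit properties from Lemmas~\ref{lem:finite-coding} and~\ref{lem:finite-labels}, and to prove the two numerical claims in~\eqref{eq:finite-gap} by exact arithmetic plus one elementary logarithm estimate. The permutation property is exactly Lemma~\ref{lem:finite-coding}. The matrix-label properties are Lemma~\ref{lem:finite-labels}: nested labels along every wire, equal labels at every gate, full-rank source--sink differences, and $S<Wm$.

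\textbf{The value of $\eta$.} From~\eqref{eq:network-rank-budget}, $Wm-S=3v^2(2v-3c)=3v^2\cdot33$. Also $Wm=3v^2(2v+R)m$ with $2v+R=661$. Hence
\[
 \eta=1-\frac{S}{Wm}=\frac{33}{(2v+R)m}=\frac{33}{661\cdot512}=\frac{33}{338432},
\]
since $661\cdot512=338432$. This agrees with~\eqref{eq:finite-saving}.

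\textbf{The inequality $S/W<m^\tau$.} Write $S/W=m(1-\eta)$ and $m^\tau=m\cdot2^{-9\varepsilon}$, using $m=2^9$. The claim is therefore equivalent to $1-\eta<2^{-9\varepsilon}$. Taking natural logarithms, this is
\[
 -\ln(1-\eta)>9\varepsilon\ln2 .
\]
Because $-\ln(1-\eta)\ge\eta$ for $0\le\eta<1$, it suffices to show
\[
 \eta>9\varepsilon\ln2 .
\]
The direction matters: the required inequality is $\eta>9\varepsilon\ln 2$, not the reverse.

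\textbf{Numerical check.} On the left, $\eta=33/338432\approx9.751\times10^{-5}$. On the right, $9\varepsilon=1.35\times10^{-4}$ and $\ln2<0.6932$, so $9\varepsilon\ln2<9.359\times10^{-5}$. For a fully rational certificate, I would use $\ln2<0.6932$ and verify $33\cdot10^{4}>338432\cdot1.35\cdot0.6932$. The left side is $330000$ and the right side is about $316709$, so the inequality holds.

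As a sanity check in absolute terms, $S/W\approx512-0.04992=511.950$ and $m^\tau\approx512(1-9.357\times10^{-5})\approx511.952$. The ordering is consistent, but the margin is small.

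\textbf{Main obstacle.} The only delicate point is this thin margin, about $4\%$ relative in $\eta$ versus $9\varepsilon\ln2$. I would therefore keep every step exact or one-sided: use the rigorous inequality $-\ln(1-\eta)\ge\eta$ and the rational upper bound on $\ln2$. This avoids rounding errors that could flip the comparison.
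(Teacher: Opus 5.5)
Your proposal is correct and follows essentially the paper's route: it reduces $S/W<m^\tau$ via $1-\eta\le e^{-\eta}$ to the key inequality $\varepsilon\ln m<\eta$ and then certifies that inequality numerically. The only difference is in the numerical certificate. The paper derives $\ln 2<7/10$ from a truncated exponential series ($e^{7/10}>2$), which already suffices because $\frac{3}{200000}\cdot\frac{63}{10}<\frac{33}{338432}$. Your sharper bound $\ln 2<0.6932$ is therefore not needed, and in your proposal it is asserted rather than proved.
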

\begin{proof}
Appendix~\ref{app:parameters} proves $\varepsilon\ln m<\eta$ by rational comparisons. Since $1-x<e^{-x}$ for $x>0$,
$S/W=m(1-\eta)<m\exp(-\varepsilon\ln m)=m^{1-\varepsilon}$.
\end{proof}

\section{Address-Field Interchange}
\label{sec:memory-cutoff-interchange}
\label{sec:interchange}

In the Columnsort framework introduced in Section~\ref{sec:overview-rearrangement}, rearrangement between columns preserves array values and computes each element's new storage position from its current row and column indices. When the array size is a power of two, each position can be represented by a binary address. Together with sequential scans, the fixed row-and-column rearrangements used here reduce to a constant number of \textbf{interchanges of two equal-width fields in a binary address}: two contiguous groups of address bits exchange places, while all other bits remain fixed. This section uses the finite linear circuit constructed in Section~\ref{sec:circuit-framework} to accelerate this basic operation. Section~\ref{sec:sorting} combines it with sequential scans and in-memory sorting to sort at most $M^2$ records; Section~\ref{sec:global-sorting} extends this to arbitrary input sizes.

\subsection{The interchange operation}
\label{sec:interchange-operation}

Consider a bit array stored contiguously on disk. Represent each bit position by five coordinates, ordered from most to least significant. The two coordinates to be interchanged are represented by $u$-bit binary fields $t,d\in[2^u]$. The remaining coordinates fall into three parts in address order: the higher-order coordinate before the two fields is the \textbf{prefix} $p\in[P]$; the coordinate between them is the \textbf{intervening coordinate} $g\in[G]$; and the lower-order coordinate after them is the \textbf{suffix} $z\in[\ell]$. Here $P,G,\ell$ may be arbitrary positive integers, not necessarily powers of two. Thus every bit position is uniquely represented by a tuple
\[
 (p,t,g,d,z)\in[P]\times[2^u]\times[G]\times[2^u]\times[\ell].
\]
The array contains $V=PG\ell2^{2u}$ bits, stored on disk in lexicographic order of $(p,t,g,d,z)$, with the rightmost suffix coordinate $z$ varying fastest.

An \textbf{address-field interchange} moves the bit at $(p,t,g,d,z)$ to $(p,d,g,t,z)$: it interchanges the values of $t$ and $d$ while leaving the prefix $p$, intervening coordinate $g$, and suffix $z$ unchanged. Fixing a prefix $p\in[P]$ selects a contiguous subarray containing all values of $(t,g,d,z)$, called a \textbf{prefix segment}. The number $\ell$ of suffix values need not equal the word width $w$: the interchange may move only some bits within an integer, whereas if $z\in[\ell]$ contains all bits of an input element, leaving $z$ unchanged moves the element as a whole. Figure~\ref{fig:address-exchange} illustrates field width $u=2$: the element at $(p,01_2,g,10_2,z)$ moves to $(p,10_2,g,01_2,z)$, retaining its value $37$. Below, a \textbf{stream} means a sequence of bits stored on disk.

\begin{figure}[!htbp]
\centering
\begin{tikzpicture}[font=\normalsize]
\node[smalltag] at (5.5,.70) {Address};
\node[smalltag] at (12,.70) {Record value};
\node[datacell,minimum width=1.75cm,minimum height=.65cm,fill=white] (a0f0) at (1.8,0.0) {$ p $};
\node[datacell,minimum width=1.75cm,minimum height=.65cm,fill=lightblue] (a0f1) at (3.7,0.0) {$ 01 $};
\node[datacell,minimum width=1.75cm,minimum height=.65cm,fill=white] (a0f2) at (5.6,0.0) {$ g $};
\node[datacell,minimum width=1.75cm,minimum height=.65cm,fill=warmfill] (a0f3) at (7.5,0.0) {$ 10 $};
\node[datacell,minimum width=1.75cm,minimum height=.65cm,fill=white] (a0f4) at (9.4,0.0) {$ z $};
\node[datacell,minimum width=.85cm,minimum height=.65cm,fill=greenfill] (v0) at (12,0.0) {$37$};
\node[datacell,minimum width=1.75cm,minimum height=.65cm,fill=white] (a1f0) at (1.8,-1.85) {$ p $};
\node[datacell,minimum width=1.75cm,minimum height=.65cm,fill=warmfill] (a1f1) at (3.7,-1.85) {$ 10 $};
\node[datacell,minimum width=1.75cm,minimum height=.65cm,fill=white] (a1f2) at (5.6,-1.85) {$ g $};
\node[datacell,minimum width=1.75cm,minimum height=.65cm,fill=lightblue] (a1f3) at (7.5,-1.85) {$ 01 $};
\node[datacell,minimum width=1.75cm,minimum height=.65cm,fill=white] (a1f4) at (9.4,-1.85) {$ z $};
\node[datacell,minimum width=.85cm,minimum height=.65cm,fill=greenfill] (v1) at (12,-1.85) {$37$};
\draw[figarrow,draw=gray!50,densely dashed] (a0f0.south) -- (a1f0.north);
\draw[figarrow,draw=gray!50,densely dashed] (a0f2.south) -- (a1f2.north);
\draw[figarrow,draw=gray!50,densely dashed] (a0f4.south) -- (a1f4.north);
\draw[figarrow] (a0f1.south) -- (a1f3.north);
\draw[figarrow,draw=warmink] (a0f3.south) -- (a1f1.north);
\draw[figarrow,draw=greenink] (v0.south) -- (v1.north);
\end{tikzpicture}
\caption{Interchanging two address fields preserves the record value. The fields are $01$ and $10$; the other coordinates $p,g,z$ remain unchanged.}
\label{fig:address-exchange}
\end{figure}

Throughout this section, the array dimensions are bounded by a fixed polynomial in the global input size $N$, so dimension parameters and disk addresses occupy a constant number of $w$-bit machine words.

Write $a=\log(M/B)$ and $\tau=1-\varepsilon$, where $\varepsilon=1.5\times10^{-5}$ is a constant. Here $M/B$ is the number of disk blocks that fit simultaneously in memory, so $a=\log(M/B)$ measures the field width for which the memory buffers can perform an interchange directly.

\begin{lemma}[Block transfers for address-field interchange]
\label{lem:memory-cutoff-interchange}
\label{lem:memory-cutoff-swap}
There is an absolute constant $C_{\rm abs}$ such that, for $M\ge\max\{2B,C_{\rm abs}\}$, an address-field interchange of width $u\ge1$ on an array of $V=PG\ell2^{2u}$ bits takes at most
\begin{equation}
 O\!\left(1+\frac{V}{Bw}
       \left[1+\left(\frac{u}{a}\right)^\tau\right]\right)
 \label{eq:memory-cutoff-interchange}
\end{equation}
block transfers and uses at most $M$ words of memory. The algorithm is deterministic, and its execution steps are determined entirely by the array dimensions, independently of the stored values. Its disk workspace is polynomial in $N$, and all dimension parameters and disk addresses occupy a constant number of machine words of width $w=\Theta(\log N)$.
\end{lemma}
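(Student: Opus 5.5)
We prove the bound by recursion on the field width $u$, using the circuit of Corollary~\ref{prop:finite-circuit} as the recursion template. There are three ingredients: a base case for narrow fields, a one-level reduction that replaces one width-$u$ interchange by $S$ interchanges of width about $u/m$ on streams of $V/W$ bits, and an unrolled cost recurrence.

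\textbf{Base case.} Take $u\le c_0 a$ for a small constant $c_0$, so that $2^{2u}\le (M/B)^{2c_0}$. We perform the interchange directly by a constant number of sequential passes, distinguishing two regimes.
\begin{itemize}
\item If the block $(g,d,z)$ for fixed $(p,t)$ is at least $Bw$ bits, the interchange is a blocked transposition of a $2^u\times 2^u$ grid inside each prefix segment. We keep one output buffer per value of the destination field; there are $2^u\le M/(2B)$ such buffers, so they fit in memory.
\item If $G\ell$ is small, one destination block's worth of data gathers from at most $2^{2u}$ contiguous runs. We read these with $2^u$ input buffers.
\end{itemize}
In both regimes the cost is $O(1+V/(Bw))$. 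We fix $c_0$ so that all buffers occupy at most $M$ words, and $C_{\rm abs}$ absorbs rounding and the constant-word bookkeeping for addresses.

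\textbf{Recursive step ($u>c_0a$).}
\begin{enumerate}
\item Write $u=mu'+r$ with $0\le r<m$. Split $t,d$ into $m$ subfields of width $u'$, plus a residual of width $r$. By Appendix-style padding/rounding we handle the residual with $O(1)$ extra base-case-like passes, or by absorbing it into the suffix/prefix through a constant number of extra interchanges.
\item Partition the array among the $W$ circuit inputs by splitting the prefix, padding $P$ up to a multiple of $W$ at constant-factor cost when $P<W$. If $P$ is too small, first borrow high bits of $t$ via one base-level reshuffle. This is the general-prefix-count issue the paper defers to an appendix.
\item Run the gates in order. Each XOR gate is a sequential scan over streams that share a storage order, and it costs $O(V/(WBw)+1)$ per participating stream.
\item Each edge of length $r_e$ with nested labels $U\subseteq U'$ requires the partial swap by $P_{U'}-P_U=P_{U'\cap U^\perp}$. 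This is the step that needs real work: realizing a rational partial swap by exactly $r_e$ exchanges of width-$u'$ subfields plus $O(1)$ scans. My plan is as follows.
 \begin{itemize}
 \item Choose an orthonormal-type basis adapted to the flag of subspaces along each wire.
 \item Realize ``change of basis'' on address coordinates as a fixed bijective relabeling, implemented as an in-block map computed by scans. This works only if the rational matrices act as bijections of $[2^{u'}]^m$; if not, I would instead perform the arithmetic on subfield values modulo the appropriate base (as Section~\ref{sec:coding-recursion} suggests).
 \item The partial swap then becomes $r_e$ coordinatewise swaps of $H_j\leftrightarrow D_j$, each of which is a genuine width-$u'$ address-field interchange with some coordinates folded into $p,g,z$.
 \end{itemize}
\item Correctness follows from Lemma~\ref{lem:finite-coding} and Lemma~\ref{lem:finite-labels}: the source and sink labels compose to the full swap $P=I$, and a final scan undoes the pairing $\sigma$.
\end{enumerate}

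\textbf{Cost recurrence.} Let $T(V,u)$ be the cost. The recursive step gives
\[ T(V,u)\le S\,T(V/W,\lceil u/m\rceil)+O(|\mathcal E|+\mathrm{gates})\bigl(1+V/(WBw)\bigr). \]
Unrolling $k=\log_m(u/(c_0a))+O(1)$ levels, the leaves contribute $(S/W)^k\cdot V/(Bw)\le m^{\tau k}V/(Bw)=O((u/a)^\tau V/(Bw))$, using $S/W<m^\tau$. The per-level overhead is a geometric-type sum dominated by the same term, since $S/W>1$ makes the overheads grow with depth and the last level dominates.

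The additive $O(1)$ terms multiply by the number of subproblems $S^k$, which is polynomial in $u$. I would control them by stopping recursion once streams drop below $Bw$ bits, so that each subproblem's $+1$ is charged to its own data volume. The one exception is tiny total arrays, which are covered by the leading $O(1)$ term.

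Memory stays at $O(1)$ buffers plus base-case buffers, since the recursion is executed sequentially. Determinism and independence from stored values hold because the schedule depends only on $(P,G,\ell,u,M,B)$. Disk workspace is $O(V)$ times the constant number of live temporary streams per level over constant-size circuit copies, and is therefore polynomial.

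I expect the main obstacle to be step 4: showing that each rational partial swap of rank $r_e$ is exactly $r_e$ subfield interchanges plus $O(1)$ scans, with consistent integer address arithmetic across nested labels.
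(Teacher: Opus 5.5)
Your skeleton matches the paper's: a scan-based base case, one level that trades a width-$u$ interchange for $S$ interchanges of width about $u/m$ on $V/W$ bits, and a leaf-dominated geometric sum using $S/W<m^\tau$. But step~4, which you flagged, fails as planned.

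Conjugating by a change of basis applied to $H$ and $D$ separately, $\mathrm{diag}(A,A)$ with $APA^{-1}$ a $0/1$ diagonal matrix, does turn $D_P$ into swaps $H_j\leftrightarrow D_j$. However, $H\mapsto AH$ is not an in-block map: it permutes the $q^{mf}$ subblocks of a whole $e$-digit field. The address maps the paper realizes with $O(1)$ sequential passes (via the cited OpenAI routine) are lower triangular in significance order: each new short field depends only on itself and more significant fields, and the fields are updated from last to first. A dense, e.g.\ orthonormal-type, $A$ contains steps like $(x_1,x_2)\mapsto(x_1+x_2,x_2)$, which gather each output block from all input blocks, as a transpose does.

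Nor can $A$ be chosen lower triangular. Every column of $A^{-1}$ must be an eigenvector of $P$, and the last column of a lower-triangular $A^{-1}$ is a multiple of $e_m$. So $e_m$ would have to lie in the image or kernel of $P$, which fails in general; already for $m=2$ and $P=\frac12\begin{pmatrix}1&1\\1&1\end{pmatrix}$ it fails. Paying extra interchanges to realize $A$ would exceed $r_e$ per edge and wipe out $S<Wm$, whose margin is only $\eta\approx10^{-4}$.

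The missing idea is to conjugate on all $2m$ short fields at once, letting $D$-coordinates depend on $H$-coordinates across $g$. Split $P$ into orthogonal rank-one projections by unnormalized Gram--Schmidt, so that $D_P$ is a product of $r$ rank-one partial swaps. Each piece, with $P'=uu^{\mathsf T}/(u^{\mathsf T}u)$, is the reflection $I_{2m}-\frac{2}{\nu}\xi\xi^{\mathsf T}$ with $\xi=(u,-u)$ and $\nu=\xi^{\mathsf T}\xi$. The first nonzero index $i$ of $\xi$ lies in $H$ and its last nonzero index $j$ lies in $D$. Therefore the lower-triangular $L$ with columns $L_i=\xi+\alpha e_j$, $L_j=\alpha e_j$, $L_k=e_k-(\xi_k/\xi_j)e_j$ for $i<k<j$, identity elsewhere, and $\alpha=-\nu/(2\xi_j)$, satisfies $D_{P'}=LT_{ij}L^{-1}$. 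Each rank-one piece thus costs one width-$f$ interchange plus two triangular passes, with arithmetic in $\mathbb Z/q^f$ for an odd prime $q$ avoiding all denominators and the diagonal entries $\xi_i,\alpha$. Correspondingly, the binary fields must first be padded from $[2^u]$ to $[q^e]$.

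Two further points break the bound as written. First, padding $P$ to a multiple of $W$ inside every recursive step multiplies the volume by a constant $c>1$ at each of $k=\Theta(\log_m(u/a))$ levels. This turns the leaf total $\lambda^k$ into $(c\lambda)^k$, a factor your recurrence omits. Since $\lambda=m(1-\eta)$, any $c\ge1/(1-\eta)$ already cancels the coding gain entirely. The paper fixes divisibility once, at the top. It moves $\rho=\Theta(k)$ high digits of both fields into the prefix so there are $q^{2\rho}\ge W^k$ segments, then pads once, by a factor below $2$, to a multiple of $W^k$. After that, every child's prefix count $(P/W)q^{f(i-1)}$ stays divisible by the needed power of $W$, and no further padding occurs.

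Second, in your base case with $G>1$, buffers indexed by the destination field alone receive records in order $(t,g,z)$ rather than the required $(g,t,z)$. You need one stream per pair $(t,d)$, i.e.\ $2^{2u}$ streams. Their partial-block overhead is then absorbed by handling $V\le Mw/2$ entirely in memory.
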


Sections~\ref{sec:coding-recursion} and~\ref{sec:recursion-bound} give the full proof of Lemma~\ref{lem:memory-cutoff-interchange} in stages. We draw on OpenAI's method of recursively implementing address-field interchange with a finite linear circuit on a multitape Turing machine~\cite[Proposition~11 and Lemma~12]{OpenAI2026}. Combining it with our 512-dimensional linear circuit from Section~\ref{sec:circuit-framework} brings this method to the external-memory block-transfer model: Section~\ref{sec:coding-recursion} reduces a long-field interchange to interchanges of shorter fields and uses the circuit to reduce the total data volume processed by these subproblems; Section~\ref{sec:recursion-bound} stops the recursion when the fields are narrow enough for the memory buffers to handle directly and bounds the total number of block transfers.

Once memory reaches the fixed threshold $C_{\rm abs}$, Lemma~\ref{lem:memory-cutoff-interchange} requires only $M\ge2B$, whereas scanning multiple disk streams requires a fixed number of block buffers in memory at once. We therefore first assume $M/B\ge R_0$ in Sections~\ref{sec:coding-recursion}--\ref{sec:recursion-bound}, for a sufficiently large fixed constant $R_0$; the constant $C_{\rm abs}$ accommodates the fixed circuit's control information. Lemma~\ref{lem:io-normalization} in Appendix~\ref{app:block-simulation} shows that reducing the logical memory and block sizes by fixed constant factors extends the result to every $M\ge\max\{2B,C_{\rm abs}\}$ without changing the asymptotic number of block transfers.

\subsection{The Coding Recursion}
\label{sec:coding-recursion}

By Corollary~\ref{prop:finite-circuit}, the finite circuit constructed in Section~\ref{sec:circuit-framework} has fixed parameters $m=512$, $W=24{,}432{,}543$, and $S=12{,}508{,}242{,}237$, satisfying
\begin{equation}
 \lambda:=\frac{S}{W}<m^\tau.
 \label{eq:cutoff-rank-deficit}
\end{equation}
The inequality above lets us reduce the total data volume processed by recursive subproblems. Divide the original array equally by prefix among $W$ streams of $V/W$ bits and feed them into the circuit. Changing address order along an edge of rank $r$ requires only $r$ short-field interchanges, each of width $1/m$ of the original field. Summing over all edges reduces one complete field interchange to $S$ short-field interchanges, lowering the total data volume of the subproblems from $mV$ under naive divide-and-conquer to $(S/W)V=\lambda V<m^\tau V$.

Array addresses must be integers, whereas the projection matrices in Section~\ref{sec:circuit-framework} have rational entries $a/b$. During the recursion, we therefore represent the fields in a fixed odd prime radix $q$, so that each short field of $f$ base-$q$ digits corresponds to an element of the residue ring $\mathbb Z/q^f\mathbb Z$. The circuit and the matrix factorizations below involve only finitely many rational numbers. We can therefore choose $q$ in advance so that it divides none of their denominators and none of the numerators of the nonzero diagonal entries of the invertible triangular factors. Interpreting each rational number $a/b$ as the modular-inverse product $ab^{-1}\bmod q^f$ then preserves all matrix identities and invertibility over $\mathbb Z/q^f\mathbb Z$ for every $f\ge1$. Arithmetic modulo $q^f$ applies only to address coordinates: the data stored on disk remain binary bits, and the circuit gates still perform bitwise XOR.

\begin{lemma}[One level of the coding recursion]
\label{lem:coded-recursion}
Suppose the two fields $H,D$ to be interchanged each contain $e=mf$ base-$q$ digits, the number $P$ of prefix segments is divisible by $W$, and the array contains $V$ bits. A complete address-field interchange of width $e$ reduces to $S$ address-field interchanges of width $f$, each processing $V/W$ bits. Apart from these recursive calls, the current level requires only $O(1+V/(Bw))$ additional block transfers and a fixed number of block buffers.
\end{lemma}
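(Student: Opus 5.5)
The plan is to implement the circuit of Corollary~\ref{prop:finite-circuit} on streams, following the overview in Section~\ref{sec:overview-interchange}.

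\textbf{Streams and storage orders.} Write each field as $H=(H_1,\dots,H_m)$ and $D=(D_1,\dots,D_m)$, with every $H_j,D_j\in\mathbb Z/q^f\mathbb Z$; view $H,D$ as vectors in $(\mathbb Z/q^f)^m$. Since $W\mid P$, split the prefix range into $W$ equal contiguous groups and assign group $i$ to source $s_i$. This yields $W$ streams of $V/W$ bits, each of which is itself an array with prefix count $P/W$ and fields $H,D$.

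To each node label $U_u$ attach its projection $P_u=P_{U_u}$, reduced modulo $q^f$. This is legitimate by the choice of $q$: the identities \eqref{eq:nested-projections} and $P^2=P$ survive reduction. A stream is \emph{in order $P$} when the bit with logical address $(p,H,g,D,z)$ is stored at physical address $(p,(I-P)H+PD,g,PH+(I-P)D,z)$. This map is an involution, hence a bijection. Every source starts in order $P_{s_i}$. For an $X_a$ source this is a nonzero order, so each input stream first undergoes one partial swap. Equivalently, the source edge from order $0$ is treated as part of the wire cost; I would fold it into the first edge so that the count stays exactly $S$.

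\textbf{Gates and edges.} At a gate, every operand carries the same label, hence the same storage order. One synchronized scan of a constant number of streams therefore XORs matching positions. Temporary-wire copies are made by a scan. At the last use, or at the output, a stream is freed or kept.

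Along an edge $u\to w$ with $U_u\subseteq U_w$ (or the reverse), the reordering from order $P_u$ to $P_w$ is exactly the partial swap with projection $Q=P_w-P_u=P_{U_w\cap U_u^\perp}$. This step needs checking: the composition of two partial swaps whose projections commute and are nested is a partial swap by the difference. Since $P_uP_w=P_u$, this is a direct block computation. The rank of $Q$ is $r=|\dim U_w-\dim U_u|$.

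\textbf{Implementing a partial swap of rank $r$.} This is the main obstacle. Factor $Q=T\,\mathrm{diag}(1^r,0^{m-r})\,T^{-1}$ with $T$ invertible. I would realize $T$ through an LU-type factorization into permutation and invertible triangular matrices with entries avoiding $q$, as the setup provides. Conjugating by $T$ corresponds to the invertible linear change of coordinates $(H,D)\mapsto(T^{-1}H,T^{-1}D)$. Applied to both fields, this is a map acting only within the $H$ field and within the $D$ field separately.

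Acting within a single field is not free in general, since it permutes subblocks across the whole stream. The key point is that a triangular or unipotent map on the digit blocks of one field can be applied by scans. The paper's appendix presumably handles this in a cache-friendly way, for example by decomposing it into elementary operations $H_j\mathrel{+}=cH_k$. Each such operation permutes records within windows whose size is governed by the field width. If this fails, I would instead absorb $T$ into the order convention, relabeling storage by $T^{-1}H$, so that no data movement is needed. This works because the array's physical layout can be defined in the transformed coordinates, and every subsequent consumer uses the same convention up to a final fixed scan.

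After the change of coordinates, the swap exchanges $r$ coordinate pairs $(H'_j,D'_j)$. Each such pair is a width-$f$ interchange on a stream of $V/W$ bits, with the other digits absorbed into $p$, $g$ and $z$. Summing over all edges gives $\sum r=S$ child calls, as in Lemma~\ref{lem:finite-labels}.

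\textbf{Outputs.} By Lemma~\ref{lem:finite-coding}, every input's bits reach its sink. The source and sink orders combine to the full interchange, because $P_{t}-P_{s}$ reduces to the swap with projection $I$: for $U_a$ and $U_a^\perp$, the $I-2P$ structure gives the total swap. A final scan maps each sink stream back to the prefix group of its source, which concatenates the output correctly.

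\textbf{Cost.} The number of gates, temporary wires and streams is a fixed constant. Every non-recursive step is a constant number of scans over a constant number of streams, each of length $V/W$. This gives $O(1+V/(Bw))$ block transfers using $O(1)$ buffers. The additive $1$ accounts for partial blocks.
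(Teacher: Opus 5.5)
Your stream splitting (contiguous groups instead of the paper's cyclic ones is fine), the gate XORs, the nested-edge conversion $D_{P_w}D_{P_u}=D_{P_w-P_u}$, and the cost accounting all match the paper. Two steps fail, however.

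\textbf{Source orders.} You physically apply a rank-one partial swap to each $X_a$ input to put it in order $P_{U_a}$. You then claim this folds into the first edge ``so that the count stays exactly $S$''. It does not. The merged conversion from order $0$ to the first gate's label $U_1\supseteq U_a$ has rank $\dim U_1$, not $\dim U_1-1$. So every $X_a$ wire costs $m$ instead of $m-1$, adding $3v^3$ child calls. Since $v<3c$, we have $3v^3>Wm-S=3v^2(2v-3c)$. This erases the whole saving, and the level's child count exceeds the naive $Wm$.

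The preprocessing also contradicts your output step. If the circuit input $F_i$ is stored in order $P_{U_a}$, its sink holds $D_{I-P_{U_a}}F_i$, not $D_IF_i$, so the identity $D_{I-P}D_P=D_I$ does not apply. The missing idea is to do no preprocessing at all: declare the circuit's input to be $G_i:=D_{M_{s_i}}F_i$. Since $D_{M_{s_i}}^2=I$, the untouched disk stream $F_i$ already is $G_i$ stored in order $M_{s_i}$. The XOR circuit returns $G_i$ at $t_{\sigma(i)}$ in order $M_{t_{\sigma(i)}}$, so the disk holds $D_{M_{t_{\sigma(i)}}}D_{M_{s_i}}F_i=D_IF_i$. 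This free source order is exactly where the data-pair saving is realized.

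\textbf{Implementing a rank-$r$ partial swap.} Conjugating by the block-diagonal map $T^{-1}\oplus T^{-1}$ costs $O(1)$ scans only if $T^{\pm1}$ is lower triangular in digit-significance order. An update $H_j\mathrel{+}=cH_k$ with $H_k$ more significant than $H_j$ permutes sub-blocks inside contiguous windows. With $H_k$ less significant, it is a shear in the other direction, and sandwiching one such shear between two cheap ones already yields a transposition up to sign, i.e.\ an interchange. Permutation factors likewise reorder short fields and so cost interchanges.

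A lower-triangular $T$ with every column in $\operatorname{im}Q$ or $\ker Q$ need not exist, because its last column is a multiple of $e_m$. For example, this fails for $m=2$ and $Q$ the projection onto $(1,1)$. Absorbing $T$ into the layout convention does not rescue this. Different edges need different, non-commuting $T$'s, all operands at a gate must share one physical layout, and undoing the convention at the outputs is again a nontrivial rearrangement.

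The paper proceeds as follows:
\begin{itemize}
\item It splits $Q$ by unnormalized Gram--Schmidt into $r$ orthogonal rank-one projections, so $D_Q=D_{P_1}\cdots D_{P_r}$.
\item It writes each $D_{P_k}$ as the reflection $I_{2m}-\frac{2}{\nu}\xi\xi^{\mathsf T}$ with $\xi=(u,-u)$.
\item It conjugates this reflection to a single transposition, $D_{P_k}=LT_{ij}L^{-1}$. Here $L$ is lower triangular on all $2m$ coordinates and mixes $H$ into $D$, with $i\in H$ and $j\in D$ the first and last nonzero entries of $\xi$.
\end{itemize}
Triangularity over the whole $(H,D)$ vector is what lets $L^{\pm1}$ be applied in linear time by OpenAI's routine, computing new coordinates from last to first.
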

\begin{proof}
Divide each long field of width $e=mf$ into $m$ short fields of width $f$, so that $H$ and $D$ are $m$-dimensional column vectors over the ring $\mathbb Z/q^f\mathbb Z$. Throughout the following operations, the intervening coordinate $g$ and the outer coordinates $p,z$ remain unchanged.

\paragraph{Partial swaps.}
For the orthogonal projection matrix $P$ onto a rational subspace, define the following linear permutation $D_P$ of the address vector $(H,D)$, called the partial swap associated with $P$:
\begin{equation}
 D_P\begin{pmatrix}H\\D\end{pmatrix}
 =\begin{pmatrix}I-P&P\\P&I-P\end{pmatrix}\begin{pmatrix}H\\D\end{pmatrix}
 =\begin{pmatrix}(I-P)H+PD\\PH+(I-P)D\end{pmatrix}.
 \label{eq:partial-swap}
\end{equation}
This transformation exchanges only the components of $H$ and $D$ projected onto the image of $P$, leaving their components in its orthogonal complement unchanged. Since an orthogonal projection satisfies $P^2=P$, direct matrix multiplication gives $D_P^2=I$. In particular, $D_0=I$ leaves the address unchanged, whereas $D_I(H,D)=(D,H)$ completely interchanges $H$ and $D$. If two subspaces satisfy $U\subseteq V$ and have orthogonal projections $P_U$ and $P_V$, respectively, then $P_UP_V=P_VP_U=P_U$, which gives
\[
 D_{P_V}D_{P_U}=D_{P_V-P_U}.
\]
Since every partial swap is its own inverse, the reverse transition $D_{P_U}D_{P_V}$ also equals $D_{P_V-P_U}$. Thus, when the subspaces at the endpoints of a circuit edge are nested, changing a stream's disk storage order from one endpoint's projection label to the other's requires just one partial swap defined by their projection difference $P_V-P_U$. The rank $\rank(P_V-P_U)=\dim V-\dim U$ is exactly the subspace dimension difference counted in the sum $S$ in Section~\ref{sec:circuit-framework}.

\paragraph{A rank-$r$ projection requires $r$ short-field interchanges.}
Let $P$ be the projection difference on an edge, with a rational image subspace of dimension $r=\rank P$. Applying Gram--Schmidt to a rational basis of this subspace, but omitting the final normalization by vector norms, yields pairwise orthogonal nonzero rational basis vectors $u_1,\dots,u_r\in\mathbb Q^m$. For each basis vector $u_k$, the matrix of the orthogonal projection onto the line it spans is
\[
 P_k=\frac{u_ku_k^{\mathsf T}}{u_k^{\mathsf T}u_k}.
\]
The denominator $u_k^{\mathsf T}u_k$ is a sum of squares of rational numbers and requires no square root, so each $P_k$ is a rational orthogonal projection of rank one. The pairwise orthogonality of $u_1,\dots,u_r$ gives $u_k^{\mathsf T}u_l=0$ when $k\ne l$, and hence $P_kP_l=0$. Moreover, the projection onto the entire $r$-dimensional subspace is the sum of the projections onto these $r$ orthogonal directions: $P=P_1+P_2+\dots+P_r$. Substituting $P_kP_l=0$ into~\eqref{eq:partial-swap} and multiplying directly shows that composing the partial swaps defined by these rank-one projections exchanges the components in all of these orthogonal directions:
\[
 D_P=D_{P_1}D_{P_2}\cdots D_{P_r}.
\]
Thus, when $\rank P=r$, $D_P$ decomposes into $r$ partial swaps, each exchanging the components in one orthogonal direction. For a nonzero rational vector $u\in\mathbb Q^m$, let $P=uu^{\mathsf T}/(u^{\mathsf T}u)$ be the projection onto the one-dimensional subspace it spans. We now show how one short-field interchange and a constant number of sequential scans implement the corresponding partial swap $D_P$.

For this purpose, set $\xi=(u,-u)^{\mathsf T}\in\mathbb Q^{2m}$ and write $\nu=\xi^{\mathsf T}\xi=2u^{\mathsf T}u$, a nonzero rational number. Substituting $P=\frac{uu^{\mathsf T}}{u^{\mathsf T}u}=\frac{2}{\nu}uu^{\mathsf T}$ into the block matrix in~\eqref{eq:partial-swap} rewrites the $2m\times2m$ matrix $D_P$ as the reflection in $\mathbb Q^{2m}$ with normal vector $\xi$:
\[
 D_P=\begin{pmatrix}
 I-\frac{2}{\nu}uu^{\mathsf T}&\frac{2}{\nu}uu^{\mathsf T}\\
 \frac{2}{\nu}uu^{\mathsf T}&I-\frac{2}{\nu}uu^{\mathsf T}
 \end{pmatrix}
 =I_{2m}-\frac{2}{\nu}\xi\xi^{\mathsf T}.
\]
For any vector $v\in\mathbb Q^{2m}$ orthogonal to $\xi$, we have $\xi^{\mathsf T}v=0$ and therefore $D_Pv=v$, so the transformation fixes $v$. For $\xi$ itself, the identity $\xi^{\mathsf T}\xi=\nu$ gives $D_P\xi=-\xi$.

Using this property, we can construct an invertible lower-triangular coordinate transformation $L$ that turns the reflection $D_P$ into a simple exchange of two coordinate positions. Since $u\ne0$, the vector $\xi=(u,-u)^{\mathsf T}$ has a nonzero entry among its first $m$ coordinates and among its last $m$ coordinates. Let $i$ and $j$ be the indices of its first and last nonzero entries, respectively. Thus $i$ lies in the first field $H$, and $j$ lies in the second field $D$. Set $\alpha=-\nu/(2\xi_j)$, and let $e_k$ denote the $k$th standard basis vector of $\mathbb Q^{2m}$. Define the columns of the $2m\times2m$ matrix $L$ by
\[
 L_i=\xi+\alpha e_j,\qquad L_j=\alpha e_j,\qquad
 L_k=e_k-\frac{\xi_k}{\xi_j}e_j\quad(i<k<j),
\]
and set $L_k=e_k$ for the remaining indices $k<i$ or $k>j$. Since $\xi$ vanishes before index $i$ and after index $j$, each column $L_k$ has zero entries above row $k$, making $L$ lower triangular. Its diagonal entries at rows $i$ and $j$ are the nonzero rationals $\xi_i$ and $\alpha$, respectively, and all other diagonal entries are $1$, so $L$ is invertible.

We now check the effect of left multiplication by $D_P$ on each column of $L$. For column $L_j=\alpha e_j$, the equality $\xi^{\mathsf T}(\alpha e_j)=\alpha\xi_j=-\nu/2$ gives
\[
 D_PL_j=\alpha e_j-\frac{2}{\nu}\xi\left(-\frac{\nu}{2}\right)
 =\alpha e_j+\xi=L_i.
\]
The identity $D_P^2=I$ then immediately gives $D_PL_i=L_j$. Thus, $D_P$ exchanges columns $i$ and $j$ of $L$. For every other column $L_k$, direct evaluation of the inner product gives $\xi^{\mathsf T}L_k=0$, and hence $D_PL_k=L_k$. Therefore, if $T_{ij}$ denotes the permutation matrix that directly exchanges coordinates $i$ and $j$, then
\begin{equation}
 D_PL=LT_{ij},\qquad D_P=LT_{ij}L^{-1}.
 \label{eq:partial-swap-triangular}
\end{equation}
Equation~\eqref{eq:partial-swap-triangular} shows that applying this partial swap $D_P$ to a data stream requires first applying the lower-triangular transformation $L^{-1}$ to its address coordinates, then invoking one short-field interchange $T_{ij}$ to exchange short fields $i$ and $j$, and finally applying $L$. In the middle step $T_{ij}$, the short fields between indices $i$ and $j$, together with the original intervening coordinate $g$, form the new intervening coordinate. The short fields before $i$ join the prefix, and those after $j$ join the suffix. This is therefore exactly an address-field interchange of width $f$.

The matrices $L$ and $L^{-1}$ at the ends of~\eqref{eq:partial-swap-triangular} are invertible lower-triangular matrices with fixed dimension $2m$ and fixed entries. Our preselected odd prime $q$ divides none of their denominators or the numerators of the diagonal entries $\xi_i,\alpha$, so both matrices are well defined and invertible modulo $q^f$ for every width $f\ge1$. Since the matrix is lower triangular, compute the new coordinates from last to first. When a new coordinate is computed, all earlier coordinates it uses still have their original values. Each step can be implemented by the rearrangement routines of OpenAI~\cite[Lemma~4]{OpenAI2026}; the intervening coordinate $g$ and the suffix remain unchanged. Since the dimension is fixed, these updates require no recursive calls and take $O(V/W)$ basic tape operations in total, with a fixed number of block buffers. Consequently, the partial swap $D_P=LT_{ij}L^{-1}$ costs one short-field interchange $T_{ij}$ of width $f$, plus a constant number of sequential scans; these scans take $O(V/W)$ basic tape operations in total.

\paragraph{Distributing the streams and executing the circuit.}
Fixing a prefix $p\in[P]$ identifies a contiguous prefix segment on disk containing all values of $(H,g,D,z)$. By the lemma's hypothesis, the total number $P$ of prefix segments is a multiple of $W$. A sequential scan can therefore distribute these $P$ segments evenly among $W$ streams in cyclic order modulo $W$, placing segment $Wj+i$ at position $j$ in input stream $F_i$. Each of the circuit's $W$ inputs thus receives a stream containing $P/W$ prefix segments and $V/W$ bits, including the inputs assigned to auxiliary registers.

To implement the circuit's XOR gates by sequential disk scans, let $G$ denote the bit sequence computed by the circuit at a node. If the node has subspace label $U$, store $G$ in the order specified by the partial swap $D_{P_U}$, so the sequence on disk is $D_{P_U}G$. Maintain this storage order as the circuit gates are executed in their prescribed order. Write $M_{s_i}$ for the projection matrix labeling source $s_i$ of input $i$, and define the stream supplied to circuit input $i$ by $G_i=D_{M_{s_i}}F_i$. Since $D_{M_{s_i}}^2=I$, applying $D_{M_{s_i}}$ to both sides gives $F_i=D_{M_{s_i}}G_i$. Thus, without any preprocessing, the original input stream $F_i$ already has the storage order prescribed by source label $M_{s_i}$ for circuit input stream $G_i$.

When a stream travels along a register's main wire or a temporary wire from a node labeled $U$ to a node labeled $V$, the circuit guarantees that $U$ and $V$ are nested. Applying the partial swap $D_{\lvert P_V-P_U\rvert}=D_{P_V}D_{P_U}$ associated with their projection difference changes the stream's disk storage order from $D_{P_U}$ to $D_{P_V}$. Here $\lvert P_V-P_U\rvert$ denotes the orthogonal projection onto the larger subspace minus the orthogonal projection onto the smaller subspace. At each XOR gate, all participating streams have the same node label $U$, so the disk streams have undergone the same address permutation $D_{P_U}$. Because a common address permutation commutes with bitwise XOR, one sequential scan of these disk streams, XORing bits at corresponding positions, performs the bitwise XOR required by the circuit while preserving storage order $D_{P_U}$. A stream on a temporary wire is likewise copied at its source node, converted to the target gate's storage order, used in the XOR operation, and then discarded. The number of short-field interchanges required for this storage-order conversion is already included in $S$.

By Lemma~\ref{lem:finite-coding}, when execution reaches the outputs, the circuit's linear XOR operations recover each input stream $G_i$ unchanged at its corresponding sink $t_{\sigma(i)}$. Since the stream there is stored in the order prescribed by the sink's projection label $M_{t_{\sigma(i)}}$, the actual output stream on disk is
\[
 D_{M_{t_{\sigma(i)}}}G_i
 =D_{M_{t_{\sigma(i)}}}D_{M_{s_i}}F_i.
\]
By the source and sink projection labels defined in Section~\ref{sec:finite-labels}, a register $X_a$ in the first group of each data-register pair has source label $M_{s_i}=P_{U_a}$ and corresponding sink label $M_{t_{\sigma(i)}}=I-P_{U_a}$. Their partial swaps compose to $D_{I-P_{U_a}}D_{P_{U_a}}=D_I$. For a register $Y_a$ in the second group, as well as for every auxiliary register, the source and corresponding sink labels are $0$ and $I$, respectively, and again their composition is $D_ID_0=D_I$. Thus the actual output at each of the $W$ sinks is exactly $D_IF_i$, completing the interchange of the long fields $H$ and $D$ within each input stream $F_i$. One final sequential scan returns the output at sink $t_{\sigma(i)}$ to the prefix-segment positions originally assigned to stream $i$. This undoes the exchange of register-stream indices induced by $\sigma$ and completes the address-field interchange on the entire array.

\paragraph{Recursive-call count and I/O cost.}
The subspaces at the two endpoints of every wire segment and temporary wire in the circuit are nested, and the ranks of their projection differences sum to $S$. For a projection difference of rank $r$, the corresponding partial swap decomposes into $r$ partial swaps along individual orthogonal directions. Each calls one short-field interchange of width $f$. The entire circuit therefore makes $S$ child interchange calls, each of width $f$ on $V/W$ bits. Moreover, if a child call interchanges the $i$th and $j$th short fields, the $i-1$ short fields preceding the $i$th field, each with $q^f$ possible values, join the original prefix. The number of prefix segments in that child is thus $(P/W)(q^f)^{i-1}$, a positive integer multiple of $P/W$.

Apart from these $S$ recursive child calls, all operations at the current level---the initial distribution by prefix, stream copying and bitwise XOR at circuit nodes, the lower-triangular coordinate transformations $L$ and $L^{-1}$, and the final assembly of the output streams into the original array---consist of a constant number of sequential reads and writes over the streams. Because the circuit's size and structure are fixed, these reads and writes process $O(V)$ bits in total and require only $O(1+V/(Bw))$ block transfers in external memory; see Appendix~\ref{app:tape-access}. The algorithm executes the $S$ child calls one at a time, writing the suspended parent's state and buffers back to disk. Only the currently executing child uses memory, so recursion control and the current level's stream operations occupy only a fixed number of memory blocks; see Appendix~\ref{app:recursive-suspension}. Summing the costs of all $S$ child calls gives the following recurrence for the current level's total number of block transfers $Q(e,V)$:
\[
 Q(e,V)\le S\,Q(e/m,V/W)+O\!\left(1+\frac{V}{Bw}\right).\qedhere
\]
\end{proof}

\subsection{The Recursion Bound}
\label{sec:recursion-bound}

Each level of the coding recursion reduces the width of the fields to be interchanged by a factor of $m$. Once the field width $e$ is sufficiently small, we can stop the recursion and complete the interchange directly using memory buffers and sequential scans. Specifically, two fields $(H,D)$ of $e$ base-$q$ digits have $q^{2e}$ possible value pairs. To distribute array entries into disk streams according to their $(H,D)$ values in one sequential scan, memory must simultaneously hold one block of $B$ words for each of these $q^{2e}$ streams. Since memory holds $M/B$ blocks in total, we choose the stopping threshold
\[
 t=\left\lfloor\frac18\log_q(M/B)\right\rfloor.
\]
The assumption $M/B\ge R_0$ from Section~\ref{sec:interchange-operation}, for a sufficiently large fixed constant $R_0$, ensures that $t\ge1$. Whenever $e\le t$, the required number of memory blocks satisfies $q^{2e}\le(M/B)^{1/4}$, which is much smaller than the available $M/B$ blocks.

Because each recursive level divides the field width by $m$, we first analyze widths of the form $e=t_0m^k$, where the field width at the bottom of the recursion is
\[
 t_0=m^{\lfloor\log_m t\rfloor}.
\]
Starting from width $e=t_0m^k$, exactly $k$ recursive levels reduce the field width to $t_0\le t$. Since $t_0\le t<mt_0$ and $m,q$ are fixed constants, this bottom-level width satisfies $t_0=\Theta\bigl(\log(M/B)\bigr)=\Theta(a)$.

\begin{lemma}[Interchange for field width $t_0m^k$]
\label{lem:interchange-scales}
If the fields to be interchanged have width $e=t_0m^k$ and the number $P$ of prefix segments is divisible by $W^k$, then the number of block transfers needed for the address-field interchange is at most
\begin{equation}
 Q(e,V)=O\!\left(1+\frac{V}{Bw}\left(\frac{e}{t_0}\right)^\tau\right).
 \label{eq:cutoff-power}
\end{equation}
In particular, when $e\le t$, no recursion is needed, and the interchange takes only $O(1+V/(Bw))$ block transfers.
\end{lemma}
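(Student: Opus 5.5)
The proof goes by induction on $k$, combining the base case $e\le t$ (direct bucketing in memory) with the recurrence of Lemma~\ref{lem:coded-recursion}. The target is an explicit bound $Q(t_0m^k,V)\le C_1+C_2\frac{V}{Bw}\lambda^k$ for suitable constants $C_1,C_2$. Since $\lambda=S/W<m^\tau$ by \eqref{eq:cutoff-rank-deficit} and $(e/t_0)^\tau=m^{k\tau}$, this gives \eqref{eq:cutoff-power}.

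\textbf{Base case ($k=0$, or more generally $e\le t$).} We have $q^{2e}\le(M/B)^{1/4}$. In one sequential scan over each prefix segment (equivalently, over the whole array in storage order), route every bit to one of $q^{2e}$ output streams indexed by its $(H,D)$ value, keeping one block buffer per stream. Preserving the order of $(g,z)$ within each stream amounts to recording which $(p,g)$ segment the bits come from. Then run a second scan that writes the destination array in order $(p,D,g,H,z)$, reading from the appropriate streams with one buffer each. Since addresses are fixed functions of position, the bit-level packing into $w$-bit words is handled by the tape-access simulation of Appendix~\ref{app:tape-access}. Because $(M/B)^{1/4}+O(1)\le M/B$ blocks suffice, this costs $O(1+V/(Bw))$. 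The delicate point is that $\ell$ may be small (fractions of a word), so a naive stream-per-value split could waste partial blocks. This waste is bounded by one partial block per stream per pass, that is, $O(q^{2e})\le O(M/B)$ transfers in total. I will check that this is dominated by $O(1+V/(Bw))$ once $V\ge q^{2e}$. If $V<M w$, the whole array fits in memory and costs $O(1+V/(Bw))$ directly.

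\textbf{Inductive step.} Let $e=t_0m^k$ with $k\ge1$ and $W^k\mid P$. Lemma~\ref{lem:coded-recursion} applies with $f=t_0m^{k-1}$ and produces $S$ children of width $f$ on $V/W$ bits. Each child has $(P/W)q^{f(i-1)}$ prefix segments, which is divisible by $W^{k-1}$, so the inductive hypothesis applies. This yields
\[
Q(e,V)\le S\Bigl(C_1+C_2\tfrac{V}{WBw}\lambda^{k-1}\Bigr)+C_3\Bigl(1+\tfrac{V}{Bw}\Bigr).
\]
The $V$-term becomes $C_2\frac{V}{Bw}\lambda^k+C_3\frac{V}{Bw}$. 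The additive constant, however, multiplies to $S^k$ over $k$ levels. This unrolling of the additive $O(1)$ is the main obstacle. I would handle it by unrolling the recurrence fully rather than by a clean induction. The scan costs at depth $j$ total $S^j\,O(1+V/(W^jBw))$, and the key observation is that every recursive call processes at least one full prefix segment of size at least $q^{2t_0}\ell G\ge1$ bits.

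To absorb the additive constants, I would strengthen the hypothesis to $C_2\frac{V}{Bw}\lambda^k$ plus $C_1\cdot(\text{number of calls})$. I would then argue that the calls at the bottom level number $S^k$, each with $V/W^k\ge q^{2t_0}$ bits. Either those bits exceed $Bw$, so the constant is absorbed into $V/(W^kBw)$, or the subproblems are tiny and the $O(1)$ per call must be charged against a geometric sum. The fallback is to note that when $V/W^k<Bw$, several consecutive children can be batched within memory, since a whole child fits in $M$ words and can be done with $O(1)$ transfers amortized against its $V/W^k$ bits. This reduces the per-call constant to $O(V/(W^kBw))$ whenever $V\ge Bw$.

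The geometric sum $\sum_{j\le k}\lambda^j=O(\lambda^k)$, valid since $\lambda>1$, then gives the claimed bound. The final remark for $e\le t$ is exactly the base case.
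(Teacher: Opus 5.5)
Your skeleton matches the paper's. It has the bucketing base case and one application of Lemma~\ref{lem:coded-recursion} per level, with divisibility passed down: each child's prefix count is $P/W$ times an integer, hence divisible by $W^{k-1}$. Depth $j$ carries total volume $\lambda^jV$, and the geometric sum is dominated by $\lambda^k<(e/t_0)^\tau$. The gap is exactly at the obstacle you flag, the additive $O(1)$ per call, and none of your suggested fixes closes it.

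\begin{itemize}
\item That each call holds a prefix segment of $q^{2t_0}\ell G$ bits is irrelevant, since this can be far below $Bw$.
\item Charging the unit costs to a geometric sum fails. The leaves alone contribute $S^k$ unit overheads, while the target is $\lambda^kV/(Bw)=S^k\cdot V/(W^kBw)$. This is smaller by an unbounded factor when $V/W^k\ll Bw$: for instance, if $V\approx Bw$ the target is $O(\lambda^k)$ while the overheads total $S^k=W^k\lambda^k$.
\item Your strengthened hypothesis with $C_1$ times the number of calls just re-encodes this $S^k$.
\item ``Batching consecutive children within memory'' is not a well-defined operation. The $S$ children of a call are not independent: in circuit order they are interleaved with the $L^{-1}$ and $L$ transforms and the XOR scans that feed earlier children's outputs into later children's inputs. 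Also, doing a sub-block child with $O(1)$ transfers cannot be ``amortized against its $V/W^k$ bits'' when those bits are fewer than $Bw$.
\item The claim that the per-call constant becomes $O(V/(W^kBw))$ whenever $V\ge Bw$ does not follow. A condition on the root volume says nothing about the volume of deep subproblems.
\end{itemize}

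The missing idea is to apply the in-memory cutoff at every node of the recursion, not only at the root, together with a constant lower bound on $M/B$. Never expand a subproblem with $V'\le Mw/2$; read it, permute it in memory, and write it back at cost $O(1+V'/(Bw))$. Every expanded subproblem then has $V'>Mw/2$. Since $M/B\ge 2W$ (allowed, because $M/B\ge R_0$ for a large fixed constant), each of its children has volume $V'/W>Mw/(2W)\ge Bw$. So every call other than the root holds at least one full block, and its $O(1)$ overhead is absorbed into its own $O(V'/(Bw))$ scan cost. The cost at depth $j$ is then $O(\lambda^jV/(Bw))$ with no strengthened hypothesis; this is the paper's argument.

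There is also a smaller slip in your base case. The partial-block waste $O(q^{2e})$ is dominated when $V/(Bw)\ge q^{2e}$, not when $V\ge q^{2e}$. This follows from $V>Mw/2$ and $q^{2e}\le(M/B)^{1/4}$. The in-memory threshold should be $Mw/2$ rather than $Mw$, so that buffers still fit.
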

\begin{proof}
As in Section~\ref{sec:interchange-operation}, we assume that the number $M/B$ of blocks that fit in memory exceeds a sufficiently large fixed constant.

\paragraph{Case 1: no recursion ($e\le t$).}
If the array's total bit volume is at most half the memory capacity, that is, $V\le Mw/2$, read the entire array into memory, rearrange it, and write it back to disk. This takes only $O(1+V/(Bw))$ block transfers. If $V>Mw/2$, prepare $q^{2e}$ disk streams, one for each possible value pair $(H,D)$, and keep one block in memory for each stream; here $q^{2e}\le(M/B)^{1/4}$. First scan the input array sequentially and distribute its entries among these streams according to their $(H,D)$ values. Then read from the $q^{2e}$ streams in the target order $(p,D,g,H,z)$ to assemble the output array. A stream writes a block to disk only after accumulating $B$ words. Thus, in addition to the $O(V/(Bw))$ full-block reads and writes charged to the total data volume, each stream incurs at most $O(1)$ extra block transfers for the partial block at the end of its reads and writes. The total extra cost over the $q^{2e}$ streams is at most $O(q^{2e})\le O\bigl((M/B)^{1/4}\bigr)$. Since $V>Mw/2$ and $M/B$ is sufficiently large, we have $(M/B)^{1/4}<M/(2B)<V/(Bw)$, so this extra cost is absorbed by $O(V/(Bw))$. Hence, when $e\le t$, the interchange takes only $O(1+V/(Bw))$ block transfers.

\paragraph{Case 2: multiple recursive levels ($e=t_0m^k$, $k\ge1$).}
By Lemma~\ref{lem:coded-recursion}, each recursive level divides the field width by $m$ and makes the prefix count a positive integer multiple of the parent's prefix count divided by $W$. Since the initial prefix count $P$ is divisible by $W^k$, after $k$ levels the field width reaches $t_0\le t$, reducing the problem to Case~1. During the recursion, if a subproblem's volume satisfies $V'\le Mw/2$, read it into memory and complete its interchange directly, without expanding it further. Every subproblem that is expanded therefore has volume $V'>Mw/2$. Because $M/B$ is sufficiently large, exceeding the fixed constant $2W$, we have $V'>Mw/2\ge WBw$. Each child consequently has volume $V'/W$ of at least one full disk block of $Bw$ bits, so the $O(1+V'/(WBw))$ block-transfer overhead per child in the recurrence of Lemma~\ref{lem:coded-recursion} simplifies to $O(V'/(WBw))$.

We now count the total block transfers over all $k$ recursive levels. Each level replaces one problem by $S$ children, each with $1/W$ of its data volume. Thus level $j$, for $0\le j\le k$, has at most $S^j$ subproblems, each of volume $V/W^j$, and their total data volume is at most
\[
 S^j\cdot\frac{V}{W^j}=\left(\frac{S}{W}\right)^jV=\lambda^jV.
\]
At level $k$, where the field width is $t_0\le t$, Case~1 likewise bounds each leaf's block transfers by a constant times its own data volume measured in blocks. Summing the block transfers from level $0$ through level $k$, and using $\lambda=S/W>1$, gives a geometric sum dominated by its final term:
\[
 Q(e,V)\le\sum_{j=0}^k O\!\left(\lambda^j\frac{V}{Bw}\right)+O(1)
 =O\!\left(1+\lambda^k\frac{V}{Bw}\right).
\]
By~\eqref{eq:cutoff-rank-deficit}, $\lambda<m^\tau$, so $\lambda^k<(m^k)^\tau=(e/t_0)^\tau$. Substitution proves~\eqref{eq:cutoff-power}.
\end{proof}

\begin{proof}[Proof of Lemma~\ref{lem:memory-cutoff-interchange}]
First, Appendix Lemma~\ref{lem:arbitrary-fields} extends Lemma~\ref{lem:interchange-scales} from widths $e=t_0m^k$ and prefix counts divisible by $W^k$ to every positive integer width $e\ge1$ and every prefix count $P\ge1$. The idea is to extract several high-order digits from each field and move them into the prefix, then split the remaining width into subfields of widths $t_0m^j$ and apply Lemma~\ref{lem:interchange-scales} to interchange them in sequence. The widest subfield dominates the total cost of these interchanges; any residual short part of width less than $t_0$ is interchanged by sequential scans. At the same time, padding with at most the original array's volume of blank data makes the prefix count a multiple of the required $W^j$. Thus, interchanging base-$q$ fields of any width $e$ takes at most
\[
 Q(e,V)=O\!\left(1+\frac{V}{Bw}\left[1+\left(\frac{e}{t_0}\right)^\tau\right]\right)
\]
block transfers.

Next, we convert the interchange of binary fields of width $u\ge1$ in Lemma~\ref{lem:memory-cutoff-interchange} into an interchange of base-$q$ fields. Choose the smallest positive integer $e$ such that $q^e\ge2^u$. Since $q$ is fixed, we have $q^e<q\cdot2^u$ and $e=\Theta(u)$. In one sequential scan, extend the ranges of the two fields $t,d\in[2^u]$ in the disk array from $[2^u]$ to $[q^e]$, treating the new coordinate positions, where $t\ge2^u$ or $d\ge2^u$, as blank padding. The expanded volume $V'$ is at most $q^2=O(1)$ times the original volume $V$. Perform the base-$q$ address-field interchange of width $e$ on this expanded array. The operation moves an entry from $(p,t,g,d,z)$ to $(p,d,g,t,z)$, so every genuine array entry whose coordinates initially satisfy $t,d<2^u$ still satisfies $d,t<2^u$ after the interchange and remains within the valid coordinate range of the original array. Finally, a sequential scan removes the padding positions with $t\ge2^u$ or $d\ge2^u$, yielding the target array with its binary fields interchanged.

Substituting $V'=O(V)$, $e=\Theta(u)$, and $t_0=\Theta(a)$ into the bound above gives the block-transfer bound~\eqref{eq:memory-cutoff-interchange} in Lemma~\ref{lem:memory-cutoff-interchange}. Appendix Lemma~\ref{lem:io-normalization} then reduces the logical memory and block sizes by fixed constant factors, extending the memory condition from sufficiently large $M/B$ to every $M\ge\max\{2B,C_{\rm abs}\}$ without changing the asymptotic number of block transfers. Because the circuit structure and transformation matrices are fixed, all read and write positions are uniquely determined by the array dimensions, independently of the stored values. Under the stated dimension bound, all temporary streams, padded arrays, and recursive-call records have polynomial size in $N$, and their dimension parameters and disk addresses occupy a constant number of machine words when $w=\Theta(\log N)$. Together with the buffering arrangements above, this establishes all the claims of Lemma~\ref{lem:memory-cutoff-interchange}.
\end{proof}

\section{Local Sorting}
\label{sec:sorting}

Section~\ref{sec:interchange} gives a faster method for address-field interchange, whose movement rules are determined entirely by the elements' current indices. In sorting, however, the target positions depend on the values of the input records. To turn faster fixed address rearrangements into faster sorting, we need a sorting framework that separates comparisons depending on record values from global data movement following a fixed pattern.

Leighton's Columnsort algorithm~\cite{Leighton1985,ColumnsortNotes} provides this bridge. It views the input array as a matrix with many more rows than columns and sorts it by alternating four rounds of independent column sorting with four rounds of data movement among fixed positions. All communication between columns occurs in the fixed operations at even-numbered steps and is independent of the input values. Our approach is to choose the matrix dimensions so that global data movement in the eight-step Columnsort procedure reduces to a constant number of address-field interchanges from Section~\ref{sec:interchange} and sequential scans. The column sorts at odd-numbered steps are handled recursively by the same algorithm. Each call reduces column size to within a constant factor of the $3/4$ power of the array size. If the initial number of records satisfies $n\le M^2$, a constant number of recursion levels therefore makes every column small enough to sort in memory, giving local sorting with the same exponent $1-\varepsilon$ in the complexity bound.

We call each element to be sorted a record. Each input integer occupies $w=\Theta(\log N)$ bits, where $N$ is the global input size, so adding the two-bit tags used below to mark padding still leaves each record with $O(w)$ bits. A record may also consist of a constant number of integer fields compared lexicographically, with the same $O(w)$-bit size. We retain the notation $a=\log(M/B)$ and $\tau=1-\varepsilon$ from Section~\ref{sec:interchange} and initially assume that both $M$ and $M/B$ exceed sufficiently large fixed constants. Corollary~\ref{cor:parameter-transfer} in the appendix extends the resulting bounds to all $2B\le M$.

\subsection{The Columnsort Reduction}
\label{sec:columnsort-reduction}

To connect the eight-step Columnsort procedure with binary address-field interchange, arrange the records in a matrix with $s$ columns of $r$ records each, stored contiguously by column on disk. We use the following sufficient conditions for Columnsort: $r$ is even, $s$ divides $r$, and $r\ge2s(s+1)$. Take $t\ge3$, set $s=2^t$, and choose the number of rows to be the cube of the number of columns, $r=s^3=2^{3t}$. This choice has two immediate benefits. First, the column length $r$ is exactly the $3/4$ power of the total number of records $rs=s^4$, ensuring rapid reduction of the recursive problem size. Second, since $rs=2^{4t}$, each record's $4t$-bit index splits into four fields of width $t$. In binary, taking the remainder and quotient upon division by $s=2^t$ extracts the lowest field and the highest three fields, respectively. The intercolumn transpose rearrangement therefore becomes a cyclic rotation of these four fields.

\begin{proposition}[The Columnsort reduction]
\label{prop:columnsort-reduction}
Let $s=2^t$ and $r=s^3$, where $t\ge3$. For an array stored on disk as $s$ contiguous columns of $r$ records each, the eight-step Columnsort procedure can be implemented using four rounds of column sorting, six address-field interchanges of width $t$, and a constant number of sequential scans. If all four rounds of column sorting are correct, the entire array is sorted correctly.
\end{proposition}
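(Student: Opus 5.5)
The plan is to follow Leighton's eight steps literally and check, step by step, that every even (value-independent) step is either a rotation of four $t$-bit address fields or a padding, shift or deletion that a sequential scan can perform.

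\textbf{Correctness conditions.} With $s=2^t$ and $r=s^3$, $r$ is even and $s$ divides $r$. The condition $r\ge2s(s+1)$ is equivalent to $s^2\ge2s+2$, which holds for $s\ge8$, that is, for $t\ge3$. So Leighton's theorem applies whenever the four column sorts (steps 1, 3, 5, 7) are correct.

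\textbf{Repeated values.} Leighton's proof goes through the $0$--$1$ principle, and I want correctness not to depend on how ties are broken inside a column sort. I would therefore first make all keys distinct. One option is a scan that appends each record's original index as a secondary lexicographic field; this is allowed because records may carry a constant number of $O(w)$-bit fields. Alternatively, I would argue directly that Columnsort only needs each column to be sorted, in which case no tie-breaking is necessary. I would use whichever is cleaner, and remove any auxiliary field by a final scan.

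\textbf{Transpose and untranspose.} The record at row $i$ of column $c$ sits at column-major position $cr+i$. Write its $4t$-bit address as four fields $(c,i_2,i_1,i_0)$, highest first, with $i=(i_2,i_1,i_0)$. Step 2 picks the matrix up in column-major order and lays it down in row-major order, so the record goes to position $is+c$, whose field representation is $(i_2,i_1,i_0,c)$. This is a cyclic rotation of the four fields. A $4$-cycle is a product of three transpositions of fields, for example swap fields 1 and 2, then 2 and 3, then 3 and 4, with appropriate choices of which fields are adjacent.

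Each transposition is an address-field interchange in the sense of Section~\ref{sec:interchange-operation}:
\begin{itemize}
\item the fields before the pair form the prefix $p$;
\item the fields between them form the intervening coordinate $g$;
\item the fields after them, together with the $O(w)$ bits of the record, form the suffix $z$.
\end{itemize}
The suffix bits ride along unchanged, so whole records move. Step 4, the untranspose, is the inverse rotation and costs three more interchanges, giving six in total.

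\textbf{Shift and unshift.} Step 6 prepends $r/2$ tagged $-\infty$ records and appends $r/2$ tagged $+\infty$ records. This is a scan, and it produces $s+1$ contiguous columns of length $r$ whose boundaries are offset by half a column. The two-bit tags distinguish padding from real records with equal values. Step 7 sorts these $s+1$ contiguous columns; this is still a round of column sorting, with one extra column. Step 8 deletes the padding in one scan, which restores the unshifted layout.

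\textbf{Expected main obstacle.} I expect the hard part to be the bookkeeping rather than any new idea. First, the field-order convention must match the lexicographic storage order of $(p,t,g,d,z)$ exactly. Second, each of the three transpositions must be written with the correct prefix, intervening and suffix sizes, using that $P$, $G$ and $\ell$ may be arbitrary positive integers. Third, the $(s+1)$-column shifted matrix must be shown to be handled purely by scans, and the padding tags must be shown not to disturb Leighton's correctness argument.
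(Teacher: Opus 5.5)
Your overall plan matches the paper's proof. You check Leighton's conditions for $s=2^t\ge8$ and $r=s^3$, write steps 2 and 4 as cyclic rotations of the four $t$-bit index fields, implement each rotation by three interchanges with the record's bits in the suffix, and handle steps 6 and 8 by scans with two-bit tags. As you suspected, no tie-breaking field is needed, because the key sequence at every step does not depend on how ties are broken.

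However, the explicit permutation you give for step 2 is inverted, and this is the bookkeeping you flagged as the main obstacle. Picking up column-major and laying down row-major sends the $j$-th record, $j=cr+i$, to row $\lfloor j/s\rfloor$ and column $j\bmod s=i_0$ of the new matrix. Its column-major disk position is therefore $(j\bmod s)r+\lfloor j/s\rfloor$, with fields $(i_0,c,i_2,i_1)$: the lowest field moves to the top. Your $is+c$, with fields $(i_2,i_1,i_0,c)$, is the row-major address of the \emph{same} cell $(i,c)$. Using it as the new column-major position means picking up by rows and laying down by columns, which is Leighton's untranspose. So your steps 2 and 4 are swapped.

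The swap is not harmless, because Columnsort is not symmetric under it. Take $0$--$1$ keys where every column equals $0^{r/2}1^{r/2}$ after step 1.
\begin{itemize}
\item Your step 2 sends row $i$ of every column to new column $\lfloor i/s^2\rfloor$. Columns $0,\dots,s/2-1$ become all zeros and the rest all ones, so step 3 changes nothing.
\item Your step 4 then deals each column round-robin over all $s$ columns. Every column is again $0^{r/2}1^{r/2}$, and step 5 changes nothing.
\item The half-column shift in steps 6--8 only sorts the windows $[kr-r/2,kr+r/2)$. This leaves ones in $[r,3r/2)$ before zeros in $[3r/2,2r)$, so the output is unsorted.
\end{itemize}

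The fix is local. Implement step 2 as $(x_1,x_2,x_3,x_4)\mapsto(x_4,x_1,x_2,x_3)$ by interchanging field positions $3,4$, then $2,3$, then $1,2$. Implement step 4 by the same three interchanges in reverse order. The count of six interchanges of width $t$ is unchanged.
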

\begin{proof}
Number one-dimensional positions from zero, so that the record in column $c$ and row $q$ occupies position $j=cr+q$. The algorithm performs the following eight steps. Odd-numbered steps sort each column independently; even-numbered steps perform fixed operations on the global layout.
\begin{enumerate}[label=\arabic*.,itemsep=1pt,topsep=3pt]
\item Sort each of the $s$ columns independently.
\item Apply the transpose rearrangement $\pi$, moving the record at position $j$ to $\pi(j)=(j\bmod s)r+\lfloor j/s\rfloor$. This reads the original matrix by columns and fills the new matrix by rows, distributing the records from each original column evenly among the $s$ new columns.
\item Sort each of the rearranged $s$ columns independently.
\item Apply the inverse permutation $\pi^{-1}$ from step~2, moving the record at position $dr+q$ back to position $sq+d$ and restoring the original column layout.
\item Sort each of the $s$ columns independently again.
\item Prepend $r/2$ minimum padding records $-\infty$, strictly smaller than every genuine record, and append $r/2$ maximum padding records $+\infty$, strictly larger than every genuine record. The array now has $s+1$ columns of $r$ records each, and the original column boundaries move forward by half a column.
\item Sort each of these $s+1$ columns independently.
\item Delete the first and last $r/2$ records, leaving an output array of $rs$ records.
\end{enumerate}

\textbf{Correctness.} When $t\ge3$, we have $s=2^t\ge8$. Thus $r=s^3$ is even, $s$ divides $r$, and $r=s^3\ge8s^2>2s(s+1)$. These parameters satisfy the Columnsort dimension conditions, so correctness of the eight steps follows from Leighton~\cite{Leighton1985}. To represent $-\infty$ and $+\infty$ in recursive calls, attach a two-bit tag distinguishing $-\infty$ padding, genuine records, and $+\infty$ padding; all padding records of one type have the same bit representation. Step~8 removes $r/2$ records from each end of the sorted array, exactly cancelling the $r/2$ copies of each padding type added in step~6. The input's original padding multiplicities are therefore preserved. All recursion levels can share the same two-bit tag without recording the level at which padding was introduced. Any additional $+\infty$ padding appended to reach the required matrix size is likewise removed according to its added multiplicity.

\textbf{Implementing the rearrangements by address-field interchange.} Since $rs=2^{4t}$, each record index $j\in[2^{4t}]$ has $4t$ bits and splits, from highest to lowest, into four fields $(x_1,x_2,x_3,x_4)$ of width $t$. The remainder $j\bmod2^t$ is exactly the lowest field $x_4$, while the quotient $\lfloor j/2^t\rfloor$ is the concatenation $(x_1,x_2,x_3)$ of the highest three fields. Hence step~2 sends $j$ to $\pi(j)=(j\bmod2^t)\cdot2^{3t}+\lfloor j/2^t\rfloor$, moving $x_4$ to the highest position and shifting the other three fields one position to the right:
\[
 (x_1,x_2,x_3,x_4)\longmapsto(x_4,x_1,x_2,x_3).
\]
Move $x_4$ left through adjacent fields to implement this rotation. Numbering field positions from highest to lowest, interchange positions $3,4$, then $2,3$, then $1,2$. This requires three address-field interchanges of width $t$. The inverse permutation $\pi^{-1}$ in step~4 restores $(x_4,x_1,x_2,x_3)$ to $(x_1,x_2,x_3,x_4)$ by applying the same three interchanges in reverse order. In each interchange, a record's own bits form an unchanged suffix coordinate and move with the record. Finally, inserting and deleting padding at the ends of the array in steps~6 and~8 each require just one sequential scan.
\end{proof}

\subsection{The Local Sorting Bound}

Proposition~\ref{prop:columnsort-reduction} reduces sorting the entire array to four rounds of column sorting. If a column still exceeds memory capacity, apply Proposition~\ref{prop:columnsort-reduction} recursively to that column; once a column fits in memory, read it into memory and sort it directly. Each call reduces column length to within a constant factor of the $3/4$ power of the current size. Thus, when the initial record count satisfies $n\le M^2$, a constant number of recursion levels suffices to reach in-memory sorting, yielding the local sorting bound.

\begin{lemma}[Local sorting]
\label{lem:scalar-local}
For $1\le n\le M^2$, an array containing $n$ records can be sorted in
\begin{equation}
 O\!\left(1+\frac nB[1+(\log M/a)^\tau]\right)
 \label{eq:scalar-local}
\end{equation}
block transfers, using at most $M$ words of memory and polynomial disk space.
\end{lemma}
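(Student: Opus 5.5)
We prove Lemma~\ref{lem:scalar-local} by induction on a recursion that applies Proposition~\ref{prop:columnsort-reduction} a constant number of times. Write $T(n)$ for the cost of sorting $n\le M^2$ records.

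\textbf{Base case.} If $n\le M/2$, read the array into memory, sort it, and write it back. This costs $O(1+n/B)$ block transfers.

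\textbf{Reduction step.} Otherwise, choose the smallest $t\ge3$ with $2^{4t}\ge n$, so that $s^4=2^{4t}<16n$. Pad the array with $+\infty$ records, using the two-bit tag from the proof of Proposition~\ref{prop:columnsort-reduction}, up to $rs=s^4$ records. This takes one scan. Then apply Proposition~\ref{prop:columnsort-reduction}.

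\textbf{Cost of the rearrangements.} The six address-field interchanges have width $t=O(\log n)=O(\log M)$ on $V=O(nw)$ bits, taking each record's $O(w)$ bits as the suffix coordinate. By Lemma~\ref{lem:memory-cutoff-interchange}, each interchange costs $O(1+\frac nB[1+(\log M/a)^\tau])$. The constant number of scans costs $O(1+n/B)$.

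\textbf{Column sorts.} The four rounds of column sorting consist of at most $s+1$ columns of $r=s^3$ records each, where $r=O(n^{3/4})$. Each column is handled by a recursive call. Because the columns lie contiguously on disk, each recursive call is an independent subproblem, and the step-7 columns are likewise contiguous after the half-column shift.

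This gives the recurrence
\[
T(n)\le 4(s+1)\,T(s^3)+O\!\left(1+\frac nB\Bigl[1+\Bigl(\frac{\log M}{a}\Bigr)^\tau\Bigr]\right).
\]

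\textbf{Solving the recurrence (the main obstacle).} Three points must be checked.

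First, the depth is constant. Starting from $n\le M^2$, the sizes evolve as $n_{i+1}=O(n_i^{3/4})$. After a fixed number $k$ of levels, with $(3/4)^k\cdot 2<1$ and constants absorbed by the assumption that $M$ exceeds a large constant, we reach $n_k\le M/2$. Concretely, $n_i\le c^{i}M^{2(3/4)^i}$ for a fixed constant $c$.

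Second, the total volume grows only by a constant factor per level. At each level the subproblem sizes sum to at most $4(s+1)s^3\le 5s^4\le 80n$. Hence level $i$ has total volume $O(80^i n)=O(n)$.

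Third, the additive $O(1)$ terms must be absorbed. The number of subproblems at level $i$ is bounded by the total volume divided by the minimum subproblem size. Every expanded subproblem has size greater than $M/2\ge B$, so each such $O(1)$ term is absorbed into its own $n'/B$ term. Leaves with size at most $M/2$ cost $O(1+n'/B)$. Their count is at most $\prod_i 4(s_i+1)$, which is at most $O(n/\min\text{leaf size})$ up to constant factors, since each parent had size greater than $M/2$ and spawns children of size $\Theta(s^3)$, giving a polynomially large fraction of its size. Here I would argue that leaf sizes are $\Omega(M^{c'})$. A simpler route is to note that each leaf column occupies at least one block, so its cost is dominated by $n'/B$ whenever $n'\ge B$. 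For small leaves with $n'<B$, I would group all columns of one parent into a single contiguous region and sort them in one in-memory pass of $O(1+n_{\text{parent}}/B)$.

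\textbf{Conclusion.} Summing over the constant number of levels gives $O(1+\frac nB[1+(\log M/a)^\tau])$. The memory usage is at most $M$, inherited from Lemma~\ref{lem:memory-cutoff-interchange} together with suspension of the parent state to disk as in Appendix~\ref{app:recursive-suspension}. Disk space is polynomial because the padding and the recursion have constant depth.
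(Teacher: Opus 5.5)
Your proposal is correct and takes essentially the same route as the paper: constant-depth recursive Columnsort, Lemma~\ref{lem:memory-cutoff-interchange} for the six width-$t$ interchanges, per-call constants absorbed because every expanded call has size $\Omega(B)$, and the sub-block leaf columns of one round batched into a single sequential scan. The differences are minor. You recurse directly from $n\le M^2$, which gives depth three once $M$ exceeds a larger constant, instead of the paper's split into pieces of size $H^2$ followed by two-way merges. Your leaf threshold $M/2$ should be $M/C$ for a large constant $C$, so that the tagged $O(w)$-bit records and the block buffers actually fit in memory.
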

\begin{proof}
Let $H=\lfloor M/C\rfloor$, where the fixed constant $C$ is large enough that memory can hold and sort $512H$ records with two-bit tags while reserving space for the required block buffers and control information. First consider $n\le H^2$.

For a subproblem containing $z$ records, if $z\le512H$, read it into memory, sort it, and write it back to disk. If $z>512H$, choose field width $t=\lceil(\log z)/4\rceil$, which necessarily satisfies $t\ge3$, and set $s=2^t$. Append $+\infty$ padding to reach $rs=2^{4t}$ records, then apply Proposition~\ref{prop:columnsort-reduction}, recursively sorting each column of length $r=s^3=2^{3t}$ in the four column-sorting rounds. At the end of the call, remove the newly added $+\infty$ padding from the suffix.

The choice of $t$ gives padded length $rs<16z$ and column length $r<8z^{3/4}$. Thus, at each recursion level, padding and the four rounds of column sorting increase the total number of records processed by only a constant factor, while each subproblem has size within a constant factor of the $3/4$ power of its parent's size. Starting from $n\le H^2$, after three recursion levels each column has fewer than $128H^{27/32}$ records. Since $H\ge1$, this is strictly below the in-memory sorting threshold $512H$. The recursion therefore has depth at most three, and the total number of records processed over all levels is $O(n)$, including padding and repeated column sorts.

For block transfers, a call of size $z>512H$ that requires further recursion performs six address-field interchanges of width $t=O(\log M)$ on $O(zw)$ bits, together with a constant number of sequential scans. Applying Lemma~\ref{lem:memory-cutoff-interchange} and using $z>512H=\Omega(B)$ to absorb additive constants gives a cost of $O\bigl((z/B)[1+(\log M/a)^\tau]\bigr)$ for the operations at the current level of such a call. At the bottom of the recursion, the short columns of length at most $512H$ in one round are stored consecutively on disk. Reuse the input and output block buffers to read, sort, and write these columns one at a time in a sequential scan. A batch containing $L$ records in total costs only $O(1+L/B)$ block transfers. Summing the costs of at most three recursion levels and the in-memory sorts proves~\eqref{eq:scalar-local}.

Finally, if $H^2<n\le M^2$, then $H=\Theta(M)$ allows us to split the input into a constant number of subarrays of length at most $H^2$. Sort each by the algorithm above and perform a constant number of two-way merge rounds, each costing $O(1+n/B)$ block transfers by sequential scanning. Execute the recursion sequentially in depth-first order and reuse the memory workspace; temporary disk space is polynomial. Corollary~\ref{cor:parameter-transfer} in the appendix extends the result to all $2B\le M$.
\end{proof}

\section{Sorting Arbitrary Input Sizes}
\label{sec:global-sorting}

Section~\ref{sec:sorting} solves local sorting for at most $M^2$ records. When the input size satisfies $N>M^2$, we extend local sorting to arbitrary sizes by radix sorting. Divide the binary representation of each input integer into consecutive bit groups from highest to lowest. In each round, partition the array by the highest remaining bit group into buckets whose records share the same value in that group. Recursively partition each bucket by the next bit group. Once a bucket contains at most $H^2$ records, apply the local sorting algorithm of Section~\ref{sec:sorting} to sort it.

We retain $H=\lfloor M/C\rfloor$ and $a=\log(M/B)$, and write $h=\log M$. Choose bit-group width $t=\lfloor\log H\rfloor=\Theta(h)$; if the highest group has fewer than $t$ bits, use its actual width. Each group has at most $K=2^t\le H$ possible values, so each partition creates at most $K\le H$ buckets. As in the previous two sections, first assume that both $M$ and $M/B$ exceed sufficiently large fixed constants, ensuring $H\ge B$. Corollary~\ref{cor:parameter-transfer} in the appendix extends the resulting bounds to all $2B\le M$.

\subsection{The Cost of One Partition}

Consider partitioning an array of length $n>H^2$ by one bit group. The number of buckets $K$ can be $\Theta(M)$, so memory may not hold a separate buffer of $B$ words for every bucket. We instead use the local sorting algorithm of Section~\ref{sec:sorting} to form the buckets indirectly. Split the current array into subarrays of at most $H^2$ records, and sort each by the current bit group so that records for the same bucket form a contiguous interval within each subarray. Then, in increasing bucket-value order, read the corresponding intervals from the subarrays and concatenate them into complete buckets. There are $O(n/H)$ subarray--bucket pairs. Since $H\ge B$, the boundary transfers incurred when concatenating intervals from different subarrays are absorbed by the $O(n/B)$ scan cost.

\Needspace{7\baselineskip}
\begin{lemma}[Partitioning by one bit group]
\label{lem:digit-partition}
An array of length $n>H^2$ can be partitioned by a bit group of at most $t$ bits into contiguous buckets in increasing bit-group order, using at most
\[
 O\!\left(\frac nB\left[1+\left(\frac ha\right)^\tau\right]\right)
\]
block transfers. After local sorting within the subarrays, concatenating their intervals into contiguous buckets takes only $O(n/B)$ block transfers.
\end{lemma}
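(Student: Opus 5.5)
The plan is the two-phase procedure described just before the lemma. First split the array into consecutive subarrays. Each subarray has exactly $H^2$ records, except possibly the last, which has fewer. This gives $k=\lceil n/H^2\rceil\le 2n/H^2$ subarrays, since $n>H^2$. Sort each subarray with Lemma~\ref{lem:scalar-local}. The sort key is the current bit group, and ties are broken by the original position. We can do this by making the key a two-field record (bit-group value, position within the subarray) compared lexicographically; this is allowed since records may carry a constant number of $O(w)$-bit fields. Breaking ties by position makes the partition stable, although stability is not strictly required. Since $H\le M$, each call is within the hypothesis $n'\le M^2$ of Lemma~\ref{lem:scalar-local}. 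Each call costs $O(1+(H^2/B)[1+(h/a)^\tau])$, using $\log M=h$. Summing over $k$ calls gives $O(k+(n/B)[1+(h/a)^\tau])$. The additive $k\le 2n/H^2$ is absorbed because $H^2\ge B$.

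Second, we concatenate the intervals. In the same scan as the sorts, or in one extra sequential scan, record for each subarray $i$ and bucket value $b\in[K]$ the start offset of the interval with value $b$. This boundary table has $kK\le 2n/H$ entries, each a constant number of words. We write it to disk as a $k\times K$ matrix, one column per bucket value, so that it can later be read sequentially in bucket order. Doing this correctly requires transposing it, so we write it first in row order while scanning the subarrays. Its size is $O(n/H)$ words, and the transpose can be done by any standard method. Even with the comparison bound, that costs $O((n/(HB))\log_{M/B}(n/H))$. I would rather avoid this cost: simply write the table in row order, and then, for each bucket $b$ in turn, read entry $(i,b)$ for each $i$. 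That costs one block per entry, i.e.\ $O(kK)=O(n/H)$ transfers, which is $O(n/B)$ since $H\ge B$.

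Then, for $b=0,1,\dots,K-1$ and $i=0,\dots,k-1$, copy interval $(i,b)$ to the output with a single output buffer. The output is written purely sequentially, so it costs $O(n/B)$ writes. Reading interval $(i,b)$ of length $L_{ib}$ costs at most $L_{ib}/B+2$ block reads. Summing over all pairs gives $n/B+2kK=O(n/B+n/H)=O(n/B)$, again because $H\ge B$. Memory use is one input block, one output block and a constant number of table blocks, all within $M$ because $M/B$ is assumed large. The result is contiguous buckets in increasing bucket-value order. It is correct because each record's bucket value places it in exactly one interval of its subarray.

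The only delicate step I expect is the boundary bookkeeping: ensuring the $kK$ interval starts are obtained without a costly transpose. The fallback above charges one I/O per pair, costing $O(n/H)\le O(n/B)$. Everything else is a direct invocation of Lemma~\ref{lem:scalar-local}, and the final $O(n/B)$ claim for the concatenation phase is exactly the bound derived in the second phase.
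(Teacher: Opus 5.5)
Your proposal is correct and follows the paper's proof essentially step for step. You split the array into $\lceil n/H^2\rceil$ subarrays, sort each by the current bit group with Lemma~\ref{lem:scalar-local}, and record a (subarray)$\times$(bucket value) interval table, then concatenate in bucket order while charging $O(1)$ transfers per table entry and interval boundary, for a total of $O(n/H)=O(n/B)$ since $H\ge B$; the tie-breaking by position and the transpose discussion are unnecessary but harmless.
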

\begin{proof}
Following the current storage order, split the array of length $n$ into $J=\lceil n/H^2\rceil$ contiguous subarrays, each containing at most $H^2$ records. Apply Lemma~\ref{lem:scalar-local} to each subarray with the current bit-group value as its sorting key. Since $J=O(n/H^2)$ and $\log M=h$, summing the local sorting costs over all $J$ subarrays gives the stated block-transfer bound.

After a subarray is sorted, the records with any bit-group value $v\in[K]$ occupy one contiguous interval within that subarray; an absent value has an interval of length zero. Scan the $J$ subarrays once, recording each value interval's starting position and length in an auxiliary index table of shape $J\times K$ on disk. Each entry occupies a constant number of machine words. Then process bit-group values $v=0,1,\dots,K-1$ in increasing order. For each value $v$, consult the table, read its intervals from subarrays $1,2,\dots,J$ in turn, and append them sequentially to the output stream. All records with value $v$ thereby form one contiguous bucket on disk.

We now count the block transfers for scanning and concatenation. The auxiliary index table has $JK$ entries. Since $n>H^2$ and $K\le H$, we have $J=\lceil n/H^2\rceil\le2n/H^2$, giving $JK=O(n/H)$. Scanning the subarrays and writing the index entries costs $O(n/B+JK)$ block transfers. During concatenation, each record is read and written sequentially once, costing $O(n/B)$ full-block transfers. Accessing each index entry and handling the partial blocks at each interval's ends incurs only a constant number of additional transfers, for a total overhead of $O(JK)=O(n/H)$. Since $H\ge B$, this overhead is also $O(n/B)$. The entire scan and concatenation phase therefore costs $O(n/B)$ block transfers and requires only a fixed number of block buffers in memory. The resulting buckets are ordered by the current bit-group value; recursion on later bit groups or local sorting determines the order within each bucket.
\end{proof}

\subsection{The Cost of High-Order Recursion}

Lemma~\ref{lem:digit-partition} bounds the cost of one bucket partition by a single bit group. Starting with the highest bit group, we recursively partition buckets that exceed the local sorting threshold until all buckets are sorted. Write $p=\log(N/M)$.

\begin{proposition}[Sorting large inputs]
\label{prop:msd-sorting}
For $N\ge M^2$, the $N$ input integers can be sorted in
\[
 O\!\left(\frac NB\left[1+\frac{p}{a^\tau h^\varepsilon}\right]\right)
\]
block transfers, using at most $M$ words of memory and polynomial disk space.
\end{proposition}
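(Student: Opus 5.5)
The plan is to run the most-significant-digit radix recursion described at the start of Section~\ref{sec:global-sorting}, charging each level with Lemma~\ref{lem:digit-partition} and the leaves with Lemma~\ref{lem:scalar-local}, and then to convert the level count into the stated bound.

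\textbf{Structure of the recursion.} Split the $w=\Theta(\log N)$ key bits into groups of width $t=\lfloor\log H\rfloor$, from high to low; there are $O(w/t)=O(\log N/h)$ groups. A bucket at depth $j$ consists of records sharing their top $j$ groups. A bucket of size at most $H^2$ is sorted by Lemma~\ref{lem:scalar-local} and becomes a leaf. A bucket whose groups are exhausted has all records equal and also stops. Every other bucket is partitioned by its next group using Lemma~\ref{lem:digit-partition}. The buckets at one depth are disjoint and stored contiguously, so their total size is at most $N$.

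\textbf{Cost per level.} By Lemma~\ref{lem:digit-partition}, the partition work at a single depth costs
\[
O\!\left(\frac NB\left[1+\left(\frac ha\right)^\tau\right]\right).
\]
The leaves are disjoint across all depths and each has size at most $H^2\le M^2$. Lemma~\ref{lem:scalar-local} therefore sorts all of them in total cost $O\bigl(N/B\,[1+(h/a)^\tau]\bigr)$ plus $O(1)$ per leaf. To control the number of leaves, note that each partition of a bucket of size $n>H^2$ creates at most $K\le H$ children. Hence a depth with total size $N$ yields at most $O(N/H^2)\cdot H=O(N/H)$ children, which is $O(N/B)$ because $H\ge B$.

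\textbf{Number of levels.} A bucket at depth $j$ is partitioned only if it has more than $H^2$ records, and any partitioned bucket still has unused groups. The number of partition levels is therefore at most the number of groups, $O(\log N/h)$. Since $N\ge M^2$, we have $p=\log(N/M)\ge\tfrac12\log N$, so $\log N=\Theta(p)$ and there are $O(p/h)$ levels. Multiplying gives
\[
O\!\left(\frac NB\left[1+\frac ph\Bigl(1+\bigl(\tfrac ha\bigr)^\tau\Bigr)\right]\right).
\]
Because $a\le h$, we have $(h/a)^\tau\ge1$. Also $h^\tau\cdot h^{\varepsilon}=h$, which gives
\[
\frac ph\cdot\frac{h^\tau}{a^\tau}=\frac{p}{a^\tau h^{\varepsilon}}.
\]
This is exactly the claimed bound.

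\textbf{Expected obstacles.} The main difficulty is bookkeeping rather than any new idea. First, the memory and index-table usage must stay within the $M$-word budget while suspended recursion states live on disk. Second, the number of levels must be bounded by the number of groups rather than by a size-shrinkage argument, since skewed inputs may not shrink buckets at all. Third, buckets that never reach size $H^2$ but whose groups are exhausted must be handled correctly, including their $O(1)$ per-bucket overheads; these are absorbed by the $O(N/B)$ bound on children per depth. Memory is handled as in Lemma~\ref{lem:scalar-local}: the algorithm processes buckets depth-first and sequentially, reusing a fixed set of buffers. The index tables have size $O(n/H)$ words and live on disk, so disk space remains polynomial. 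Finally, the assumptions that $M$ and $M/B$ are large are removed using Corollary~\ref{cor:parameter-transfer}.
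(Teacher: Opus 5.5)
Your proposal is correct and follows essentially the same route as the paper's proof. Both arguments run the MSD radix recursion with the per-depth partition cost from Lemma~\ref{lem:digit-partition}, bound the children by $O(N/H)=O(N/B)$ per depth, bound the depth by $O(\log N/h)=O(p/h)$ using $N\ge M^2$, and finish with $(p/h)(h/a)^\tau=p/(a^\tau h^\varepsilon)$, where $a\le h$ absorbs the $p/h$ term; the only difference is that you charge the local sorts of all leaves once globally (they are disjoint across depths) rather than per depth, which is equally valid.
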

\begin{proof}
Throughout the recursion, all integers in a bucket agree on the high-order bit groups already processed. For a bucket with $n$ records, we distinguish three cases according to its size and the remaining bit groups. If $n\le H^2$, apply Lemma~\ref{lem:scalar-local} to sort the bucket by the complete integers and append it to the final output stream. If $n>H^2$ but all bit groups have been processed, all integers in the bucket are equal, so append them sequentially without further sorting. If $n>H^2$ and some bit groups remain, apply Lemma~\ref{lem:digit-partition} to the highest unprocessed group to form contiguous child buckets, and recursively process the nonempty children by the next bit group in increasing bucket-value order. High-order groups determine the order between buckets, while recursion and local sorting order the records within each bucket. The final output stream is therefore sorted.

We first count the block transfers at one recursion depth. The buckets at that depth are disjoint and contain at most $N$ records altogether. At most $N/H^2$ large buckets have more than $H^2$ records and require further partitioning by Lemma~\ref{lem:digit-partition}. Summing that lemma's partitioning costs over all such buckets gives at most
\begin{equation}
 O\!\left(\frac NB\left[1+\left(\frac ha\right)^\tau\right]\right)
 \label{eq:digit-local-cost}
\end{equation}
block transfers.
Each of these at most $N/H^2$ large buckets creates at most $K\le H$ children, so the number of new child buckets at one depth is at most $(N/H^2)K\le N/H$. Those with at most $H^2$ records are sorted directly by Lemma~\ref{lem:scalar-local}. They contain at most $N$ records altogether, and each of the at most $N/H$ short buckets incurs an $O(1)$ startup cost. Since $H\ge B$, their combined startup cost $O(N/H)$ is absorbed by $O(N/B)$. Their total local sorting cost is thus also bounded by~\eqref{eq:digit-local-cost}. Each new bucket's task record incurs only a constant number of block transfers, so the same bound absorbs task management. Large buckets with no remaining bit groups contain equal integers and can be output sequentially for a total cost of $O(N/B)$. Hence each recursion level, together with sorting the short buckets it creates, satisfies~\eqref{eq:digit-local-cost}.

We now sum over the recursion tree. Each integer has $w=\Theta(\log N)$ bits. Except for the highest group, which may be shorter, each partitioning level processes $t=\Theta(h)$ bits. The number of levels is therefore $O(1+(\log N)/h)$. When $N\ge M^2$, we have $p=\log(N/M)\ge\log M=h$ and $\log N=p+h\le2p$, so the depth is $O(p/h)$. Multiplying the per-level cost in~\eqref{eq:digit-local-cost} by this depth, and using $\tau=1-\varepsilon$ and $a=\log(M/B)\le\log M=h$, gives
\[
 O\!\left(\frac NB\left[\frac ph+\frac ph\left(\frac ha\right)^{1-\varepsilon}\right]\right)
 =O\!\left(\frac NB\left[1+\frac{p}{a^\tau h^\varepsilon}\right]\right).
\]
Here the first term $p/h$ is absorbed by the second term $p/(a^\tau h^\varepsilon)$ because $a\le h$.

For space management, process buckets sequentially in depth-first order. After partitioning a large bucket, push each nonempty child's starting position, length, and next bit-group index onto a disk task stack, in decreasing bucket-value order so that children are popped in increasing order. Each task record occupies a constant number of machine words. Recursion management keeps only the current task and a fixed number of block buffers in memory; partitioning and local sorting reuse their memory workspace as needed. The depth is $O(\log N)$ and each level has at most $N$ active records, so temporary disk space is polynomial. Finally, Corollary~\ref{cor:parameter-transfer} in the appendix extends the result to all $2B\le M$.
\end{proof}

\subsection{Sorting with Bounded Memory}
\label{sec:direct-radix}

For $M$ below the fixed memory threshold, direct radix sorting gives a linear I/O bound.

\begin{lemma}[A linear number of individual record accesses]
\label{lem:direct-radix}
For every $M\ge2B$, an array of $N$ integers of $w=\Theta(\log N)$ bits can be sorted in $O(N)$ block transfers, using at most $M$ words of memory and polynomial disk space.
\end{lemma}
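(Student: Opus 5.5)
The plan is to implement an LSD (least-significant-digit first) radix sort with binary digits. Each of the $w=\Theta(\log N)$ passes would be a stable two-way distribution by one bit. Each pass has to cost $O(N/B)$ block transfers, so that the total becomes $O((N/B)\log N)$. That total is $O(N)$ only when $B\ge\log N$, which is not guaranteed. I therefore split into two cases. If $B\ge c\log N$, the per-pass scan bound suffices. If $B<c\log N$, I use the trivial bound $O(N)$ from individual record accesses. Even here the naive approach would cost $O(N\log N)$, so this case needs a separate argument.

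\textbf{Alternative uniform plan (the one I will actually carry out).} I use radix $2^{b}$ digits with $b=\lfloor\log(M/B)\rfloor-1\ge0$. The case $M/B<4$ is handled separately by two-way passes. One stable distribution pass keeps $2^{b}\le M/(2B)$ output buffers of one block each, plus one input block. Its cost is $O(N/B+2^{b})$ transfers, where the $2^b$ term pays for partial final blocks; this is $O(N/B)$ since $2^b\le M\le N$. The number of passes is $\lceil w/\max(b,1)\rceil=O(\log N)$. The total is then $O((N/B)\log N + \log N\cdot M/B)$. I bound this by $O(N)$ by splitting on $B$.
\begin{itemize}
\item If $B\ge\log N$, the total is at most $O(N+ M\log N/B)=O(N)$ since $M\le N$.
\item If $B<\log N$, I stop treating blocks as units. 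I sort by a single-element-access simulation: each record is read and written one block transfer at a time. A comparison-free counting sort by one bit per pass would then cost $O(N)$ per pass, which is too much. So instead I rely on the packing that $w$-bit keys allow.
\end{itemize}

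The main obstacle is exactly the regime of small $B$ with $\log N$ passes. I would try to avoid it by noting that the statement only needs $O(N)$. It suffices to use the comparison-sorting bound of Aggarwal and Vitter (multiway merge sort). That bound costs $O((N/B)\lceil\log(N/M)/\log(M/B)\rceil)$ transfers, which with $M\ge2B$ is $O((N/B)\log N)$. This is $O(N)$ when $B\ge\log N$.

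For $B<\log N$ I would instead use merge sort with block size treated as $1$. It is two-way merging that accesses one record per transfer, costing $O(N\log N)$. This is still too large, so the fallback needs care. My concrete first attempt is to observe that integers have $\Theta(\log N)$ bits, so $N$ distinct keys are polynomially bounded. A single distribution by the top $\log(M)$ bits would use $M/B$ buffers only if $M/B$ is large, and that is circular. I therefore expect the intended proof to allow bit-level packing. $B$ words hold $Bw$ bits, and the record-level passes of LSD radix sort on one bit each move $N$ records. Hence I would track only the permutation bits and route each record once via an index computed from a prefix-count array stored on disk. That is a counting sort in which every record is read once and written once to its final position, using $O(1)$ transfers per record. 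Computing the final positions costs $O(N)$ total transfers via a histogram of all $2^{w}$ possible values, which fits in polynomial disk space. Concretely: scan to build counts on disk ($O(1)$ transfers per record, updating one counter each), take prefix sums by a scan, and then place each record with $O(1)$ transfers. Repeated values are handled by incrementing the per-value pointer. This gives $O(N)$ block transfers, memory $O(B)\le M$, and disk space $O(2^w+N)$, which is polynomial since $w=\Theta(\log N)$. I would present this direct counting sort as the whole proof and skip the earlier cases.
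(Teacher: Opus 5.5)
There is a genuine gap: your final argument, a single counting sort whose histogram is indexed by all $2^w$ key values, does not give $O(N)$ block transfers. The hypothesis $w=\Theta(\log N)$ only gives $w\le c\log N$ for some constant $c$ that may exceed $1$ (for instance $w=2\log N$). The histogram can therefore have $2^w=N^{c}$ entries. Initializing these counters and taking their prefix sums touches every entry, which costs $\Theta(2^w/B)$ transfers. When $B$ is bounded, this is $\Theta(N^{c})$ rather than $O(N)$. Bounded $B$ is exactly the regime in which this lemma is used, namely the bounded-memory case of Corollary~\ref{cor:parameter-transfer}. Polynomial disk space is not the problem; the I/O count is. Lazy-initialization tricks remove the zeroing, but not the pass over all $2^w$ values needed to compute positions. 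With bucket lists instead of prefix sums, the same pass is needed to visit the buckets in order.

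The missing idea is to keep the number of buckets at most $N$ and use the fact that $w=O(\log N)$ leaves only $O(1)$ digits of about $\log N$ bits. If $w\ge\lceil\log N\rceil$, split each key into $\lceil w/\lfloor\log N\rfloor\rceil=O(1)$ digits of $\lfloor\log N\rfloor$ bits and run LSD radix sort. Each pass is a stable distribution over at most $2^{\lfloor\log N\rfloor}\le N$ buckets and costs $O(N)$ single-word accesses. You can implement a pass with your counting sort using per-value pointers, placing records in input order, or with the paper's bucket lists filled from the last record to the first. If $w<\lceil\log N\rceil$, your one-shot counting sort is fine, since $2^w<N$. This is the paper's proof.

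Separately, your claim ``memory $O(B)\le M$'' needs care at $M=2B$. Writing individual records into packed blocks requires read-modify-write and can need a third buffer. The paper avoids this by storing each individually accessed word in its own physical block, with conversion costing $O(N)$ transfers. Then one buffer holds the operand and the other accesses disk.
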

\begin{proof}
First suppose $w\ge\lceil\log N\rceil$. Use radix sorting from the lowest bit group to the highest, processing $\lfloor\log N\rfloor$ bits per pass, so only a constant number of passes is needed. Store a list head for each bucket and a successor pointer for each record on disk. Visit input positions from last to first, inserting each record at the head of its bucket. Then process buckets in increasing value order, following their lists and copying the complete records. Reverse insertion preserves the original input order within each bucket, so every pass is stable and processing all groups produces sorted output. Initializing the heads, inserting records, and reading them out require $O(N)$ individual accesses altogether.

If $w<\lceil\log N\rceil$, there are only $K=2^w<N$ possible integer values, so use counting sort directly. Since $w=\Theta(\log N)$, each counter occupies a fixed number of machine words. Initialize the $K$ counters on disk, scan the input, and update the appropriate counter word by word. Then process the values in increasing order, outputting each integer as many times as its count specifies. Initialization, counting, and output require $O(N+K)=O(N)$ word accesses and preserve every integer's multiplicity.

Store each individually accessed word in its own physical block; conversion to or from packed input/output takes $O(N)$ transfers. Two block buffers suffice: one holds the operand and the other accesses disk. Remaining counters, pointers, and loop state reside on disk; their wordwise accesses are included above. Memory use is at most $2B\le M$ words. For $w<\lceil\log N\rceil$ and $M$ below the fixed threshold, an index records the current step of the predetermined operation sequence and determines input/output positions. Appendix~\ref{app:formalization} places this index outside the $M$ words. The counters and lists occupy $O(N)$ words, giving $O(NB)=O(N^2)$ words of disk space.
\end{proof}

\subsection{Proof of the Main Sorting Theorem}
\label{sec:main-sorting-proof}

We now combine local sorting with radix sorting to prove Theorem~\ref{thm:sorting}.

\begin{proof}[Proof of Theorem~\ref{thm:sorting}]
Write $p=\log(N/M)$, $a=\log(M/B)$, $h=\log M$, and $\tau=1-\varepsilon$. First assume that both $M$ and $M/B$ exceed sufficiently large fixed constants, and distinguish two input-size ranges.

The first case is $M\le N\le M^2$. Apply Lemma~\ref{lem:scalar-local} directly, including the boundary $N=M$; its memory allocation already reserves buffers and control workspace. Since $N\ge M\ge2B$, the scan term $N/B$ absorbs the constant startup cost. Also, $h=\log M\le\log N$, so the lemma's cost factor satisfies
\[
 1+\left(\frac ha\right)^\tau
 =1+\frac{h}{a^\tau h^\varepsilon}
 \le1+\frac{\log N}{a^\tau h^\varepsilon}.
\]
Thus the block-transfer cost satisfies~\eqref{eq:simple-sorting} throughout $M\le N\le M^2$.

The second case is $N>M^2$. Apply Proposition~\ref{prop:msd-sorting} directly, at a cost of
\[
 O\!\left(\frac NB\left[1+\frac{p}{a^\tau h^\varepsilon}\right]\right)
\]
block transfers. Since $p=\log(N/M)\le\log N$, this cost also satisfies~\eqref{eq:simple-sorting}.

All these algorithms are deterministic, correct on every input, and use at most $M$ words of memory and polynomial disk space. Corollary~\ref{cor:parameter-transfer} in the appendix extends the result to every memory-to-block-size ratio satisfying $2B\le M$. If $M$ is below the fixed threshold, then $B$ is bounded, and the $O(N)$ block transfers of Lemma~\ref{lem:direct-radix} are also $O(N/B)$. Taking the fixed constant $\varepsilon=1.5\times10^{-5}>0$ and combining the cases proves the theorem.
\end{proof}

\section{Conclusion}
\label{sec:conclusion}

This paper studies integer sorting in the external-memory model, focusing on reducing block transfers between disk and memory. We prove that $N$ integers can be sorted in
\[
 O\!\left(\frac NB\left[1+\frac{\log N}{(\log(M/B))^{1-\varepsilon}(\log M)^\varepsilon}\right]\right)
\]
block transfers. When $N\ge M^2$ and $\log M/\log(M/B)\to\infty$, this bound asymptotically improves on multiway mergesort, capturing the advantage when blocks are large and the number $M/B$ of memory blocks is relatively small. The algorithm combines three components. At the lowest level, a coded transfer scheme implements binary address-field interchange. At the middle level, recursive Columnsort reduces local sorting of at most $M^2$ records to column sorting and address-field interchange. At the outer level, radix sorting recursively partitions by bit groups from highest to lowest, extending local sorting to arbitrary input sizes.

Designing a coded transfer scheme for sorting as a whole is difficult. Our approach shows that combining radix sorting with Columnsort reduces the overall problem to local sorting and binary address-field interchange. Designing a code for this basic interchange operation then suffices to reduce block transfers throughout the sorting algorithm. Further work can improve the exponent $\varepsilon$ in the denominator by optimizing the coding scheme, and explore how the same coding approach can accelerate other algorithms.

\FloatBarrier
\appendix
\section{Buffered I/O and Memory Simulation}
\label{app:io-simulation}

\subsection{Simulating sequential tape access}
\label{app:tape-access}

To implement the invertible lower-triangular coordinate transformations $L$ and $L^{-1}$ in Section~\ref{sec:coding-recursion}, we use the rearrangement routines of OpenAI~\cite[Lemmas~3--4]{OpenAI2026}. These routines were originally described in the multitape Turing machine model. We first explain how such a machine accesses its tapes, and then show how block buffers in memory simulate these operations in our external-memory model.

A multitape Turing machine stores its data on a fixed number of tapes. Each tape is a sequence of cells arranged in order, with a head that reads and writes them. In one elementary tape step, each head reads or changes the contents of its current cell, and then moves one cell to the left, one cell to the right, or stays in place. In the simplest binary model, each cell stores one bit. For convenience, an algorithm may instead store a symbol from a fixed finite alphabet in each cell, or use a fixed number of parallel tracks on a tape so that its head can read and write a constant number of bits at the same position in one step. In all these forms, each cell occupies a fixed constant number of bits. A disk block containing $b=Bw$ bits can therefore hold $\Theta(b)$ consecutive tape cells.

To simulate these operations in external memory, we store each tape as a consecutive sequence of bits on disk. Accessing the disk at every tape step would make the number of block transfers grow linearly with the number of steps $T$, which is too expensive. Instead, we keep a sliding window of three consecutive disk blocks in memory for each tape. The head can move within its window entirely in memory, and must move at least $\Theta(b)$ steps before the window needs updating.

\begin{lemma}[Buffered tape simulation]
\label{lem:buffered-tape}
Suppose a multitape Turing machine process uses a fixed number of tapes and performs $T$ elementary tape steps. It can be simulated in the external-memory model using $O(B)$ words of memory and $O(1+T/(Bw))$ block transfers. More generally, suppose the computation consists of $J$ successive phases with $T$ elementary tape steps in total, and each phase begins by loading the tape windows into memory and ends by writing modified blocks back to disk. Simulating all $J$ phases then takes $O(J+T/(Bw))$ block transfers.
\end{lemma}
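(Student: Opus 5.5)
The plan is a sliding-window simulation. Each tape is stored on disk as a contiguous bit sequence, and cells are packed so that a block of $b=Bw$ bits holds $\Theta(b)$ cells. Since a cell has a fixed number of bits, a fixed number of cells per block is possible. For each of the fixed number $k$ of tapes we keep in memory a window of three consecutive blocks, with the head always in the middle block. The finite control state of the machine, together with the head offsets inside the windows, takes $O(1)$ words. Total memory is therefore $3kB+O(1)=O(B)$ words.

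The simulation proceeds as follows.
\begin{itemize}
\item Each elementary step (read, write, move by at most one cell) is done entirely in memory, provided the head stays inside its window.
\item When a head crosses from the middle block into a neighbouring block, we \emph{recenter}. We write back the block on the far side if it was modified, shift the window by one block, and read the newly needed neighbouring block from disk. This costs $O(1)$ transfers.
\item After a recentering the head sits at the boundary of the new middle block, at distance at least $\Theta(b)$ cells from the next crossing that would trigger a recentering. Tapes of unbounded length or beyond the current disk area are handled by treating unwritten blocks as blank. Allocating a blank block costs $O(1)$.
\end{itemize}

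The main step is the amortization: recenterings must be charged correctly. I would use a potential argument per tape. Between two consecutive recenterings on the same tape, the head must travel at least $\Theta(b)$ cells, because it starts one full block away from both window edges that matter. With one-cell moves, this takes $\Omega(b)$ steps. Recenterings of tape $i$ therefore number at most $T_i/\Theta(b)$, plus one initial load. Summing over $k=O(1)$ tapes gives $O(1+T/(Bw))$ transfers, including the initial window loads and the final write-back of the at most $3k$ modified window blocks. The one subtlety is oscillation across a boundary. It is exactly why the window has three blocks rather than one: a head bouncing across a block edge never triggers repeated loads, since recentering leaves a full block of slack on each side.

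For the phased version, each phase begins by loading the $3k$ window blocks and ends by writing back the modified ones. That is $O(1)$ transfers per phase, or $O(J)$ in total. Within a phase the amortization above applies verbatim. The only difference is that the first recentering of the phase may occur after fewer than $\Theta(b)$ steps, and this is covered by the $O(1)$ per-phase charge. Summing over phases, the recentering charges add to $O(\sum_j T_j/(Bw))=O(T/(Bw))$. This gives $O(J+T/(Bw))$ transfers, and the memory bound $O(B)$ is unchanged because windows are reused across phases.
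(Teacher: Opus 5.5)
\textbf{There is a genuine gap in your recentering rule.} You keep the head in the middle block at all times, and you shift the window by one block as soon as the head crosses into a neighbouring block. Right after such a shift, the head sits on the boundary cell of the new middle block, and the old middle block is now its neighbour. A single step back therefore crosses the middle-block boundary again and triggers another recentering, which re-reads the block that was just evicted. A head oscillating across one block boundary thus costs $\Theta(1)$ transfers per step, so $\Theta(T)$ in total, and the claimed $O(1+T/(Bw))$ bound fails.

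Your claim that after recentering the head is ``at distance at least $\Theta(b)$ cells from the next crossing that would trigger a recentering'' is false under your own trigger. The full block of slack you invoke lies between the head and the \emph{window} edge, but your trigger is the \emph{middle-block} edge, which is one cell away. So the argument that the three-block window defeats oscillation does not apply to the scheme as you defined it.

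The repair, which is what the paper does, is to separate the trigger from the middle block. Let the head move freely anywhere inside the three-block window. Only when it steps outside the window, write back the modified blocks and load a new window centred on the block the head has just entered. After every update the head then lies in the middle block, and leaving the window again in either direction requires traversing an entire neighbouring block, which takes $\Omega(b)$ steps. This gives $O(1+T_i/b)$ window updates per tape per phase. With that change, the rest of your accounting goes through as written: the $O(1)$ initial loads and final write-backs per phase, and the sums over the $O(1)$ tapes and the $J$ phases. Your remark about a short first recentering in a phase becomes unnecessary, since the initial centred window already has a full block of slack.
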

\begin{proof}
For each tape, we maintain a sliding window of three consecutive disk blocks in memory. Whenever a window is loaded, it contains the block holding the head and the two blocks immediately to its left and right. Since the number of tapes is fixed and each disk block occupies $B$ machine words, all the windows together use $O(B)$ words of memory.

At the start of a phase, we load three consecutive blocks for each tape, centered on the block containing its initial head position. As long as a head stays within its current three blocks, all reads, writes, and head movements on that tape take place in the memory window and incur no disk transfers. When a head leaves its window, we write the modified blocks back to disk. We then load a new three-block window centered on the block the head has just reached, including that block and its two neighbors.

The three-block window prevents frequent disk accesses when a head moves back and forth across a block boundary. After each window update, the head lies in the middle block of the new window. To leave this window again in either direction, it must cross an entire neighboring block, requiring at least $\Theta(b)=\Theta(Bw)$ elementary tape steps. Thus, if the head on tape $i$ makes $T_i$ moves during a phase, it triggers at most $O(T_i/b)$ window updates. Each update takes a constant number of block transfers, as do the initial load and final writeback. Tape $i$ therefore incurs $O(1+T_i/b)$ transfers during that phase. Summing over the fixed number of tapes gives $O(1+T/(Bw))$ transfers for one phase; summing over all $J$ phases gives $O(J+T/(Bw))$ transfers in total.
\end{proof}

Lemma~\ref{lem:buffered-tape} applies directly to the stream operations within one level of the coding recursion in Section~\ref{sec:coding-recursion}. Splitting, copying, bitwise XOR, and merging are sequential reads and writes of disk streams. The lower-triangular transformations $L$ and $L^{-1}$ use the rearrangement routines of OpenAI~\cite[Lemmas~3--4]{OpenAI2026}, following the construction in Section~\ref{sec:coding-recursion}. Each data stream is stored consecutively in lexicographic address order. The $2m$ short fields being transformed are coordinates over $\mathbb{Z}/q^f\mathbb{Z}$, and the preselected odd prime $q$ ensures that $L$ and $L^{-1}$ are invertible over this ring; all other coordinates remain unchanged. This input format and these algebraic conditions satisfy the requirements of the cited routines. Since the dimension and coefficients of the matrices are fixed, the transformations use a fixed number of tapes. Successive prefix segments reuse the work tapes, and the total number of tape steps is linear in the number of bits in the stream. The circuit has fixed size, so all stream operations at one recursion level, apart from recursive calls, form $J=O(1)$ tape phases with $T=O(V)$ steps in total. Lemma~\ref{lem:buffered-tape} therefore gives $O(1+V/(Bw))$ block transfers. Finally, array dimensions and address offsets are bounded by a fixed polynomial in $N$ and occupy a constant number of $w$-bit machine words. The associated calculations take place in memory and incur no block transfers.

\subsection{Suspending and restoring recursive calls}
\label{app:recursive-suspension}

In the recursion of Section~\ref{sec:coding-recursion}, each call generates $S$ child calls, executed one after another. If every recursion level kept its own block buffers in memory, memory use would grow with the recursion depth. To use only a fixed number of memory blocks, we follow the depth-first execution arrangement of OpenAI~\cite[Proposition~11]{OpenAI2026}, running only one call at a time.

\textbf{Suspending the parent and freeing memory.} Before entering a child call, write the parent's buffered blocks back to disk. Also save its current parameters on disk: the four array dimensions $(P,G,\ell,e)$, the disk locations of its data streams, the tape-head positions, the current execution step, and which child call is being processed. This releases the memory occupied by the parent, allowing the child to use the same space for its own buffers and variables.

\textbf{Finishing the child and restoring the parent.} After the child has processed its $V/W$ bits, write its buffered data back to disk and copy the result of the field interchange into the corresponding parent stream. Then reload the parent's saved parameters and buffered blocks, and continue with its next step. The next child reuses the temporary disk space used by the previous one.

\textbf{Block transfers and space.} The parent's parameters occupy a fixed number of machine words. Saving or restoring them and switching block buffers therefore takes $O(1)$ block transfers. Copying $V/W$ bits for one child takes $O(1+V/(WBw))$ transfers, and initializing its workspace requires only the corresponding sequential scans. Since the circuit is fixed, both $S$ and $W$ are constants, so all $S$ children together add only $O(1+V/(Bw))$ transfers. Adding the stream-operation cost from Appendix~\ref{app:tape-access} gives the recurrence in Lemma~\ref{lem:coded-recursion}. Before entering the next level, the previous call's data are written back to disk. Memory therefore holds only the current call's buffers, one block buffer for reading and writing the saved parameters, and a constant number of control words, using $c_1B+c_2$ words for fixed constants $c_1,c_2$. Appendix~\ref{app:block-simulation} explains how to reserve this space. The number of unfinished calls stored on disk is at most the recursion depth $O(\log(2+e))$. No level has more data than the padded initial array, so total disk space is polynomial in $N$, and every disk address occupies only a constant number of $w$-bit machine words.

\subsection{Constant-factor block simulation}
\label{app:block-simulation}

The algorithms in Sections~\ref{sec:interchange}--\ref{sec:global-sorting} keep a fixed number of block buffers in memory. The main text therefore first assumes $M/B\ge R_0$ and $M\ge C_0$, for sufficiently large fixed constants $R_0,C_0$. When memory holds only a few physical blocks, as for $M=2B$, it cannot hold $R_0$ buffers of $B$ words each. We reduce the algorithm's working memory $\widehat M$ and logical block size $\widehat B$ by fixed factors so that $\widehat M/\widehat B\ge R_0$. Disk transfers still read or write an entire physical block of $B$ words. We therefore reserve one $B$-word transfer buffer, receive the whole physical block there, and pass its first $\widehat B$ valid words to the algorithm.

\begin{lemma}[Constant-factor block simulation]
\label{lem:io-normalization}
For any fixed constants $R_0\ge2$ and $C_0$, there are fixed constants $K,C_*$ such that, whenever $M\ge\max\{2B,C_*\}$, setting
\[
 \widehat M=\lfloor M/4\rfloor,
 \qquad \widehat B=\max\{1,\lfloor B/K\rfloor\},
\]
gives $\widehat M\ge C_0$ and $\widehat M/\widehat B\ge R_0$. An algorithm using $\widehat M$ words and logical blocks of $\widehat B$ words can be simulated in $M$ words, with one physical transfer per logical block read or write. For input and output totaling $L$ words, format conversion costs $O(1+L/B)$ transfers, and disk space grows by only a fixed factor. Moreover, $\log(\widehat M/\widehat B)=\Theta(\log(M/B))$ and $\log\widehat M=\Theta(\log M)$.
\end{lemma}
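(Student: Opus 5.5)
We have to pick $K$ and $C_*$ so that the two ratio conditions hold, describe the simulation, and then check the logarithmic equivalences. All of these are elementary.

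\textbf{Step 1: choosing the constants.} Take $K=8R_0$.

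First suppose $B\ge K$. Then $\widehat B=\lfloor B/K\rfloor\le B/K$. Since $M\ge2B$ and $M\ge C_*\ge8$, we get $\widehat M=\lfloor M/4\rfloor\ge M/8$. Hence $\widehat M/\widehat B\ge (M/8)(K/B)\ge 2K/8=2R_0$.

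Now suppose $B<K$. Then $\widehat B=1$ and $\widehat M/\widehat B=\widehat M$. It therefore suffices to choose $C_*\ge\max\{8,\,4(C_0+R_0+1)\}$, which gives $\widehat M\ge\max\{C_0,R_0\}$.

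In both cases $\widehat M\ge C_0$ holds by the same choice of $C_*$.

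\textbf{Step 2: the simulation.} Memory is split into three parts:
\begin{itemize}
\item the algorithm's $\widehat M\le M/4$ words;
\item one $B$-word physical transfer buffer, where $B\le M/2$;
\item a constant number of control words, absorbed by taking $C_*$ large.
\end{itemize}
Together these fit in $M$ words.

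Each logical block is stored in its own physical block, occupying only the first $\widehat B$ words. A logical read transfers that physical block into the transfer buffer and copies the first $\widehat B$ words into the algorithm's workspace. A logical write does the reverse, padding the rest of the block. Each logical transfer therefore costs exactly one physical transfer.

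Disk space grows by the factor $B/\widehat B\le 2K$ when $B\ge K$, and by at most $B<K$ otherwise. Either way this is a fixed factor.

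Format conversion maps packed input into the sparse layout, and the sparse output back into packed form. It is a single sequential scan. Each packed physical block expands into at most $\lceil B/\widehat B\rceil=O(K)$ sparse blocks. The sparse side can hold $\widehat B$ words in the transfer buffer while the packed block is assembled in the algorithm's workspace region, since $\widehat M\ge B/4$ when $B$ is large. If $B/4>\widehat M$, I would stream word-chunks of size $\widehat B$ through the transfer buffer instead. The cost is $O(1+L/B)$ packed transfers plus $O(K(1+L/B))=O(1+L/B)$ sparse transfers.

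\textbf{Step 3: the logarithms.} I expect this to be the only delicate point, because $\log(M/B)$ can be as small as $1$ when $M=2B$.

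We have $\widehat M/\widehat B\le (M/4)/\max\{1,B/(2K)\}\le K M/B$. Also $\widehat M/\widehat B\ge \min\{M/(8B)\cdot K,\ \widehat M\}$. When $B\ge K$ this lower bound is at least $K M/(8B)\ge M/B$. When $B<K$ we have $\widehat M\ge M/8\ge M/(8B)\cdot B$, so $\widehat M/\widehat B\ge (M/B)^{1}\cdot(\text{const})$.

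Together with $\widehat M/\widehat B\ge 2R_0\ge2$ and $M/B\ge2$, both logarithms are at least $1$. Two quantities at least $2$ that lie within constant factors of each other have logarithms within constant factors, so $\log(\widehat M/\widehat B)=\Theta(\log(M/B))$.

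Similarly, $M/8\le\widehat M\le M$ and $M\ge C_*\ge 16$ give $\log\widehat M=\Theta(\log M)$.
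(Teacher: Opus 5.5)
Your constants, the sparse layout with one logical block per physical block, the memory split $\widehat M+B+O(1)\le M$, the one-transfer-per-logical-access simulation, and the disk-space factor all match the paper's proof. Your Step~3 is more explicit than the paper's one-line justification of the logarithmic equivalences.

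\textbf{The gap is in format conversion.} This is exactly where the tight regime $M\approx2B$ matters. You keep the sparse side in the transfer buffer while a packed physical block is assembled in the algorithm's workspace, and you justify this by $\widehat M\ge B/4$. But a packed block has $B$ words, and every physical transfer moves a full $B$-word block. Your scheme therefore needs $\widehat M\ge B$, i.e., room for two full physical blocks plus control words. At $M=2B$ you have $\widehat M=\lfloor B/2\rfloor<B$, and memory holds only two physical blocks in total. So the procedure as described does not fit in $M$ words.

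Your fallback does not rescue this. It is conditioned on $B/4>\widehat M$, which never happens, since $\widehat M=\lfloor M/4\rfloor\ge\lfloor B/2\rfloor$ and $\widehat M\ge R_0\ge2$. The failing case is therefore never handled. Moreover, in the packing direction a chunked fallback needs a read--modify--write of each target block, because a physical write overwrites all $B$ words. You do not describe this, and it is incompatible with ``a single sequential scan.''

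\textbf{The repair is the paper's scheme.} Handle one logical group of $\widehat B$ consecutive words at a time, using only the single transfer buffer and $\widehat B\le\widehat M$ words of workspace.
\begin{itemize}
\item For input: read the at most two packed blocks the group spans into the transfer buffer, one at a time, and copy the needed words into workspace. Then place the group in the transfer buffer and write its sparse block.
\item For output: hold one logical block in workspace. For each of the at most two packed target blocks it meets, read that block, overwrite the relevant positions, and write it back.
\end{itemize}
Each packed block is then touched $O(B/\widehat B+1)=O(K)$ times. The total is $O(1+L/\widehat B)=O(1+L/B)$, because $\widehat B\ge B/(2K)$.

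A minor nit: Step~3 asserts $C_*\ge16$, which your choice of $C_*$ does not guarantee. This does not matter, because $\widehat M\ge R_0\ge2$ together with $\widehat M\ge M/8$ already gives $\log\widehat M=\Theta(\log M)$.
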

\begin{proof}
\textbf{Parameter relations.} Choose a fixed integer $K\ge4R_0$ and then $C_*$ large enough that $\widehat M\ge M/8\ge\max\{R_0,C_0\}$. If $B\ge K$, then $M\ge2B$ gives $\widehat M/\widehat B\ge KM/(8B)\ge R_0$. If $B<K$, then $\widehat B=1$ and again $\widehat M/\widehat B\ge R_0$. Also $B/(2K)\le\widehat B\le B$. Thus the memory and block sizes change by only fixed factors. Both memory-to-block ratios are at least two, so the stated logarithmic relations follow.

\textbf{Transfers and format conversion.} Store each logical block in the first $\widehat B$ words of its own physical block, leaving the rest unused. Allocate at most $M/4$ words to the algorithm, one $B\le M/2$ word buffer to disk transfers, and the remaining memory to control variables. To read a logical block, read its physical block into the transfer buffer and copy the first $\widehat B$ words into working memory. To write a logical block, place its $\widehat B$ words in the buffer and write the whole physical block. Each logical block read or write therefore takes one physical transfer.

Before running the algorithm, extract groups of $\widehat B$ input words in their original order and write each group as a logical block. A group spans at most two original physical blocks. Read these blocks one at a time, save the needed words in working memory, and reuse the transfer buffer to write the group. After the algorithm finishes, extract the valid words from its logical blocks in the prescribed output order and pack them into physical blocks. Temporarily hold one logical block in working memory, then read each target physical block, fill the appropriate positions, and write it back. A group spans at most two target blocks. Both conversions take a constant number of transfers per logical block, totaling $O(1+L/\widehat B)=O(1+L/B)$. Storage per logical block increases by at most $B/\widehat B\le2K$, so total disk space still increases by only a fixed factor.
\end{proof}

\begin{corollary}[Extension to general parameters]
\label{cor:parameter-transfer}
The address-field interchange bound of Lemma~\ref{lem:memory-cutoff-interchange} holds for every $M\ge\max\{2B,C_{\rm abs}\}$. The local sorting bound of Lemma~\ref{lem:scalar-local}, the large-input sorting bound of Proposition~\ref{prop:msd-sorting}, and the sorting bound of Theorem~\ref{thm:sorting} hold for every $2B\le M$ under their respective input-size conditions.
\end{corollary}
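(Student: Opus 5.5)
We will derive Corollary~\ref{cor:parameter-transfer} from Lemma~\ref{lem:io-normalization}, applied once to each algorithm. Fix $R_0$ and $C_0$ to be the thresholds assumed in Sections~\ref{sec:interchange}--\ref{sec:global-sorting}: $M/B\ge R_0$, $M\ge C_0$, and, in the sorting sections, $H=\lfloor M/C\rfloor\ge B$. Lemma~\ref{lem:io-normalization} then supplies constants $K$ and $C_*$. We will take $C_{\rm abs}\ge C_*$, enlarged if necessary so that the fixed circuit's control information also fits.

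For every $M\ge\max\{2B,C_*\}$, run the algorithm with logical parameters $\widehat M=\lfloor M/4\rfloor$ and $\widehat B$. The lemma gives $\widehat M\ge C_0$ and $\widehat M/\widehat B\ge R_0$. The condition $H\ge B$ becomes $\lfloor \widehat M/C\rfloor\ge\widehat B$, which we ensure by choosing $K$ large relative to $C$. The proven bounds then hold with hatted parameters. Each logical transfer costs one physical transfer, and format conversion adds $O(1+L/B)$. Substituting $\widehat B=\Theta(B)$, $\log(\widehat M/\widehat B)=\Theta(a)$ and $\log\widehat M=\Theta(h)$ changes every factor $(u/a)^\tau$, $(h/a)^\tau$, $p/(a^\tau h^\varepsilon)$ by at most a constant, since $\tau,\varepsilon$ are fixed.

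Two points need care.

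\textbf{Input-size conditions.} These are stated in terms of $M$: $n\le M^2$ and $N\ge M^2$. For Lemma~\ref{lem:scalar-local}, an $n\le M^2$ is at most $16\widehat M^2+O(M)$. We split it into $O(1)$ pieces of size at most $\widehat M^2$ and finish with constant-many two-way merges, each a scan. For Proposition~\ref{prop:msd-sorting}, the hatted $p$ satisfies $\log(N/\widehat M)=p+O(1)$, and when $N\ge M^2$ we have $N\ge\widehat M^2$. So the condition and the bound both transfer, with $O(1)$ absorbed since $p\ge h\ge1$.

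\textbf{Memory below the threshold.} In Theorem~\ref{thm:sorting}, when $M<C_*$ we have $B\le M/2$ bounded, so Lemma~\ref{lem:direct-radix} gives $O(N)=O(N/B)$. For Lemma~\ref{lem:scalar-local} and Proposition~\ref{prop:msd-sorting} in the same regime, the same direct radix sort (applied with $n$ in place of $N$, keys of $O(w)$ bits) gives $O(n)=O(n/B)$. That already meets the stated bounds.

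The main obstacle is the Lemma~\ref{lem:scalar-local} case with small $n$, where $n\le M^2$ but $n$ is polynomially unrelated to $N$. Lemma~\ref{lem:direct-radix} radix-sorts in $\log N$-bit passes and so still costs $O(n+\text{buckets})$, which could exceed $O(n)$. The fallback is to sort $n\le M^2<C_*^2$ records, a constant number, with constant-many transfers, matching the $O(1+\cdot)$ term.
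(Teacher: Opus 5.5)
Your proposal is correct and follows essentially the same route as the paper. The paper also runs each algorithm at the simulated parameters $\widehat M,\widehat B$ of Lemma~\ref{lem:io-normalization}, handles $\widehat M^2<n\le M^2$ by splitting into $O(1)$ pieces and merging, uses that $N\ge M^2$ gives $N\ge\widehat M^2$ with $\log(N/\widehat M)=p+O(1)$, and for $M<C_*$ applies Lemma~\ref{lem:direct-radix} to the full input and the constant-size argument to local inputs. Your explicit check that $H\ge B$ survives the simulation, by taking $K$ (equivalently $R_0$) large relative to $C$, is a detail the paper absorbs into its assumption that $M/B$ is sufficiently large.
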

\begin{proof}
\textbf{Address-field interchange and sorting with larger memory.} For $M\ge C_*$, take $C_{\rm abs}\ge C_*$. Use Lemma~\ref{lem:io-normalization} to convert the input to logical blocks, run the main-text interchange or sorting algorithm at parameters $\widehat M,\widehat B$, and pack the output back into physical blocks. The block size and the logarithms of memory size and block count change by only constant factors. Thus the interchange and local sorting bounds change by only constant factors, and their bounds absorb the linear conversion cost. For Proposition~\ref{prop:msd-sorting}, $N\ge M^2$ implies $N\ge\widehat M^2$ and $\log(N/M)\ge\log M$. Reducing memory adds only a constant to $\log(N/M)$, so the large-input sorting bound also changes by only a constant factor. Combining the input-size cases of Section~\ref{sec:main-sorting-proof} gives the bound of Theorem~\ref{thm:sorting}.

For local sorting, also consider $\widehat M^2<n\le M^2$. Since $\widehat M=\Theta(M)$, split the array into a constant number of subarrays of length at most $\widehat M^2$. Sort them at the simulated parameters, then perform a constant number of two-way sequential merge rounds at the same simulated parameters. The original bound absorbs the merge and conversion scans.

\textbf{Sorting with bounded memory.} If $2B\le M<C_*$, both $M$ and $B$ are bounded by the fixed constant $C_*$. A local input of at most $M^2$ records has constant size and needs only $O(1)$ transfers. For a general input of size $N$, apply Lemma~\ref{lem:direct-radix}: its $O(N)$ transfers equal $O(N/B)$ for bounded $B$, satisfying all the stated sorting bounds. These simulations, merges, and direct sorts preserve correctness, the $M$-word memory bound, and polynomial disk space.
\end{proof}

\section{Verification of the Fixed Parameters}
\label{app:parameters}

This appendix verifies the inequality $\varepsilon\ln m<\eta$ used in Corollary~\ref{prop:finite-circuit} of Section~\ref{sec:network}.

\begin{lemma}[Parameter inequality]
\label{lem:exact-exponent}
For $m=512$, $\varepsilon=3/200000=1.5\times10^{-5}$, and $\eta=33/338432$, we have $\varepsilon\ln m<\eta$.
\end{lemma}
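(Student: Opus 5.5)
We need $\varepsilon\ln 512<\eta$. Since $\ln 512=9\ln 2$, the left side is
\[
 \varepsilon\ln m=\frac{3}{200000}\cdot 9\ln2=\frac{27\ln2}{200000}.
\]
The plan is to bound $\ln 2$ from above by a simple rational and then reduce the claim to a comparison of two explicit rationals.

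The inequality to prove is
\[
 \frac{27\ln 2}{200000}<\frac{33}{338432},
\]
which is equivalent to
\[
 \ln2<\frac{33\cdot200000}{27\cdot338432}=\frac{6600000}{9137664}.
\]
The right side is about $0.72229$. It therefore suffices to use $\ln 2<0.7$, or the sharper $\ln2<0.6932$.

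To certify a bound such as $\ln 2<7/10$ with rational arithmetic only, I will use the series
\[
 \ln 2=2\,\mathrm{artanh}(1/3)=2\sum_{k\ge0}\frac{3^{-(2k+1)}}{2k+1}.
\]
Every term is positive. After the first term, the tail is bounded by a geometric series with ratio $1/9$.

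Keeping the first two terms, the partial sum is
\[
 2\left(\frac13+\frac1{81}\right)=\frac{56}{81}.
\]
The tail after these two terms is at most
\[
 2\cdot\frac{3^{-5}}{5}\cdot\frac{9}{8}=\frac{1}{540}.
\]
This gives
\[
 \ln 2<\frac{56}{81}+\frac1{540}<0.6933<0.7.
\]

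Finally, I will check the rational inequality $0.7\cdot27\cdot338432<33\cdot200000$ by cross-multiplication:
\[
 18.9\cdot338432=6396364.8<6600000.
\]
There is no real obstacle here. The only care needed is making the bound on $\ln2$ rigorous rather than decimal, and the alternating-free artanh series with its geometric tail bound handles that cleanly.
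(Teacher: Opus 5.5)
Your proof is correct and has the same structure as the paper's: write $\ln 512=9\ln 2$, certify $\ln 2<7/10$, and finish with a rational comparison (the paper checks $\frac{3}{200000}\cdot\frac{63}{10}<\frac{33}{338432}$, which is your cross-multiplication). The only difference is how you certify $\ln 2<7/10$. The paper shows $e^{7/10}>1+\frac{7}{10}+\frac{(7/10)^2}{2}+\frac{(7/10)^3}{6}>2$; since every term of that series is positive, the truncation is automatically a lower bound and no tail estimate is needed. Your series $\ln 2=2\,\mathrm{artanh}(1/3)$ does need the geometric tail bound, but it gives the sharper bound $\ln 2<0.6933$, which is not needed here.
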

\begin{proof}
The first four terms of the exponential series give
\[
 e^{7/10}>1+\frac7{10}+\frac{(7/10)^2}{2}
                +\frac{(7/10)^3}{6}>2.
\]
Thus $\ln2<7/10$, so $\ln512=9\ln2<63/10$. Therefore,
\[
 \varepsilon\ln512<\frac{3}{200000}\times\frac{63}{10}
 <\frac{33}{338432}=\eta.\qedhere
\]
\end{proof}

\section{Arbitrary Field Widths and Prefix Counts}
\label{app:arbitrary-fields}

Lemma~\ref{lem:interchange-scales} in Section~\ref{sec:recursion-bound} requires a field width of the form $e=t_0m^k$ and a prefix count $P$ divisible by $W^k$. For general $e\ge1$ and $P\ge1$, first handle calls needing no recursion and fields that are too short. In the remaining cases, let $k$ be the required recursion depth. Move the highest $\rho$ digits of each field $H,D$ into the prefix, dividing each original segment into $R=q^{2\rho}\ge W^k$ genuine smaller segments. Append fewer than $W^k$ zero segments to each original segment so that the count becomes a multiple of $W^k$, increasing the data volume by less than a factor of two. Then use the base-$m$ expansion of the remaining width to split the low fields into pieces of widths $t_0m^j$. Apply Lemma~\ref{lem:interchange-scales} to the entire padded array for each pair of pieces, completing the exchange within each smaller segment; any residual piece shorter than $t_0$ is exchanged by scans. Each completed exchange fixes the segment prefixes. Finally, remove the trailing zero segments from each original segment and move the high digits to their positions in the exchanged layout.

\begin{lemma}[Interchange for arbitrary field widths and prefix counts]
\label{lem:arbitrary-fields}
Use the fixed parameters of Lemma~\ref{lem:interchange-scales}, with $t_0=\Theta(\log(M/B))$. For arbitrary positive integers $P,G,\ell,e$, consider a bit array stored on disk in lexicographic order with addresses
\[
 (p,H,g,D,z)\in[P]\times[q^e]\times[G]\times[q^e]\times[\ell],
\]
where $P,G,\ell,q^e$ are bounded by a fixed polynomial in $N$. Exchanging the address fields by $(p,H,g,D,z)\mapsto(p,D,g,H,z)$ on all $V=PG\ell q^{2e}$ bits takes at most
\begin{equation}
 Q(e,V)=O\!\left(1+\frac{V}{Bw}\left[1+\left(\frac{e}{t_0}\right)^\tau\right]\right)
 \label{eq:cutoff-general-q}
\end{equation}
block transfers. Padding increases the data volume by less than a factor of two. The algorithm uses at most $M$ words of memory and polynomial disk space.
\end{lemma}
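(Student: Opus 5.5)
We reduce to Lemma~\ref{lem:interchange-scales} following the outline just given. First we dispose of the easy cases. If $e\le t$, Case~1 of the proof of Lemma~\ref{lem:interchange-scales} needs no recursion and no divisibility of $P$, so $O(1+V/(Bw))$ transfers suffice. If $V\le Mw/2$, we read the whole array into memory, permute it and write it back. Otherwise let $k=\lfloor\log_m(e/t_0)\rfloor\ge0$ be the largest exponent with $t_0m^k\le e$. We write the base-$m$ expansion $e=\rho+\sum_{j=0}^{k}c_jt_0m^j+r_0$, with digits $c_j\in[m]$, residual $0\le r_0<t_0$, and a reserved high part of $\rho$ digits. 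We choose $\rho=\lceil k\log_qW/2\rceil=O(\log e)$, so that $q^{2\rho}\ge W^k$. If $e$ is too small to spare $\rho$ digits, then $e=O(t_0)$ and we fall back on a constant number of Case-1 style interchanges of width at most $t$.

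Next comes the prefix reshaping. By one tape phase (Lemma~\ref{lem:buffered-tape}), we rotate the top $\rho$ digits of $H$ and of $D$ into the prefix, which moves $(p,H_{\rm hi},H_{\rm lo},g,D_{\rm hi},D_{\rm lo},z)$ to $(p,H_{\rm hi},D_{\rm hi},H_{\rm lo},g,D_{\rm lo},z)$. This is itself a field rearrangement whose blocks have widths $\rho$ and $e-\rho+\log_q G$. We would implement it by a short sequence of interchanges: first swap $D_{\rm hi}$ with the block $(H_{\rm lo},g)$ of unequal width by generalized transposition scans, or, more simply, note that $q^{\rho}=\mathrm{poly}(e)\le (M/B)^{1/4}$ for large $M/B$, so the move can be done by distribution into $q^{\rho}$ buffered streams exactly as in Case~1. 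Each original segment now splits into $R=q^{2\rho}$ segments. We then append fewer than $W^k\le R$ zero segments per original segment, so the new prefix count is a multiple of $W^k$ and the volume grows by less than a factor of two.

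Now we perform the low interchanges. We split $H_{\rm lo}$ and $D_{\rm lo}$ into corresponding pieces of widths $t_0m^j$ (with multiplicity $c_j$) and $r_0$. For each pair of corresponding pieces we apply Lemma~\ref{lem:interchange-scales} to the padded array. In that call, the pieces above the current one join the prefix, which keeps the prefix count a multiple of $W^k\supseteq W^j$. The pieces between the two fields join $g$, and the pieces below join the suffix. The residual $r_0<t_0\le t$ piece uses Case~1. Because interchanging different disjoint pieces commutes, the composition exchanges $H_{\rm lo}$ with $D_{\rm lo}$.

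The cost is $\sum_j c_j\,O(1+\frac{V}{Bw}m^{j\tau})=O(1+\frac{V}{Bw}m^{k\tau})=O(1+\frac{V}{Bw}(e/t_0)^\tau)$, a geometric sum dominated by the top term since $c_j<m$ is constant. Finally we remove the zero segments and swap $H_{\rm hi}$ and $D_{\rm hi}$ while moving them back out of the prefix. This is again a width-$\rho$ rearrangement handled as before, and it costs $O(1+V/(Bw))$. Memory and disk bounds are inherited from Lemma~\ref{lem:interchange-scales} together with the constant blow-up. The main obstacle is the $\rho$-digit relocation steps, since they move digits across the intervening coordinate $g$ of unbounded size. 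My first attempt would be the buffered distribution argument, which needs $q^{2\rho}=\mathrm{poly}(\log(M/B))$ buffers and is therefore affordable. If that fails, I would fall back on recursively applying the statement itself at width $\rho\ll e$.
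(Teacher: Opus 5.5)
Your skeleton is the paper's: move $\rho$ high digits into the prefix, pad each original segment so the segment count is a multiple of $W^k$, split the low width into base-$m$ pieces of widths $t_0m^j$, call Lemma~\ref{lem:interchange-scales} piece by piece, sum a geometric series, then move the high digits back. The genuine gap is the step you flag yourself, relocating the $\rho$ high digits across $(H_{\rm lo},g)$. You claim this costs $O(1+V/(Bw))$.

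\textbf{Why the one-shot distribution fails.} Your primary method sends the array into $q^{\rho}$ (or $q^{2\rho}$) buffered streams at once. You justify this by $q^{\rho}=\mathrm{poly}(e)\le(M/B)^{1/4}$, and later by $q^{2\rho}=\mathrm{poly}(\log(M/B))$. Both are false. From $q^{2\rho}\ge W^k$ and $k\approx\log_m(e/t_0)$ you get $q^{2\rho}\gtrsim(e/t_0)^{\log_m W}$ with $\log_m W\approx2.7$, and $e/t_0$ is not controlled by $M/B$. Under the lemma's hypotheses $e$ can be $\Theta(\log N)$.

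The failure occurs exactly in the regime that matters. For $M=2B$, after Lemma~\ref{lem:io-normalization} the ratio $M/B$ is a fixed constant, so $t_0=\Theta(1)$. Yet the interchanges issued by Lemma~\ref{lem:scalar-local} have $e=\Theta(\log M)$. You would therefore need polylogarithmically many stream buffers while only $O(1)$ fit in memory.

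\textbf{Why your alternatives do not close it.} The ``generalized transposition scans'' are unspecified. The recursive fallback does not apply, because moving a $\rho$-digit block across $(H_{\rm lo},g)$ is not an equal-width field interchange: $g$ has arbitrary size $G$.

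\textbf{The missing idea.} Move the high part one base-$q$ digit at a time, as the paper does.
\begin{itemize}
\item For a single digit, one scan distributes the array into $q$ streams by that digit.
\item Once that digit is fixed, all other coordinates keep the same lexicographic order, so each stream is then read sequentially in target order.
\item This needs only $q+1$ buffers and costs $O(1+V/(Bw))$ per digit, so each relocation costs $O(\rho)$ scans.
\end{itemize}
This is affordable because $\rho=O(1+k)=O(1+\log(e/t_0))=O((e/t_0)^\tau)$, so the relocation is absorbed into the main term. Your definition of $\rho$ does give this bound. The weaker $\rho=O(\log e)$ you wrote would not suffice: when $e<mt_0$ with $t_0$ large, the budget is $O(1)$ scans.

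\textbf{A smaller slip.} Your $\sum_j c_j=O(k)$ calls each carry an additive $O(1)$. That total is not $O(1)$ just because $c_j<m$. It is absorbed only because you are in the case $V>Mw/2$, which gives $V/(Bw)>1$.
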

\begin{proof}
\textbf{Recursion depth and short fields.} If $V\le Mw/2$, read the array into memory and perform the exchange there. If $e\le t$, use the sequential-scan method of Section~\ref{sec:recursion-bound}. Otherwise, put $x=e/t_0>1$, choose maximum recursion depth $k=\lceil\log_m x\rceil$, and take the smallest positive integer $\rho=\lceil k\log W/(2\log q)\rceil$ for which $R=q^{2\rho}\ge W^k$. If $\rho\ge e$, then $t_0\ge1$ and $\rho=\Theta(k)$ give $x=e/t_0\le e\le\rho=O(1+\log x)$. Since linear growth exceeds logarithmic growth, this inequality bounds $x$ by a fixed constant. Split the fields into a constant number of pairs of width at most $t_0$ and exchange them by sequential scans, at a total cost of $O(1+V/(Bw))$. Henceforth assume $\rho<e$.

\textbf{Moving high digits into the prefix and exchanging the fields in three steps.} Split each field into a high part of width $\rho$ and a low part of width $h=e-\rho$, writing $H=(H_+,H_-)$ and $D=(D_+,D_-)$. Perform the exchange in three steps:
\begin{equation}
\begin{aligned}
 &(p,H_+,H_-,g,D_+,D_-,z)\\
 \xmapsto{E}\;&(p,D_+,H_+,H_-,g,D_-,z)\\
 \xmapsto{L}\;&(p,D_+,H_+,D_-,g,H_-,z)\\
 \xmapsto{F}\;&(p,D_+,D_-,g,H_+,H_-,z).
\end{aligned}
\label{eq:arbitrary-fields-layouts}
\end{equation}
Step $E$ moves $D_+$ forward to just after $p$ within each prefix segment. Step $L$ fixes $(p,D_+,H_+)$ and exchanges $H_-,D_-$. Step $F$ moves $H_+$ backward to just before $H_-$ within each $(p,D_+)$ segment. Steps $E$ and $F$ each move $\rho$ base-$q$ digits, preserving their order within each high part.

To move one digit, use $q$ stream buffers and one output buffer. First scan the whole array, distributing bits among $q$ streams according to the digit and preserving the order of all other coordinates. Then read the required data from these streams in lexicographic target-address order. For a forward move, process each outer prefix by concatenating its portions of streams $0$ through $q-1$. For a backward move, select the stream by the digit's value in the target address and read the required consecutive portion. Once that digit is fixed, all other coordinates have the same lexicographic order before and after the move, so every stream is read sequentially. Different outer prefixes continuously share the streams and buffers; a prefix boundary does not flush a partial block. Remaining partial blocks are handled only at the end of each complete stream. Since $q$ is fixed, one digit move takes a constant number of sequential reads and writes over the whole array, costing $O(1+V/(Bw))$ transfers. Moving $\rho$ digits therefore takes $O(\rho)$ scans.

\textbf{Padding the segments and splitting the low fields.} After step $E$, the $R=q^{2\rho}$ values of $(D_+,H_+)$ divide each original segment $p\in[P]$ into $R$ consecutive genuine segments. Let $R'$ be the smallest multiple of $W^k$ at least $R$. Since $R\ge W^k$, we have $R'<2R$. In one sequential scan, append $R'-R$ equally long zero segments after the $R$ genuine segments for each $p$. Each segment has $q^{2h}G\ell$ bits, the count $PR'$ is divisible by $W^k$, and the total volume is $V'=(R'/R)V<2V$.

Next perform step $L$ on these $PR'$ segments, exchanging the low fields of width $h=e-\rho$. Divide $h$ by $t_0$ and expand the quotient in base $m$ as $\sum_{j=0}^k a_jm^j$, with $0\le a_j<m$ and remainder $0\le r<t_0$. Split each low field into corresponding consecutive pieces: $a_j$ pieces of width $t_0m^j$ for each $j$, and a final piece of width $r$ if $r>0$. Exchange corresponding pairs from left to right. For a pair of width $t_0m^j$, the preceding low-field pieces join the prefix. If their total width is $b$, the current prefix count is $PR'q^b$, which is divisible by $W^j$. Apply Lemma~\ref{lem:interchange-scales} to the entire padded array for this pair; exchange the residual short pair by sequential scans. Earlier exchanges preserve every coordinate's number of possible values, so the divisibility condition holds for every call. Each exchange fixes all other coordinates, and exchanging every pair implements $L$.

Measured in units of the cost of one sequential scan of the whole array, the calls have total cost
\[
 O\!\left(1+\sum_{j=0}^k a_jm^{j\tau}\right)
   =O(m^{k\tau})=O(x^\tau).
\]
Since $m^{k\tau}\le m^\tau x^\tau$ and $\tau>0$, this also absorbs the $O(\rho)=O(1+\log x)$ scans for moving the high digits. Moreover, $V>Mw/2$ and $M\ge2B$ give $V/(Bw)>1$, so each call's constant number of additional transfers is included in this scan cost.

Every completed low-field piece exchange fixes the segment prefix and preserves the bit values. Thus, after all exchanges, the $R'-R$ trailing zero segments for each $p$ remain in place. Remove them in a sequential scan, then apply step $F$ to move $H_+$ to its position in the exchanged layout. This completes the required interchange.

All positions and segment lengths are determined by the array dimensions and can be computed successively using a constant number of machine words. Scans, piece exchanges, and recursive calls run one at a time and reuse workspace. Together with the buffering and memory simulation of Appendix~\ref{app:io-simulation}, this keeps memory use within $M$ words. Padding increases the volume by only a constant factor, and the recursion depth is $O(\log(2+e))$, so disk space remains polynomial in $N$.
\end{proof}

\section{The Execution-Control Convention}
\label{app:formalization}

This section explains how the program's execution position and working storage are accounted for, and how recursive control state is stored.

\textbf{Accounting for the execution position.} Aggarwal and Vitter~\cite[Sections 2 and 5]{AV1988} define $M$ as the number of data records that fit in memory. For example, their FFT algorithm loads $M$ records into memory at once, without reserving a record location for the program's execution position. Vitter's survey~\cite[Sections 2.1 and 5.2]{Vitter2001} and the sorting paper of Arge and Thorup~\cite[Sections 1 and 2]{ArgeThorup2015} likewise describe memory usage in terms of the data and buffers held in memory. We specify below how our model accounts for the program's execution position.

\textbf{Storing recursive control state.} The recursion in Section~\ref{sec:coding-recursion} follows the depth-first execution in Appendix~\ref{app:recursive-suspension}. Memory holds only a constant number of control variables for the current call: the array dimensions, the current stage, the child-call index, and the locations of the data streams. Before entering a child call, the parent variables are written to disk; when the child returns, they are read back and execution resumes. Each save and restore costs a constant number of block transfers, and the amount of control information stored on disk is proportional to the recursion depth. These costs are already included in Appendix~\ref{app:recursive-suspension}, while Appendix~\ref{app:block-simulation} reserves memory for the current control variables and block buffers. Thus deeper recursion does not increase memory use.

\textbf{Execution with small memory.} When $M$ is below a fixed threshold, sorting uses the direct algorithm of Lemma~\ref{lem:direct-radix}. If also $w<\lceil\log_2N\rceil$, the construction stores its predetermined execution-step index outside the $M$ words. The index records the position in an operation sequence fixed before the input values are known. Apart from this index and program code, data buffers, counters, pointers, and other mutable workspace count toward the memory and disk space as specified in Section~\ref{sec:results}. The threshold depends only on the construction and the constants in $w=\Theta(\log N)$. Under this convention, the block-transfer and space bounds of Theorem~\ref{thm:sorting} hold.

\section*{AI methodology}
\phantomsection
\addcontentsline{toc}{section}{AI methodology}
\label{sec:ai-methodology}

We used OpenAI's ChatGPT, including Codex, to assist with exploring network-coding constructions and sorting algorithms, developing and examining mathematical arguments, implementing and debugging Lean proofs, and preparing the manuscript and literature discussion. The tools materially affected the constructions and algorithmic analyses in Sections~\ref{sec:circuit-framework}--\ref{sec:global-sorting}, the supporting arguments and formalization in Appendices~\ref{app:io-simulation}--\ref{app:formalization}, and their exposition in Sections~\ref{sec:introduction}--\ref{sec:overview}. We also used Google's Gemini to help polish the manuscript's prose. The authors retain responsibility for the correctness, originality, and integrity of all content, including proofs, code, and references.

\FloatBarrier
\begingroup
% Keep references at the same 11 pt base size as the submission text.
\bibliographystyle{alpha}
\bibliography{references}
\endgroup
\Needspace{12\baselineskip}
\end{document}